\definecolor{Myblue}{rgb}{0,0,0.6}  
\theoremstyle{definition}
\newtheorem{defn}{Definition}
\newtheorem{thm}[defn]{Theorem}
\newtheorem{prp}[defn]{Proposition}
\newtheorem{lem}[defn]{Lemma}
\newtheorem{rem}[defn]{Remark}
\newtheorem{example}[defn]{Example}
\numberwithin{equation}{section}
\numberwithin{defn}{section}
\numberwithin{figure}{section}
\newcommand{\pic}[2][0.75]{
	\begin{tikzpicture}[scale=0.5,baseline={([yshift=-.5ex]current bounding box.center)}]
	\node at (0,0) {\includegraphics[scale=#1]{figures/#2}};
	\end{tikzpicture}
}
\begin{document}
\def\it{\textit}
\def\mcA{\mathcal{A}}
\def\mcB{\mathcal{B}}
\def\mcC{\mathcal{C}}
\def\mcD{\mathcal{D}}
\def\mcE{\mathcal{E}}
\def\mcF{\mathcal{F}}
\def\mcI{\mathcal{I}}
\def\mcL{\mathcal{L}}
\def\mcM{\mathcal{M}}
\def\mcN{\mathcal{N}}
\def\mcS{\mathcal{S}}
\def\mcV{\mathcal{V}}
\def\mcW{\mathcal{W}}
\def\mcZ{\mathcal{Z}}
\def\mcACA{{_A \mathcal{C}_A}}
\def\mcACB{{_A \mathcal{C}_B}}
\def\mcD{\mathcal{D}}
\def\mbs{\mathbf{s}}
\def\mbsp{\mathbf{s}\mathbf{'}}
\def\mbS{\mathbf{S}}
\def\mcs{\mathcal{s}}
\def\mbcs{\mathbcal{s}}
\def\mbcsp{\mathbcal{s}\mathbf{'}}
\def\mbcspp{\mathbcal{s}\mathbf{''}}
\def\euT{\mathscr{T}}
\def\opk{\Bbbk}
\def\opA{\mathbb{A}}
\def\opC{\mathbb{C}}
\def\opD{\mathbb{D}}
\def\opR{\mathbb{R}}
\def\opZ{\mathbb{Z}}
\def\opid{\mathbbm{1}}
\def\a{\alpha}
\def\b{\beta}
\def\g{\gamma}
\def\d{\delta}
\def\D{\Delta}
\def\vareps{\varepsilon}
\def\l{\lambda}
\def\abar{\overline{\a}}
\def\bbar{\overline{\b}}
\def\gbar{\overline{\g}}
\def\ddbar{\overline{\d}}
\def\mubar{\overline{\mu}}
\def\pibar{{\bar{\pi}}}

\def\opp{{\operatorname{op}}}
\def\id{\operatorname{id}}
\def\im{\operatorname{im}}
\def\Hom{\operatorname{Hom}}
\def\End{\operatorname{End}}
\def\tr{\operatorname{tr}}
\def\ev{\operatorname{ev}}
\def\coev{\operatorname{coev}}
\def\evt{\widetilde{\operatorname{ev}}}
\def\coevt{\widetilde{\operatorname{coev}}}
\def\Id{\operatorname{Id}}
\def\Vect{\operatorname{Vect}}
\def\loc{{\operatorname{loc}}}
\def\orb{{\operatorname{orb}}}
\def\coker{{\operatorname{coker}}}
\def\Ind{{\operatorname{Ind}}}
\def\Irr{{\operatorname{Irr}}}
\def\Dim{\operatorname{Dim}}
\def\pd{\partial}
\def\Bord{{\operatorname{Bord}}}
\def\sk{{\operatorname{sk}}}
\def\rd{{\operatorname{rd}}}
\def\rib{{\operatorname{rib}}}
\def\Int{\operatorname{Int}}
\def\gen{{\operatorname{gen}}}
\def\Str{\operatorname{Str}}
\def\Bordrib{\operatorname{Bord}^{\operatorname{rib}}}
\def\Bordstrat{\operatorname{Bord}^{\operatorname{strat}}}
\def\Borddef{\operatorname{Bord}^{\operatorname{def}}}
\def\Bordadm{\operatorname{Bord}^{\operatorname{adm}}}
\def\Sigmain{\Sigma_{\operatorname{in}}}
\def\Sigmaout{\Sigma_{\operatorname{out}}}
\def\rev{{\operatorname{rev}}}
\def\dol{{\operatorname{dol}}}
\def\ab{{\operatorname{ab}}}
\def\Quad{{\operatorname{Quad}}}
\def\FPdim{\operatorname{FPdim}}
\def\Gr{\operatorname{Gr}}
\def\Alg{\operatorname{Alg}}
\def\FrobAlg{\operatorname{FrobAlg}}

\def\Lra{\Leftrightarrow}
\def\Ra{\Rightarrow}
\def\ra{\rightarrow}
\def\la{\leftarrow}
\def\lra{\leftrightarrow}
\def\xra{\xrightarrow}

\newcommand{\ZRT}[1]{
	Z^{{\operatorname{RT}}}_{#1}
}
\newcommand{\Zdef}[1]{
	Z^{{\operatorname{def}}}_{#1}
}
\newcommand{\Zorb}[2]{
	Z^{{\operatorname{orb}#2}}_{#1}
}
\newcommand{\Bordribhat}[1]{
	\widehat{\operatorname{Bord}^{{\operatorname{rib}}}_{#1}}
}
\newcommand{\Borddefhat}[1]{
	\widehat{\operatorname{Bord}^{{\operatorname{def}}}_{#1}}
}

\title{\begin{flushright}
		\vspace{-1.8cm} \normalfont{\small{\textsf{ZMP-HH/21-10}}}\\
		\vspace{-0.5cm} \normalfont{\small{\textsf{Hamburger Beiträge zur Mathematik Nr.\! 898}}}\\
		\vspace{-0.5cm} \normalfont{\small{\textsf{May 2021}}}
	\end{flushright}
	\vspace{0.5cm}
	Domain walls between 3d phases of Reshetikhin-Turaev TQFTs}

\author{
	Vincent Koppen$^*$ \quad
	Vincentas Mulevi\v{c}ius$^\dagger$\\
	Ingo Runkel$^\dagger$ \quad
	Christoph Schweigert$^\dagger$\\[0.5cm]
	\normalsize{\texttt{\href{mailto:vincent.koppen@posteo.de}{vincent.koppen@posteo.de}}} \quad
	\normalsize{\texttt{\href{mailto:vincentas.mulevicius@uni-hamburg.de}{vincentas.mulevicius@uni-hamburg.de}}} \\
	\normalsize{\texttt{\href{mailto:ingo.runkel@uni-hamburg.de}{ingo.runkel@uni-hamburg.de}}} \quad
	\normalsize{\texttt{\href{mailto:christoph.schweigert@uni-hamburg.de}{christoph.schweigert@uni-hamburg.de}}}\\[0.1cm]
	{\normalsize\slshape $^\dagger$Fachbereich Mathematik, Universit\"{a}t Hamburg, Germany}\\[-0.1cm]
	{\normalsize\slshape $^*$School of Mathematics, University of Leeds, United Kingdom}\\[-0.1cm]
}

\date{}
\maketitle

\begin{abstract}
We study surface defects in three-dimensional topological quantum field theories which separate different theories of Reshetikhin-Turaev type.
Based on the new notion of a Frobenius algebra over two commutative Frobenius algebras, we present an explicit and computable construction of such defects.
It specialises to the construction in \cite{CRS2} if all $3$-strata are labelled by the same topological field theory. 
We compare the results to the model-independent analysis in \cite{FSV} and find agreement.
\end{abstract}
\newpage

\setcounter{tocdepth}{2}
\tableofcontents

\section{Introduction}

Defects, in particular topological defects, provide important structural insights into
quantum field theories, most notably into their symmetries and dualities, see e.g.\ \cite{FFRS2,Gaiotto:2014kfa}.
Defects in topological quantum field theories (TQFTs) are particularly amenable to a mathematically precise treatment.
In the present paper, we study such surface defects in three-dimensional TQFTs obtained by the Reshetikhin-Turaev construction, a mathematically rigorous procedure which to any modular fusion category associates a TQFT.
Surface defects in Reshetikhin-Turaev theories have numerous applications, ranging from the construction of quantum codes to the explanation of structures in representation theory, see e.g.\ \cite{Barkeshli:2012pr,Fuchs:2019our}.
It is very natural in this context not to restrict oneself to a single modular fusion category, but rather to consider defects separating different modular fusion categories.
The aim of the present paper is to provide additional tools for this situation.

\medskip

There are various possible points of view on defects.
A particularly simple class of defects are surface defects with the same TQFT on both sides of the defect. Among such defects, there is the transparent defect.
We make the additional simplifying assumption that a given surface defect has the property that on a small disc in the defect surface, it can be replaced by the transparent defect, i.e. one can ``punch holes''.
Repeating this process, one can replace the surface defect by the remaining network of ribbons (see Figure~\ref{fig:ribbonisation}).
Label such a ribbon by a symmetric $\D$-separable Frobenius algebra.
Then the computation of an invariant with a surface defect can be reduced to the computation of an invariant of a manifold containing a ribbon graph labelled by a Frobenius algebra.
This idea was proposed in \cite{KS} to relate the triangulations appearing in the TQFT description of correlators of rational 2d conformal field theory \cite{FRS1} to surface defects in three-dimensional TQFTs.

Mathematically, defects can be described by a \textit{defect TQFT}, which is an axiomatisation analogous to that of Atiyah-Segal, but applied to stratified manifolds \cite{CRS1}.
The above mentioned construction of surface defects in TQFTs of Reshetikhin-Turaev type, as well as its generalisation to include line and point defects, was turned into a defect TQFT in \cite{CRS2}.
\begin{figure}
\captionsetup{format=plain, indention=0.5cm}
\centering
\pic[1.5]{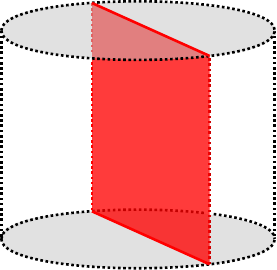} \hspace{-0.25cm}$\rightsquigarrow$\hspace{-0.25cm}
\pic[1.5]{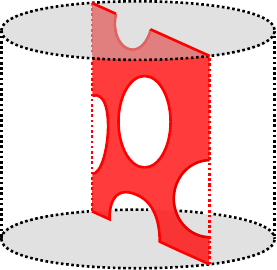} \hspace{-0.25cm}$\rightsquigarrow$\hspace{-0.25cm}
\pic[1.5]{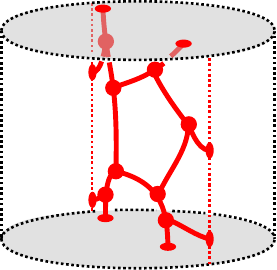}
\caption{
Converting a surface defect into a ribbon graph by ``punching holes''.
The resulting graph can be labelled in a natural way by a symmetric $\D$-separable Frobenius algebra (see \cite[Sec.\,6.2]{FSV}).
}
\label{fig:ribbonisation}
\end{figure}

\medskip

A different point of view allows one to also address surface defects that separate {\em non-identical} topological field theories, so-called \textit{domain walls}.
In this situation, no transparent surface defect exists, and hence
no holes can be punched.
A model-independent analysis of such defects, under the hypothesis that a defect TQFT exists, was carried out in \cite{FSV}.
It leads to the insight that the existence of surface defects is, in general, obstructed, with an obstruction in the Witt group of modular fusion categories.
The analysis of \cite{FSV} also yields a very natural candidate for the bicategory of surface defects.

\medskip

In Section~\ref{sec:2} we give an explicit construction of a defect TQFT including domain walls between 3d regions labelled by different modular categories. 
All modular fusion categories involved have to be in the same Witt class.
To describe two modular fusion categories in the same Witt class, we fix one such category $\mcC$ and choose two symmetric $\D$-separable commutative haploid Frobenius algebras $A$, $B$.
It is shown in \cite{DMNO} that the categories of local modules $\mcC^\loc_A$ and $\mcC^\loc_B$ are modular fusion categories, both in the same Witt class as $\mcC$.
Using again the idea of ``punching holes'', but now in the bulk, we obtain a network of line and surface defects -- a foam -- in the TQFT for the modular category $\mcC$.
These bulk foams are labelled by the algebras $A$ and $B$, respectively.
\begin{figure}
\captionsetup{format=plain, indention=0.5cm}
\centering
\pic[1.5]{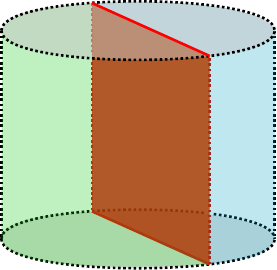} \hspace{-0.25cm}$\rightsquigarrow$\hspace{-0.25cm}
\pic[1.5]{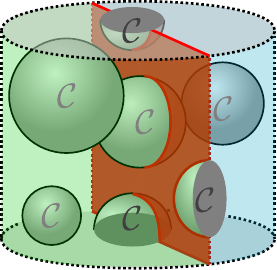} \hspace{-0.25cm}$\rightsquigarrow$\hspace{-0.25cm}
\pic[1.5]{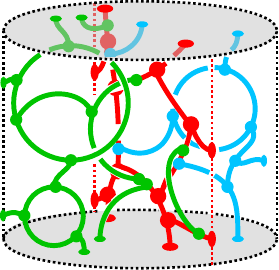}
\caption{
The procedure of ``punching holes'' in the bulk theories and domain walls yields a foam of defects.
Upon evaluating it with the TQFT obtained from a modular fusion category $\mcC$, the foam is further reduced to a ribbon graph.
}
\label{fig:foamification}
\end{figure}

The domain wall that separates these bulk theories can now also be resolved into a network, since the modular fusion category $\mcC$ provides a common ground (see Figure~\ref{fig:foamification}).
To label this network, we introduce a new algebraic notion in Definition~\ref{def:Frob_algs_over_AB}: given two symmetric $\D$-separable commutative haploid
Frobenius algebras in a modular fusion category $\mcC$, a (\it{symmetric, $\D$-separable})
\it{Frobenius algebra over $(A,B)$} is a (symmetric, $\D$-separable) Frobenius algebra
$F\in\mcC$, which is also an $A$-$B$-bimodule, such that product and braiding are
compatible.
(Left and right action are treated with the braiding and its inverse in this
definition.)
Such a Frobenius algebra over $(A,B)$ turns out to be the correct label for a defect separating the bulk theories labelled by the modular fusion categories $\mcC^\loc_A$ and $\mcC^\loc_B$.
Furthermore, in Definition~\ref{def:multimodule} we introduce the notion of a multimodule over suitable $n$-tuples of Frobenius algebras over commutative algebras.
They are the labels for generalised Wilson lines at which several domain walls meet.

One useful feature of our construction is that one can consider configurations with three or more bulk theories.
In fact any configuration of bulk theories can be evaluated if one chooses the common category $\mcC$ large enough.
The need to adapt the ambient category $\mcC$ depending on the configuration of bulk theories can be eliminated by considering generalised orbifold data \cite{CRS3} (of which the commutative algebras are a special case).
However, the resulting construction would not include situations that cannot be described by the present construction as well.

In the subsequent Sections~\ref{sec:3} and \ref{sec:4} we compare the outcome of this construction to the model independent analysis in \cite{FSV}.
In particular, we write down an explicit Witt trivialisation $\mcC^\loc_A \boxtimes \widetilde{\mcC^\loc_B}\simeq\mcZ(\mcACB)$.
It implies that module categories of $\mcACB$ are classified by algebras over $(A,B)$, and it shows that from the TQFT constructed in Section~\ref{sec:2} one obtains the bicategory of surface defects found in \cite{FSV}.
This is the content of Theorem~\ref{thm:bicat-equiv}.

\medskip

\subsubsection*{Acknowledgements}

We thank Nils Carqueville and Gregor Schaumann for helpful comments on a draft of this paper.
All authors have been partially supported by the RTG 1670 ``Mathematics inspired by String theory and Quantum Field Theory''.
VM, IR, CS are partially supported by the Deutsche Forschungsgemeinschaft (DFG, German Research Foundation) under Germany's Excellence Strategy - EXC 2121 ``Quantum Universe'' - 390833306.

\section{Domain walls via algebras}
\label{sec:2}

\subsection{Multifusion categories}
\label{subsec:fusion_cats}
Throughout this paper we will use the language of multifusion categories extensively, see e.g.\ \cite[Sec.\,4]{EGNO}.
A \textit{multifusion category} $\mcA$ over the field $\opC$ of complex numbers 
(or any algebraically closed field of characteristic zero)
is 
\begin{itemize}
\item finitely semisimple:
the sets of morphisms $\mcA(X,Y) = \Hom_\mcA(X,Y)$
are finite dimensional complex vector spaces for all $X,Y\in\mcC$,
any finite collection of objects has a direct sum,
there are finitely many (isomorphism classes of) simple objects,
every object is isomorphic to a direct sum of finitely many simple objects (the empty direct sum is the zero object $0\in\mcA$);
\item monoidal:
$\mcA$ is equipped with a tensor product $\otimes:\mcA\times\mcA\ra\mcA$ which has a unit $\opid=\opid_\mcA \in \mcA$
and the associator and unitor natural isomorphisms satisfying the usual pentagon and triangle identities; the tensor product of morphisms is assumed to be bilinear;
\item rigid:
each object $X\in\mcA$ has a distinguished left and right dual, i.e.\ objects $X^*, {}^*\!X \in \mcA$ together with evaluation morphisms $\ev_X\colon X^* \otimes X \ra \opid$, $\evt_X\colon X \otimes {}^*\!X \ra \opid$ and coevaluation morphisms $\coev_X\colon \opid \ra X \otimes X^*$ and $\coevt_X\colon {}^*\!X \otimes X \ra \opid$.
\end{itemize}
The tensor unit $\opid_\mcA$ of a multifusion category need not be a simple object. If $\opid_\mcA$ is simple, i.e.\ if $\dim\End_\mcA(\opid_\mcA) = 1$, $\mcA$ is called a \textit{fusion category}.

\medskip

Recall that for two finitely semisimple categories $\mcA$, $\mcB$ their Deligne product $\mcA\boxtimes\mcB$ is the semisimple category consisting of formal direct sums of objects of the form $X \boxtimes Y$, $X\in\mcA$, $Y\in\mcB$, such that direct sums in $\mcA$ and $\mcB$ distribute with respect to the symbol $\boxtimes$.
Morphism spaces in $\mcA\boxtimes\mcB$ are tensor products of vector spaces of morphisms in $\mcA$ and $\mcB$.
If $\mcA$ and $\mcB$ are multifusion so is $\mcA\boxtimes\mcB$ with the tensor unit $\opid_\mcA \boxtimes \opid_\mcB$, see \cite[Cor.\,4.6.2]{EGNO}.
The direct sum $\mcA\oplus\mcB$ is the finitely semisimple category of pairs $(X,Y)$ with the direct sum defined additively on each entry.
It is customary to identify e.g.\ $\mcA$ with the collection of objects $(X,0)\in\mcA\oplus\mcB$, $X\in\mcA$.
For $\mcA$, $\mcB$ multifusion, $\mcA\oplus\mcB$ is again multifusion with the tensor unit $\opid_\mcA \oplus \opid_\mcB$.
A multifusion category which cannot be written as a direct sum of two multifusion categories in a non-trivial way is called \textit{indecomposable}
\cite[Sec.\,2.4]{ENO}.
We note that a multifusion category can have a non-simple tensor unit and still be indecomposable. The finite-dimensional bimodules of the semisimple $\opC$-algebra $\opC \oplus \opC$ are an example of this.

\medskip

The simple summands of the tensor unit of a multifusion category can be used to decompose it (see \cite[Rem.\,4.3.4]{EGNO}):
\begin{prp}
\label{prp:multifusion_cats}
Let $\mcA$ be a multifusion category.
\begin{enumerate}[i)]
\item \label{prp:multifusion_cats:i}
Let $\opid_\mcA = \bigoplus_i \opid_i$ be a decomposition of the tensor unit into simple objects.
Then $\opid_i$ are mutually non-isomorphic with $\opid_i \otimes \opid_j \cong \d_{ij}\opid_i$.
\item \label{prp:multifusion_cats:ii}
There is a decomposition of linear categories $\mcA = \bigoplus_{ij} \mcA_{ij}$, where $\mcA_{ij} := \opid_i \otimes \mcA \otimes \opid_j$.
Each $\mcA_{ii}$ is a fusion category.
\end{enumerate}
\end{prp}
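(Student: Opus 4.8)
The plan is to treat the two parts in order, extracting from the decomposition of the tensor unit a family of orthogonal idempotents that act on the whole category from both sides. For part~\ref{prp:multifusion_cats:i} I would first exploit the unitors: tensoring $\opid \cong \bigoplus_j \opid_j$ with a fixed $\opid_i$ and using $\opid_i \otimes \opid \cong \opid_i \cong \opid \otimes \opid_i$ gives
\[
 \opid_i \;\cong\; \bigoplus_j \opid_i \otimes \opid_j, \qquad \opid_j \;\cong\; \bigoplus_i \opid_i \otimes \opid_j .
\]
Since each $\opid_i$ is simple, in the ``multiplication table'' $(\opid_i \otimes \opid_j)_{ij}$ exactly one entry in each row and each column is nonzero, the nonzero entry in row $i$ being $\cong \opid_i$ and the one in column $j$ being $\cong \opid_j$. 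This defines a permutation $\pi$ with $\opid_i \otimes \opid_{\pi(i)} \neq 0$, forcing $\opid_i \cong \opid_{\pi(i)}$. The crux is then to show $\pi = \id$: comparing the two bracketings of $\opid_i \otimes \opid_{\pi(i)} \otimes \opid_{\pi(\pi(i))}$ via associativity, the right bracketing is $\cong \opid_i \neq 0$ while the left is $\cong \opid_i \otimes \opid_{\pi(\pi(i))}$, which is nonzero only if $\pi(\pi(i)) = \pi(i)$; injectivity of $\pi$ then gives $\pi = \id$. Hence $\opid_i \otimes \opid_j \cong \d_{ij}\opid_i$, and mutual non-isomorphy is immediate, since $\opid_i \cong \opid_j$ with $i \neq j$ would give $\opid_i \otimes \opid_j \cong \opid_i \neq 0$. (Alternatively, non-isomorphy follows from the Eckmann--Hilton argument, which makes $\End_\mcA(\opid)$ commutative, hence multiplicity-free, semisimple.)

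For the decomposition in part~\ref{prp:multifusion_cats:ii}, let $e_i \in \End_\mcA(\opid)$ be the idempotent projecting onto the summand $\opid_i$. Through the left and right unitors each $e_i$ acts on every object $X$ by idempotents $L_i, R_i \in \End_\mcA(X)$, natural in $X$. The $L_i$ form an orthogonal family summing to $\id_X$, as do the $R_j$, and $L_i$ commutes with $R_j$ because they act on different tensor legs. This yields $X \cong \bigoplus_{ij} X_{ij}$ with $X_{ij} := \im(L_i R_j) \cong \opid_i \otimes X \otimes \opid_j$, and by naturality together with orthogonality any morphism between objects lying in different components $\mcA_{ij}$ vanishes, giving the claimed decomposition $\mcA = \bigoplus_{ij}\mcA_{ij}$ of linear categories.

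It remains to check that each $\mcA_{ii}$ is fusion. Closure under $\otimes$ follows from part~\ref{prp:multifusion_cats:i}: for $X, Y \in \mcA_{ii}$ one computes $\opid_k \otimes (X \otimes Y) \otimes \opid_l \cong \d_{ki}\d_{il}\,(X\otimes Y)$ using $\opid_k \otimes \opid_i \cong \d_{ki}\opid_i$, so only the $(i,i)$-component survives. The unit is $\opid_i$, which is simple by part~\ref{prp:multifusion_cats:i}, so $\dim\End_{\mcA_{ii}}(\opid_i) = 1$; semisimplicity and finiteness are inherited since $\mcA_{ii}$ is a full subcategory of $\mcA$ closed under direct sums and subobjects. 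For rigidity I would first show $\opid_i^* \cong \opid_i$: the coevaluation $\coev_{\opid_i} \colon \opid \ra \opid_i \otimes \opid_i^*$ is nonzero (it satisfies a zigzag identity with $\ev_{\opid_i}$), so $\Hom_\mcA(\opid, \opid_i \otimes \opid_i^*) \neq 0$; writing $\opid_i^* \cong \opid_j$ and invoking $\opid_i \otimes \opid_j \cong \d_{ij}\opid_i$ forces $j = i$, and similarly for the left dual. Then $(\opid_i \otimes X \otimes \opid_i)^* \cong \opid_i \otimes X^* \otimes \opid_i$ lies in $\mcA_{ii}$, so $\mcA_{ii}$ is rigid and hence fusion.

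The main obstacle I anticipate is the coherence bookkeeping underlying the two idempotent arguments: one must verify carefully that the left and right unit actions are natural and mutually commuting, which is precisely where the triangle axiom enters, and that the associativity comparison in part~\ref{prp:multifusion_cats:i} genuinely isolates the relation $\pi(\pi(i)) = \pi(i)$ rather than something weaker. Once these compatibilities are pinned down, the remaining verifications are routine.
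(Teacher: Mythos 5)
Your proof is correct, but note that the paper itself contains no proof of this proposition: it is quoted from the literature (see \cite[Rem.\,4.3.4]{EGNO}), so there is no in-paper argument to compare against. Measured against that standard reference, your route differs in one interesting way. The usual argument first establishes that the summands $\opid_i$ are mutually non-isomorphic --- by Eckmann--Hilton, $\End_\mcA(\opid)$ is a commutative semisimple algebra, hence a product of copies of $\opC$, so the decomposition of $\opid$ is multiplicity-free --- and then reads off the multiplication table $\opid_i \otimes \opid_j \cong \d_{ij}\opid_i$ from the row/column count. You invert this order: you derive the multiplication table first, using the associativity comparison to force $\pi(\pi(i)) = \pi(i)$ (this step is valid: $\pi$ is a genuine permutation since each column also has a unique nonzero entry, and injectivity then gives $\pi = \id$), and obtain non-isomorphy as a corollary; this makes part i) self-contained without appealing to semisimplicity of $\End_\mcA(\opid)$. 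Incidentally, the coherence worry you flag at the end is lighter than you think: $L_i$ and $R_j$ are natural endomorphisms of the identity functor, and naturality of $R_j$ applied to the morphism $L_i$ already yields $L_i R_j = R_j L_i$, with no triangle-axiom bookkeeping. Two spots in your part ii) are glossed but harmless: (a) writing $\opid_i^* \cong \opid_j$ presupposes that $\opid_i^*$ is a summand of $\opid$, which follows from the canonical isomorphism $\opid^* \cong \opid$; (b) rigidity of $\mcA_{ii}$ additionally needs the evaluation $X^* \otimes X \ra \opid$ to be corestricted to $\opid_i$ and the coevaluation restricted accordingly --- this works because $X^* \otimes X \in \mcA_{ii}$ and any morphism from an object of $\mcA_{ii}$ to $\opid_j$ vanishes for $j \neq i$, so the zigzag identities survive the corestriction.
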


\medskip

A functor $F\colon\mcA\ra\mcB$ between multifusion categories $\mcA$ and $\mcB$ is assumed to be linear on morphism spaces and have a monoidal structure, i.e.\ a natural isomorphism $F(-) \otimes F(-) \Ra F(-\otimes -)$ and an isomorphism $F(\opid_\mcA) \cong \opid_\mcB$ satisfying the usual compatibility conditions.

\medskip

For a multifusion category $\mcA$ we will denote by $\Irr_\mcA$ a set of representatives of isomorphism classes of its simple objects.
If $\mcA$ is fusion, we in addition assume that $\opid_\mcA\in\Irr_\mcA$.
For $i\in\Irr_\mcA$, we will use the identification $\End_\mcA i \cong \opC$, $\id_i \mapsto 1$ when necessary.
Recall that the Grothendieck ring $\Gr(\mcA)$ of $\mcA$ is the commutative ring with $\opZ$-basis given by the set $\Irr_\mcA$ 
together with the product
\begin{equation}
i \cdot j = \sum_{k\in\Irr_\mcA} N_{ij}^k ~k, \quad
i,j\in\Irr_\mcA ~,~
N_{ij}^k := \dim_\opC \mcA(i\otimes j, k) ~.
\end{equation}
If $\mcA$ is fusion, the Frobenius-Perron dimension is defined as the unique ring homomorphism $\FPdim\colon\Gr(\mcA)\ra\opR$, such that $\FPdim(i) > 0$ for all $i\in\Irr_\mcA$.
The Frobenius-Perron dimension of $\mcA$ is defined as $\FPdim(\mcA) := \sum_{i\in\Irr_\mcA} (\FPdim(i))^2$.
The following result then provides a convenient way to prove equivalence of two fusion categories, see \cite[Prop.\,6.3.3]{EGNO}.

\begin{prp}
\label{prp:equiv_ito_FPdims}
A functor $F\colon\mcA\ra\mcB$ between fusion categories $\mcA$ and $\mcB$ is an equivalence iff it is fully faithful (i.e.\ isomorphism on morphism spaces) and $\FPdim(\mcA) = \FPdim(\mcB)$.
\end{prp}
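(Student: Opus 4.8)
The plan is to split the proof into the two implications, treating the forward direction as a quick consequence of the uniqueness of $\FPdim$ and reserving the real work for the converse.

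For the forward direction I would note that an equivalence is by definition fully faithful, so only the equality of Frobenius-Perron dimensions needs justification. Because $F$ is monoidal, it induces a unital ring isomorphism on Grothendieck rings carrying $\Irr_\mcA$ bijectively onto $\Irr_\mcB$. The composite of $\FPdim_\mcB$ with this isomorphism is a ring homomorphism $\Gr(\mcA)\to\opR$ that is strictly positive on every simple object, so by the uniqueness clause defining $\FPdim$ it must equal $\FPdim_\mcA$; hence $\FPdim(Fi)=\FPdim(i)$ for all $i\in\Irr_\mcA$, and summing squares over the bijection gives $\FPdim(\mcA)=\FPdim(\mcB)$.

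For the converse, assume $F$ fully faithful with $\FPdim(\mcA)=\FPdim(\mcB)$. The first step is to exploit full faithfulness together with semisimplicity: the isomorphisms $\End_\mcA(i)\cong\End_\mcB(Fi)$ and $\Hom_\mcA(i,j)\cong\Hom_\mcB(Fi,Fj)$ show that each $Fi$ is simple and that $Fi\not\cong Fj$ for $i\neq j$, so $i\mapsto Fi$ is an injection $\Irr_\mcA\hookrightarrow\Irr_\mcB$. Since $F$ is monoidal this injection underlies a unital ring homomorphism $[F]\colon\Gr(\mcA)\to\Gr(\mcB)$ taking basis elements to basis elements, and the same uniqueness argument as before yields $\FPdim(Fi)=\FPdim(i)$ for all $i$.

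The decisive step is then a counting argument:
\[
\FPdim(\mcA)=\sum_{i\in\Irr_\mcA}\FPdim(i)^2=\sum_{i\in\Irr_\mcA}\FPdim(Fi)^2\leq\sum_{j\in\Irr_\mcB}\FPdim(j)^2=\FPdim(\mcB),
\]
the inequality being strict unless the injection $\Irr_\mcA\hookrightarrow\Irr_\mcB$ is already a bijection, since each $\FPdim(j)^2$ is positive. The hypothesis forces equality, hence surjectivity onto $\Irr_\mcB$, so $F$ is essentially surjective on simples and therefore on all objects, as every object is a finite direct sum of simples and $F$ preserves direct sums. Being fully faithful and essentially surjective, $F$ is an equivalence. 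The main point to be careful about is the repeated appeal to uniqueness of $\FPdim$: one must check that $[F]$ is genuinely a unital ring homomorphism---using that $F$ is monoidal and that $F\opid_\mcA\cong\opid_\mcB$ is simple---and that its composite with $\FPdim_\mcB$ remains strictly positive on the basis, so that uniqueness may be invoked; the rest is routine semisimple bookkeeping.
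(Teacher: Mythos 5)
Your proof is correct. The paper does not prove this proposition itself but cites \cite[Prop.\,6.3.3]{EGNO}, and your argument is essentially that standard one: full faithfulness sends simples to simples injectively, uniqueness of the Frobenius--Perron character forces $\FPdim(Fi)=\FPdim(i)$, and the equality $\FPdim(\mcA)=\FPdim(\mcB)$ upgrades the resulting inequality $\sum_{i\in\Irr_\mcA}\FPdim(Fi)^2\le\sum_{j\in\Irr_\mcB}\FPdim(j)^2$ to a bijection on simples, hence essential surjectivity.
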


\medskip

\begin{figure}
\captionsetup{format=plain, indention=0.5cm}
\centering
\begin{subfigure}[b]{0.98\textwidth}
    \centering
    \pic[1.25]{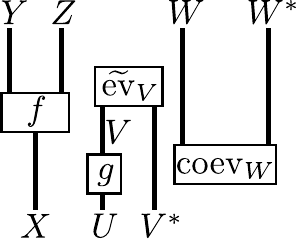} $=$
    \pic[1.25]{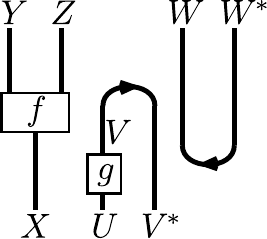} $=$
    \pic[1.25]{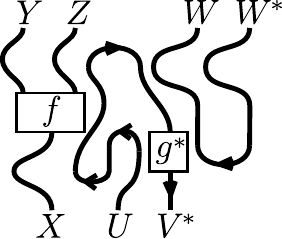}
    \caption{}
    \label{fig:graphical_calculus_pivotal}
\end{subfigure}
\begin{subfigure}[b]{0.98\textwidth}
    \centering
    \pic[1.25]{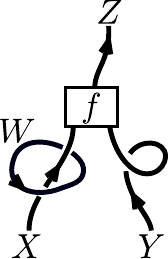} =
    \pic[1.25]{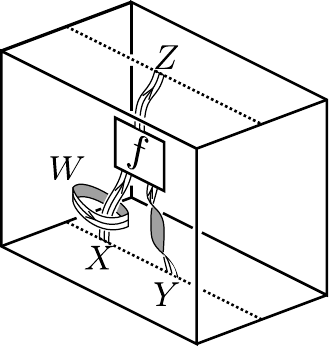}
    \caption{}
    \label{fig:graphical_calculus_ribbon}
\end{subfigure}\\
\caption{
(a) Graphical calculus for a pivotal category $\mcA$.
A morphism is depicted by a string diagram, consisting of \textit{strands} labelled by objects and \textit{coupons} labelled by morphisms.
A diagram is read from bottom to top, in this example the diagram depicts a morphism
$(\id_{Y\otimes Z} \otimes \evt_V \otimes \id_{W\otimes W^*}) \circ (f\otimes g \otimes \id_{V^*} \otimes \coev_W)$
where $X,Y,Z,U,V,W\in\mcA$, $f\colon X\ra Y\otimes Z$, etc.
It is customary (although not required) to omit strands labelled with the tensor unit and coupons labelled with associators, unitors and identity objects, and to replace (co)evaluation morphisms with bent lines.
In the latter case one also adds directions to strands; a downwards direction corresponds to the dual of an object.
The axioms of a pivotal category imply that string diagrams up to a plane isotopy  with fixed  ends of incoming and outgoing strands yield equal morphisms.\\
\vspace{-0.25cm}\\
(b) 
Graphical calculus for a ribbon category.
In this case the strands can be seen as directed and \textit{framed} lines with framing given by the paper plane.
The diagrams can be deformed as if embedded in $(0,1) \times (0,1) \times [0,1]$ with incoming strands starting at fixed points on the line $\{1/2\}\times(0,1)\times\{0\}$ and the outgoing ones ending at fixed points on $\{ 1/2 \}\times(0,1)\times\{1 \}$.
The framing is encoded by replacing the strands by ribbons.
}
\label{fig:graphical_calculus}
\end{figure}
We say that a multifusion category $\mcA$ is \textit{pivotal} if it is endowed with a \textit{pivotal structure}, i.e.\ a monoidal natural isomorphism $\d\colon\Id_\mcA \Ra (-)^{**}$.
Since in any case $X \cong ({}^*\!X)^*$ for all objects $X$ (see \cite[Rem.\,2.10.3]{EGNO}), this implies $X^* \cong ({}^*\!X)^{**} \cong {}^*\!X$. It is therefore enough to only consider one of the two duals in pivotal categories, and we will use $X^*$.
An equivalent way to define a pivotal structure is to fix for each object $X$ a (left and right) dual $X^*$ such that the identities
\begin{equation}
\label{eq:pivotal_conds}
\pic[1.25]{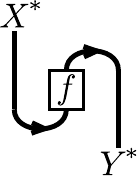} = \pic[1.25]{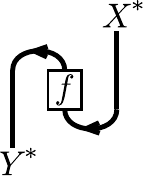} (=: f^*)\, ,
\pic[1.25]{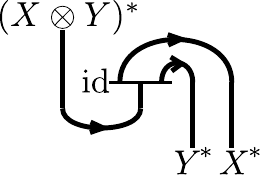} = \pic[1.25]{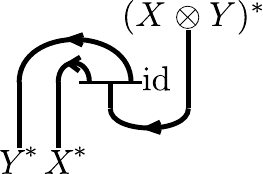}
\end{equation}
hold (see e.g.\ \cite[Lem.\,2.12]{CR}).
Here we use graphical calculus to depict morphisms by string diagrams, our conventions are summarised in the example in Figure~\ref{fig:graphical_calculus_pivotal}.

\medskip

To any object $X\in\mcA$ and a morphism $f\in\End_\mcA X$ we define the left and the right traces to be
\begin{equation}
\label{eq:left_right_traces}
\tr_l f = \pic[1.25]{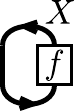} ~, \quad
\tr_r f = \pic[1.25]{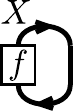} ~.
\end{equation}
One also defines the left and right dimensions of an object $X\in\mcA$ as $\dim_l(X) := \tr_l \id_X$ and $\dim_r(X) := \tr_r \id_X$. 
A pivotal category $\mcA$ is called spherical \cite{BW} if $\tr_l f = \tr_r f (=: \tr f)$ for all $f\in\End_\mcA X$.
In this case we define the dimension of an object $X\in\mcA$ as $\dim X := \tr \id_X$.
The sphericality condition can be simplified as follows, see e.g.\ \cite[Lem.\,4.4]{TuVi}.\footnote{
In this reference the fusion case is considered, but the argument for the multifusion case is the same.
}
\begin{prp}
\label{prp:spherical_ito_simples}
A pivotal multifusion category $\mcA$ is spherical iff $\dim_l i = \dim_r i$ for all $i\in\Irr_\mcA$.
\end{prp}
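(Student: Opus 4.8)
The plan is to prove the two implications separately, the forward direction being immediate and the converse carrying all of the content. If $\mcA$ is spherical then, taking $f = \id_i$ for a simple object $i$, the defining equality $\tr_l \id_i = \tr_r \id_i$ is precisely $\dim_l i = \dim_r i$, so the ``only if'' direction needs no work.

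For the ``if'' direction, suppose $\dim_l i = \dim_r i$ for all $i \in \Irr_\mcA$; since the left and right dimensions are isomorphism invariants, the same equality then holds for every simple object. I would fix an arbitrary object $X \in \mcA$ and $f \in \End_\mcA X$ and show $\tr_l f = \tr_r f$. By finite semisimplicity, choose a decomposition of $X$ into simple summands with inclusions $\iota_k \colon i_k \to X$ and projections $\pi_k \colon X \to i_k$ satisfying $\pi_k \iota_l = \d_{kl}\id_{i_k}$ and $\sum_k \iota_k \pi_k = \id_X$. Writing $f = \sum_{k,l} \iota_k (\pi_k f \iota_l) \pi_l$ and using linearity of the traces reduces the problem to the individual summands.

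The key tool is the cyclicity of the pivotal traces, namely $\tr_l(g\circ h) = \tr_l(h \circ g)$ and likewise for $\tr_r$, valid for $h\colon X \to Y$ and $g \colon Y \to X$; this is a standard consequence of the pivotal axioms in the graphical calculus of Figure~\ref{fig:graphical_calculus_pivotal}. Applying it with $g = \iota_k(\pi_k f \iota_l)$ and $h = \pi_l$ to each summand gives $\tr_l\big(\iota_k(\pi_k f \iota_l)\pi_l\big) = \tr_l\big(\pi_l\iota_k\,(\pi_k f \iota_l)\big)$, and since $\pi_l\iota_k = \d_{lk}\id_{i_k}$ only the diagonal terms $k=l$ survive. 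For such a term $\pi_k f \iota_k \in \End_\mcA i_k \cong \opC$ is a scalar multiple $\l_k \id_{i_k}$ because $i_k$ is simple, whence $\tr_l f = \sum_k \l_k \dim_l i_k$; the identical computation gives $\tr_r f = \sum_k \l_k \dim_r i_k$. The hypothesis $\dim_l i_k = \dim_r i_k$ then yields $\tr_l f = \tr_r f$, as desired.

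I expect the only delicate point to be establishing (or citing) the cyclicity of the traces together with their additivity over the chosen direct-sum decomposition; once these standard pivotal properties are in place the argument is purely formal. One should also confirm that the scalar $\l_k$ extracted from $\pi_k f \iota_k$ is the same in both trace computations, which is automatic since it is defined independently of left or right duality.
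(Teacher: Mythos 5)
Your proof is correct, and it is worth noting how it relates to the paper: the paper does not actually prove this proposition, but cites \cite[Lem.\,4.4]{TuVi} for the fusion case together with a footnote asserting that the multifusion argument is the same. Your write-up --- decompose $\id_X = \sum_k \iota_k\pi_k$ into simple summands, use linearity and cyclicity of both pivotal traces to kill the off-diagonal terms, and apply Schur's lemma to get $\tr_l f = \sum_k \lambda_k \dim_l i_k$ and $\tr_r f = \sum_k \lambda_k \dim_r i_k$ --- is precisely that standard argument, so it supplies what the citation leaves implicit. One point deserves to be made explicit, since it is the only place where ``multifusion'' could cause concern: the traces and dimensions defined in \eqref{eq:left_right_traces} are elements of $\End_\mcA(\opid_\mcA)$, which for a genuinely multifusion category is \emph{not} one-dimensional. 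Your argument never uses $\End_\mcA(\opid_\mcA)\cong\opC$: both the hypothesis $\dim_l i = \dim_r i$ and the conclusion $\tr_l f = \tr_r f$ are read as equalities of endomorphisms of $\opid_\mcA$, and every step (linearity, cyclicity, Schur's lemma on the simple objects $i_k$) holds at that level, which is exactly what substantiates the paper's footnote. As a side remark, for a simple object $i\in\mcA_{jk}$ with $j\neq k$ one checks that $\dim_l i$ is supported on the summand $\opid_k$ of the unit while $\dim_r i$ is supported on $\opid_j$, so the hypothesis forces both to vanish for off-diagonal simples; your proof is indifferent to this phenomenon, as it should be, since the statement is an if-and-only-if and the same support constraint afflicts the sphericality condition itself.
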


A braiding on a multifusion category $\mcA$ is a natural isomorphism $\{c_{X,Y}\colon X\otimes Y \ra Y \otimes X\}_{X,Y\in\mcA}$ satisfying the two hexagon identities.
In the graphical calculus the braiding will be denoted by
\begin{equation}
c_{X,Y} = \pic[1.25]{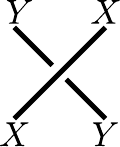} ~, \quad
c^{-1}_{X,Y} = \pic[1.25]{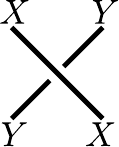} ~.
\end{equation}
For a braided multifusion category $\mcA$ we will denote by $\widetilde{\mcA}$ the reverse of it, i.e.\ $\mcA$ equipped with the reverse braiding $\{c^{-1}_{Y,X}\colon X\otimes Y \ra Y\otimes X\}_{X,Y\in\mcA}$.
A braided functor between braided categories is a monoidal functor which preserves the braiding morphisms.

\medskip

For a braided pivotal multifusion category one defines the left and right twist morphisms and their inverses as
\begin{equation}
\theta^l_X = \pic[1.25]{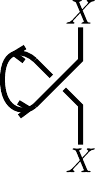} ~,~
\theta^r_X = \pic[1.25]{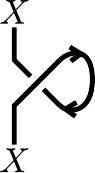} ~, \quad
(\theta^l_X)^{-1} = \pic[1.25]{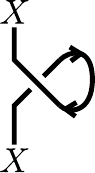} ~,~ 
(\theta^r_X)^{-1} = \pic[1.25]{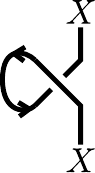} ~. 
\end{equation}
A braided pivotal multifusion category $\mcA$ is called ribbon if $\theta_X^l = \theta_X^r (=: \theta_X)$ for all $X\in\mcA$.
A ribbon functor between ribbon categories is a braided functor which preserves twists.
Graphical calculus allows one to represent a morphism in a ribbon category by a ribbon tangle embedded in $(0,1)\times (0,1)\times [0,1]$, see Figure~\ref{fig:graphical_calculus_ribbon} for an example.
For more details on braiding and ribbon structures see \cite[Sec.\,3.3]{TuVi}
The following result is a convenient criterion when checking the ribbon condition:
\begin{prp}
\label{prp:ribb_iff_spherical}
A braided pivotal multifusion category $\mcA$ is ribbon iff it is spherical.
\end{prp}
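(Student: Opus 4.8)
The plan is to reduce both conditions to scalar equalities indexed by the simple objects and then to bridge the ribbon and spherical conditions by a single graphical identity for the self-braiding.

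First I would reduce both conditions to simple objects. Each twist $\theta^l_X$, $\theta^r_X$ is a natural automorphism of $\Id_\mcA$, being built from the (natural) braiding, pivotal structure and (co)evaluations. Two natural automorphisms of $\Id_\mcA$ coincide iff they coincide on every simple object: decomposing an arbitrary object into simple summands and testing naturality against the inclusion and projection morphisms forces the components to be block diagonal and determined by their values on the simples. Hence $\mcA$ is ribbon iff $\theta^l_i=\theta^r_i$ for all $i\in\Irr_\mcA$, where on a simple $i$ each twist is a scalar, $\theta^l_i=t^l_i\,\id_i$ and $\theta^r_i=t^r_i\,\id_i$. On the other side, Proposition~\ref{prp:spherical_ito_simples} already records that $\mcA$ is spherical iff $\dim_l i=\dim_r i$ for all $i\in\Irr_\mcA$. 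It therefore suffices to compare the scalars $t^l_i,t^r_i$ with $\dim_l i,\dim_r i$.

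The key step is the scalar identity
\begin{equation*}
t^l_i \cdot \dim_r i \;=\; t^r_i \cdot \dim_l i \qquad (i\in\Irr_\mcA).
\end{equation*}
To prove it I would first read off from the graphical definitions that $\theta^l_i$ and $\theta^r_i$ are the two partial traces of the self-braiding $c_{i,i}$, obtained by closing up one of the two strands on the left, respectively on the right, with the pivotal (co)evaluations. Closing the remaining strand by the complementary full trace then yields $\tr_r(\theta^l_i)=t^l_i\dim_r i$ and $\tr_l(\theta^r_i)=t^r_i\dim_l i$ by linearity of the trace; but both of these are literally the same closed diagram, namely $c_{i,i}$ with one strand closed on the left and the other on the right. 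Hence the two scalars agree and the identity follows, the only inputs being the braiding axioms and the plane-isotopy invariance of the pivotal graphical calculus.

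Finally I would conclude. Since the twists are invertible we have $t^l_i,t^r_i\neq 0$, and since every simple object of a multifusion category over $\opC$ has nonzero left and right dimension we have $\dim_l i,\dim_r i\neq 0$; thus the identity rearranges to $t^l_i/t^r_i=\dim_l i/\dim_r i$. Therefore $\theta^l_i=\theta^r_i$ for all $i$ if and only if $\dim_l i=\dim_r i$ for all $i$, which by the first paragraph and Proposition~\ref{prp:spherical_ito_simples} is exactly the equivalence of the ribbon and spherical conditions. The main obstacle is the graphical identity of the preceding paragraph: the delicate point is the consistent bookkeeping of left versus right duals and (co)evaluations in the merely pivotal (not yet spherical) setting, so that the two partial traces are identified with $\theta^l_i$ and $\theta^r_i$ with the correct closures and the two mixed full closures of $c_{i,i}$ really coincide rather than producing dual-twisted variants.
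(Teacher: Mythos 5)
Your proof is correct, but it follows a genuinely different route from the paper's. The paper's proof is a reduction: it cites \cite[Lem.\,4.5]{TuVi} for the fusion case, and handles the multifusion case by observing that braidedness forces the off-diagonal components of Proposition~\ref{prp:multifusion_cats} to vanish ($\mcA_{ij}=0$ for $i\neq j$), so that $\mcA$ is a direct sum of braided pivotal fusion categories to which the cited lemma applies summand by summand. You instead give a self-contained argument that never uses the block decomposition: you reduce the ribbon condition to scalars on simple objects by naturality of the twists, establish the bridge identity $t^l_i\,\dim_r i=\tr_r(\theta^l_i)=\tr_l(\theta^r_i)=t^r_i\,\dim_l i$ --- which is indeed correct, since the two middle terms are the same mixed closure of $c_{i,i}$ (one strand closed on the left, the other on the right), and this closure is unambiguous in the pivotal calculus --- and then conclude using invertibility of the twists, nonvanishing of dimensions, and Proposition~\ref{prp:spherical_ito_simples}. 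In effect you reprove the content of the cited TuVi lemma directly in the multifusion setting; this buys an explicit mechanism relating twists to dimensions and shows that the decomposition step is superfluous, while the paper's route buys brevity by outsourcing exactly that mechanism to the literature. One input you should flag explicitly: the direction spherical $\Rightarrow$ ribbon needs $\dim_l i\neq 0$ for every simple $i$, and the nonvanishing of dimensions of simple objects is a nontrivial theorem over $\opC$ (it can fail in positive characteristic); you should cite \cite[Thm.\,2.3]{ENO} for it --- or note that in your braided setting it follows from the fusion-category statement because, as in the paper's proof, every simple object lies in a diagonal component $\mcA_{ii}$, which is a pivotal fusion category.
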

\begin{proof}
In the fusion case, this is shown e.g.\ in \cite[Lem.\,4.5]{TuVi}. In the multifusion case, one can decompose $\mcA$ as in Proposition~\ref{prp:multifusion_cats}, and since $\mcA$ is braided, $\mcA_{ij} = 0$ for $i \neq j$. Thus $\mcA = \bigoplus_i \mcA_{ii}$ is a sum of fusion categories and one can apply the previous argument in each summand.
\end{proof}

\medskip

Given a multifusion category $\mcA$, its Drinfeld centre $\mcZ(\mcA)$ is the category of pairs $(X,\g)$, where $X\in\mcA$ and $\g\colon X\otimes - \Ra -\otimes X$ is a half-braiding, i.e.\ a natural isomorphism satisfying the hexagon identity.
Morphisms in $\mcZ(\mcA)$ are morphisms in $\mcA$ commuting with half-braidings.
$\mcZ(\mcA)$ is multifusion\footnote{
For fusion categories this was shown in \cite[Thm.\,3.16]{Mu}.
For multifusion categories $\mcA$ one can use that $\mcZ(\mcA)$ is equivalent to the category of bimodule-endofunctors of $\mcA$ seen as an $\mcA$-$\mcA$-bimodule over itself, and that the latter category is semisimple by \cite[Thm.\,2.18]{ENO}.
} 
and braided with braiding $c_{(X,\g),(Y,\d)} = \g_Y$.
If $\mcA$ is pivotal/spherical, then there is a natural pivotal/spherical structure on $\mcZ(\mcA)$.
In particular, if $\mcA$ is spherical, then by Proposition~\ref{prp:ribb_iff_spherical}, $\mcZ(\mcA)$ is also ribbon.
The converse is not true: 
Below we will encounter examples 
where $\mcA$ is not spherical but $\mcZ(\mcA)$ is, cf.\ Remark~\ref{rem:ACB_not_spherical}.
If $\mcA$ is fusion, the Frobenius-Perron dimension of the Drinfeld centre is known to be $\FPdim(\mcZ(\mcA)) = \FPdim(\mcA)^2$.
Using the notation of Proposition~\ref{prp:multifusion_cats} we have (see e.g.\  \cite[Thm.\,2.5.1]{KZ}):
\begin{prp}
\label{prp:multifusion_center}
If $\mcA$ is an indecomposable multifusion category, then $\mcZ(\mcA)$ is a braided fusion category and is braided equivalent to $\mcZ(\mcA_{ii})$ for each $i$.
\end{prp}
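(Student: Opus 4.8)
The plan is to deduce both assertions from the Morita invariance of the Drinfeld centre, after first disposing of the simplicity of the tensor unit by a direct computation.

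\emph{Simplicity of the unit.} First I would show that $\mcZ(\mcA)$ is fusion, i.e.\ that $\opid_{\mcZ(\mcA)}=(\opid_\mcA,\g^{\mathrm{triv}})$ is simple. An endomorphism of this object is a morphism $f\in\End_\mcA(\opid_\mcA)$ compatible with the (trivial) half-braiding, meaning that under the canonical isomorphisms $\opid_\mcA\otimes Y\cong Y\cong Y\otimes\opid_\mcA$ left and right multiplication by $f$ agree for every $Y\in\mcA$. Writing $\opid_\mcA=\bigoplus_i\opid_i$ as in Proposition~\ref{prp:multifusion_cats}, the $\opid_i$ are simple and pairwise non-isomorphic, so $f=(f_i)_i$ with $f_i\in\opC$. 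Testing the compatibility against a nonzero object of $\mcA_{jk}$ then forces $f_j=f_k$. Since $\mcA$ is indecomposable, the graph on $\{1,\dots,n\}$ with an edge between $j$ and $k$ whenever $\mcA_{jk}\neq 0$ is connected: a disconnection into $S\sqcup T$ with no edges between would split $\mcA$ as the nontrivial direct sum $(\bigoplus_{j,k\in S}\mcA_{jk})\oplus(\bigoplus_{j,k\in T}\mcA_{jk})$ of two multifusion categories. Hence all $f_i$ coincide and $\End_{\mcZ(\mcA)}(\opid)=\opC$. As $\mcZ(\mcA)$ is in any case braided multifusion, it is braided fusion.

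\emph{Morita equivalence of $\mcA$ and $\mcA_{ii}$.} Fix $i$ and set $\mcC:=\mcA_{ii}$, a fusion category by Proposition~\ref{prp:multifusion_cats}. I would exhibit the ``$i$-th row'' $\mcM:=\opid_i\otimes\mcA=\bigoplus_j\mcA_{ij}$, with its left $\mcC$- and right $\mcA$-actions given by tensoring, as an invertible $(\mcC,\mcA)$-bimodule category, the inverse being the ``$i$-th column'' $\mcN:=\mcA\otimes\opid_i=\bigoplus_j\mcA_{ji}$. Concretely, $\mcM\boxtimes_\mcA\mcN\simeq\opid_i\otimes\mcA\otimes\opid_i=\mcC$ and $\mcN\boxtimes_\mcC\mcM\simeq\mcA\otimes\opid_i\otimes\mcA\simeq\mcA$ as bimodule categories, the final equivalence using indecomposability to guarantee that $\bigoplus_{j,k}\mcA_{ji}\otimes\mcA_{ik}$ exhausts all blocks of $\mcA$. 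This makes $\mcA$ and $\mcA_{ii}$ Morita equivalent.

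Finally I would invoke that the Drinfeld centre is a Morita invariant: an invertible bimodule category induces a braided equivalence of centres (this is the content of the cited \cite[Thm.\,2.5.1]{KZ}), giving $\mcZ(\mcA)\simeq\mcZ(\mcA_{ii})$ as braided fusion categories for every $i$; the statements for different $i$ are consistent since all corners are mutually Morita equivalent. I expect the main obstacle to be establishing the Morita equivalence rigorously -- in particular verifying $\mcN\boxtimes_\mcC\mcM\simeq\mcA$, which is exactly where indecomposability is indispensable -- together with transporting the braiding through the induced equivalence. A tempting alternative, namely building the compression functor $(X,\g)\mapsto\opid_i\otimes X\otimes\opid_i$ directly, proving full faithfulness, and comparing Frobenius--Perron dimensions via Proposition~\ref{prp:equiv_ito_FPdims}, runs into the difficulty that computing $\FPdim\mcZ(\mcA)$ independently is already essentially equivalent to the theorem, so the Morita route seems cleaner.
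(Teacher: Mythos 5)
Your proposal is correct in outline, but note that the paper does not actually prove this proposition: it is quoted from the literature with the reference \cite[Thm.\,2.5.1]{KZ}, so your argument genuinely differs in that it unpacks that citation into a proof. What you do --- (i) computing $\End_{\mcZ(\mcA)}(\opid)$ as tuples $(f_i)$ that must be constant along the graph whose edges are the nonzero blocks $\mcA_{jk}$, with connectedness supplied by indecomposability, (ii) exhibiting the row $\opid_i\otimes\mcA$ as an invertible $(\mcA_{ii},\mcA)$-bimodule category, and (iii) invoking Morita invariance of the Drinfeld centre --- is exactly the standard proof, and each step is sound; steps (i) and (ii) are where indecomposability enters, and you identify this correctly. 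Three caveats. First, be careful about circularity in step (iii): if \cite[Thm.\,2.5.1]{KZ} is literally the statement of this proposition, you cannot cite it as your Morita-invariance input; what you need is the separate (and also nontrivial) result that an invertible bimodule category between \emph{multifusion} categories induces a braided equivalence of centres --- the paper's own Proposition~\ref{prp:ZA_eq_ZB_means_Morita_eq} is stated only for fusion categories and so cannot be invoked off the shelf, meaning your proof trades one substantial citation for another of comparable depth. Second, your verification of $\mcN\boxtimes_\mcC\mcM\simeq\mcA$ tacitly needs that every block $\mcA_{jk}$ is nonzero and that every simple of $\mcA_{jk}$ is a summand of some $N\otimes M$ with $N\in\mcA_{ji}$, $M\in\mcA_{ik}$; both do follow from indecomposability (connectedness of the block graph, plus rigidity: for $0\neq X\in\mcA_{jk}$ the unit $\opid_j$ splits off $X\otimes X^*$, so nonzero objects tensor to nonzero objects and the relation ``$\mcA_{jk}\neq 0$'' is transitive), but this should be written out. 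Third, your closing remark is apt: the ``compression functor'' $(X,\g)\mapsto\opid_i\otimes X\otimes\opid_i$ is essentially what the paper uses later in \eqref{eq:ZCAB_to_ZF_funct} and Appendix~\ref{app:Witttriv_ribbon_funct}, and there the present proposition serves precisely as the input guaranteeing that the Frobenius--Perron dimensions agree, after which Propositions~\ref{prp:non-deg_fully-faith} and \ref{prp:equiv_ito_FPdims} finish the job --- so the direct route is indeed not independently available.
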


\medskip

Finally, for applications to TQFTs the notion of a \textit{modular fusion category (MFC)} is important.
A MFC is a ribbon fusion category whose braiding is \textit{non-degenerate}, i.e.\ if for some $T\in\mcC$ and for all $X\in\mcC$ the identity
\begin{equation}
c_{T,X} \circ c_{X,T} = \id_{X\otimes T}
\end{equation}
holds, then already $T \cong \opid^{\oplus n}$ for some $n\ge 0$.
In other words, all transparent objects are isomorphic to direct sums of the tensor unit.\footnote{
The original definition of modularity is via non-degeneracy of the 
$s$-matrix, i.e.\ the matrix with entries $\tr(c_{j,i}\,c_{i,j})$, $i,j \in \Irr_\mcC$, see \cite[Sec.\,II.1.4]{Tu}. The relation to transparent objects can be found in \cite{Br}.
}
The following statement is convenient when looking for equivalences between non-degenerate (and therefore also modular) braided fusion categories (see \cite[Cor.\,3.26]{DMNO}):
\begin{prp}
\label{prp:non-deg_fully-faith}
A braided functor $F\colon\mcC\ra\mcD$ between braided fusion categories $\mcC$, $\mcD$ with $\mcC$ non-degenerate is automatically fully-faithful.
\end{prp}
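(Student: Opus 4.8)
The plan is to exploit semisimplicity to replace full faithfulness by a statement about a single algebra object, and then to kill that algebra using non-degeneracy. Since $\mcC$ and $\mcD$ are fusion, they are in particular rigid and semisimple, so the tensor functor $F$ admits a (two-sided) adjoint; write $R$ for its right adjoint and set $A := R(\opid_\mcD) \in \mcC$. Frobenius reciprocity in the fusion setting gives a projection formula $R(FX) \cong X \otimes A$, natural in $X$, under which the unit $\eta \colon \Id_\mcC \Ra RF$ of the adjunction becomes $\id_X \otimes u$ with $u \colon \opid_\mcC \ra A$ the unit of the algebra $A$. Since $F$ is a left adjoint, $F$ is fully faithful iff $\eta$ is a natural isomorphism, and by the above this holds iff $u$ is an isomorphism, i.e.\ iff $A \cong \opid_\mcC$. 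So the whole statement reduces to proving $A \cong \opid_\mcC$.

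First I would record that $A$ is connected: using the adjunction and simplicity of the unit in the fusion category $\mcD$, the space $\mcC(\opid_\mcC, A) \cong \mcD(F\opid_\mcC, \opid_\mcD) = \mcD(\opid_\mcD, \opid_\mcD) \cong \opC$ is one-dimensional. Next, and this is the key point, I would show that $A$ is \emph{transparent}, i.e.\ that $c_{A,X}\circ c_{X,A} = \id_{X\otimes A}$ for every $X \in \mcC$. The mechanism is that $F$ preserves braidings while in $\mcD$ the braiding with the monoidal unit $\opid_\mcD$ is trivial: transporting the double braiding of $A$ with $X$ across the projection-formula isomorphism $X \otimes A \cong R(FX)$, and using that $F$ is braided, identifies it with $R$ applied to the monodromy of $FX$ with $\opid_\mcD$, which is the identity. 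Hence the monodromy of $A$ with any $X$ is trivial.

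Given transparency, non-degeneracy of $\mcC$ applies verbatim: the defining condition of non-degeneracy forces $A \cong \opid_\mcC^{\oplus n}$ for some $n \ge 0$. Combining with connectedness, $\opC \cong \mcC(\opid_\mcC, A) \cong \opC^{\,n}$ gives $n = 1$, so $A \cong \opid_\mcC$, and $F$ is fully faithful by the first paragraph.

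I expect the main obstacle to be the second paragraph: making precise that the projection-formula isomorphism intertwines the double braiding of $A$ with $X$ on one side and ($R$ applied to) the trivial monodromy with $\opid_\mcD$ on the other. This requires unwinding the definition of the projection-formula isomorphism out of the (co)unit and the lax monoidal structure of $R$, and checking the relevant naturality squares against the braided structure of $F$; the rest (faithfulness of tensor functors, existence of adjoints, and the reduction to $A\cong\opid_\mcC$) is formal in the fusion setting.
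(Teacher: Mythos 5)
The paper does not actually prove this proposition: it is quoted from \cite[Cor.\,3.26]{DMNO}, so there is no in-paper argument to compare against. Judged on its own, your proof is correct, and it is essentially the argument underlying that citation: trade full faithfulness of $F$ for triviality of the connected algebra $A = R(\opid_\mcD)$, show $A$ is transparent, and let non-degeneracy (in exactly the form the paper defines it) finish the job.

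The step you single out as the main obstacle does close. The mechanism is that the right adjoint of a braided strong monoidal functor is braided \emph{lax} monoidal: writing $\epsilon\colon FR \Ra \Id_\mcD$ for the counit and $\mu_{Y,Y'}\colon R(Y)\otimes R(Y')\ra R(Y\otimes Y')$ for the lax structure (the morphism adjoint to $\epsilon_Y \otimes \epsilon_{Y'}$), adjunction, naturality of $\epsilon$ and braidedness of $F$ give $R(c_{Y,Y'})\circ\mu_{Y,Y'} = \mu_{Y',Y}\circ c_{R(Y),R(Y')}$. Consequently the two projection-formula isomorphisms $\alpha\colon X\otimes R(Y)\ra R(FX\otimes Y)$ and $\beta\colon R(Y)\otimes X\ra R(Y\otimes FX)$, both built from $\eta$ and $\mu$, satisfy the two squares $\beta\circ c_{X,R(Y)} = R(c_{FX,Y})\circ\alpha$ and $\alpha\circ c_{R(Y),X} = R(c_{Y,FX})\circ\beta$ (naturality of the braiding of $\mcC$ plus the displayed compatibility); composing them gives $\alpha\circ c_{R(Y),X}\circ c_{X,R(Y)} = R\bigl(c_{Y,FX}\circ c_{FX,Y}\bigr)\circ\alpha$, and setting $Y=\opid_\mcD$, whose monodromy with every object is the identity by the unit constraints of a braiding, yields transparency of $A$. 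The only loose thread elsewhere is the phrase ``iff $u$ is an isomorphism, i.e.\ iff $A\cong\opid_\mcC$'': an abstract isomorphism $A\cong\opid_\mcC$ does not by itself make the particular morphism $u=\eta_{\opid_\mcC}$ invertible. But the triangle identity $\epsilon_{F\opid_\mcC}\circ F(\eta_{\opid_\mcC}) = \id_{F\opid_\mcC}$ forces $u\neq 0$, and a nonzero morphism between simple objects is invertible, so this is a one-line repair.
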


An equivalent condition for $\mcC$ to be a MFC is that the braided functor
\begin{equation}
\label{eq:CC_to_ZC_eq}
\mcC \boxtimes \widetilde{\mcC} \ra \mcZ(\mcC), \qquad
X \boxtimes Y \mapsto (X \otimes Y, \g_{X,Y}^\dol) ~,
\end{equation}
is an equivalence, see \cite[Prop.\,8.20.12]{EGNO}.
Here, $\g_{X,Y}^\dol$ is the ``dolphin'' half-braiding defined for all $U\in\mcC$ by
\begin{equation}
\label{eq:dolphin_half-br}
(\g_{X,Y}^\dol)_U := \pic[1.25]{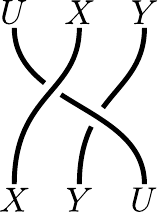}  \, .
\end{equation}
The Drinfeld centre $\mcZ(\mcS)$ of a spherical fusion category $\mcS$ is a MFC (\cite{Mu}, see also e.g.\ \cite[Cor.\,8.20.14]{EGNO}).

\subsection{Algebras and modules}
\label{subsec:algs_and_modules}
An algebra in a multifusion category $\mcC$ is an object $A\in\mcC$ together with a multiplication morphism $\mu\colon A\otimes A \ra A$ and a unit morphism $\eta\colon \opid_\mcC \ra A$, such that
\begin{equation}
\pic[1.25]{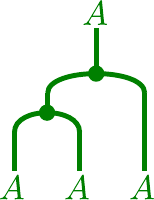} = \pic[1.25]{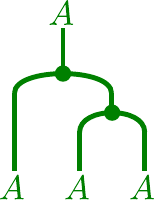}, \quad
\pic[1.25]{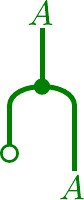} = \pic[1.25]{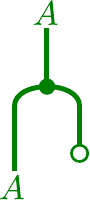} = \id_A ~.
\end{equation}
We will call $A$ a haploid algebra if $\dim \mcC(\opid_\mcC,A) = 1$.
If $\mcC$ is braided, one says that $A$ is a commutative algebra if
\begin{equation}
\pic[1.25]{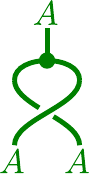} = \pic[1.25]{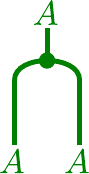}
\end{equation}
Similarly one defines a coalgebra as an object with comultiplication and counit morphisms satisfying coassociativity and counitality.

\medskip

In what follows we will make use of the notion of a Frobenius algebra extensively.
It is an object $A$, which is simultaneously an algebra and a coalgebra, such that
\begin{equation}
\label{eq:Frob_property}
\pic[1.25]{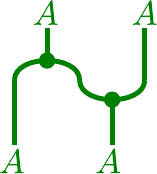} = \pic[1.25]{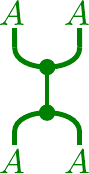} = \pic[1.25]{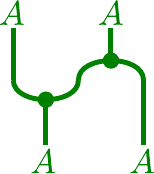} ~.
\end{equation}
We denote the coproduct and counit by $\D \colon A \ra A \otimes A$ and $\varepsilon \colon A \ra \opid_\mcC$ respectively.
A Frobenius algebra $A$ is called
\begin{equation}
\label{eq:sep_and_symm_properties}
\text{$\D$-separable if } \pic[1.25]{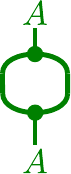} = \id_A ~,\quad
\text{symmetric if} \pic[1.25]{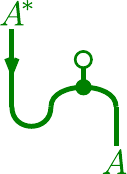} = \pic[1.25]{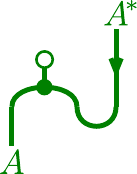} ~,
\end{equation}
where the latter notion requires $\mcC$ to be pivotal.
For the dimension of $A$ we have the following result:

\begin{lem}\label{lem:dimA-nonzero}
Let $\mcC$ be a spherical multifusion category and $A$ a haploid $\Delta$-separable symmetric Frobenius algebra in $\mcC$. Then $\dim(A) \neq 0$.
\end{lem}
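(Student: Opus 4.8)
The plan is to compute $\dim(A)$ via the Frobenius structure and show it equals a nonzero quantity using sphericality and the separability condition. First I would use the standard fact that for a Frobenius algebra, the composite $\vareps \circ \eta \colon \opid_\mcC \to \opid_\mcC$ together with the $\Delta$-separability condition relates $\dim(A)$ to the ``window element'' or to the counit–unit pairing. Concretely, I would consider the endomorphism of $\opid_\mcC$ obtained by $\vareps \circ \mu \circ \Delta \circ \eta$: using $\Delta$-separability $\mu \circ \Delta = \id_A$, this collapses to $\vareps \circ \eta$, while expanding it the other way (inserting the Frobenius relation) produces a morphism closely tied to $\dim(A) = \tr(\id_A)$. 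The key identity I expect to exploit is that in a spherical category the dimension $\dim(A)$ can be rewritten as the trace of the map $A \to A$ given by multiplication-then-comultiplication threaded through the duality, which by symmetry and sphericality equals $\vareps \circ \eta$ up to the bookkeeping of the pivotal structure.

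The main step is to prove $\vareps \circ \eta \neq 0$, i.e.\ that the counit does not annihilate the unit. Here I would use haploidness crucially: since $\dim \mcC(\opid_\mcC, A) = 1$, the unit $\eta$ spans $\mcC(\opid_\mcC,A)$, and dually $\vareps$ spans $\mcC(A,\opid_\mcC)$ (the counit of a Frobenius algebra is always nonzero, and by haploidness this morphism space is also one-dimensional, using that $A \cong A^*$ for a symmetric Frobenius algebra so $\dim \mcC(A,\opid_\mcC) = \dim \mcC(\opid_\mcC, A^*) = \dim \mcC(\opid_\mcC, A) = 1$). If $\vareps \circ \eta$ were zero, then the nonzero morphisms $\eta$ and $\vareps$ would compose to zero; I would rule this out by showing that $\vareps \circ \eta$ appears as a nonzero scalar multiple of $\dim(A)$ \emph{and} simultaneously as a quantity forced to be nonzero by the algebra structure — for instance by showing the idempotent $e = \mu \circ \Delta$ (which is $\id_A$ here by separability) would otherwise have vanishing trace, contradicting $\tr(\id_A) = \dim(A)$ being the dimension of a nonzero object.

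The cleanest route, which I would try first, is to directly show $\dim(A) = (\vareps\circ\eta)\cdot(\text{nonzero factor})$ and separately that $\vareps \circ \eta \ne 0$. For the latter: the pairing $A \otimes A \xrightarrow{\mu} A \xrightarrow{\vareps} \opid_\mcC$ is nondegenerate (that is precisely the Frobenius condition), so $\vareps \circ \mu \circ (\id_A \otimes \eta) = \vareps \neq 0$ forces $\vareps$ to be nonzero on the image of $\eta$; combined with haploidness this pins down $\vareps\circ\eta$ as a nonzero scalar. \textbf{The hard part} will be carefully tracking the pivotal/spherical bookkeeping that converts the graphical expression for $\dim(A)$ into the scalar $\vareps \circ \eta$ — one must verify that symmetry of the Frobenius algebra is exactly what makes the left and right traces agree and what identifies the ``loop'' diagram for $\dim(A)$ with the straightened composite $\vareps \circ \eta$, rather than with some other (possibly degenerate) pairing. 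Sphericality of $\mcC$ is needed here to equate the two trace conventions, and I expect the symmetric condition in \eqref{eq:sep_and_symm_properties} to be the precise hypothesis that licenses the diagram manipulation.
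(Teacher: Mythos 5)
Your overall route is the same as the paper's: reduce the lemma to the two facts that $\dim A = \vareps\circ\eta$ (for symmetric $\D$-separable Frobenius algebras this is a standard computation, which the paper simply cites as \cite[Eq.\,(3.49)]{FRS1}) and that $\vareps\circ\eta\neq 0$. Your plan for the first fact is fine. The genuine gap is in the second fact: neither of the two justifications you offer for $\vareps\circ\eta\neq 0$ is valid. The argument ``otherwise $\tr(\id_A)=\dim(A)$ would vanish, contradicting $\dim(A)$ being the dimension of a nonzero object'' is circular: nonzero objects in a spherical category can have zero dimension --- the paper's own example of the Clifford algebra in super-vector spaces is a haploid $\D$-separable Frobenius algebra with $\dim(A)=0$ --- and the non-vanishing of $\dim(A)$ is precisely what the lemma asserts. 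The argument via nondegeneracy is a non sequitur: $\vareps\circ\mu\circ(\id_A\otimes\eta)=\vareps$ is just unitality and carries no information about $\vareps\circ\eta$; nondegeneracy of the pairing $\vareps\circ\mu$ evaluated against $\eta$ only says that \emph{some} part of $A$ pairs nontrivially with the unit, not that the unit pairs nontrivially with itself.

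What actually closes the argument --- and what the paper's terse sentence is implicitly invoking --- is semisimplicity plus Schur's lemma. Since $\mcC$ is multifusion, hence semisimple, and $A$ is haploid, $A$ contains exactly one simple summand isomorphic to a simple summand of $\opid_\mcC$, and $\eta$ is, up to a nonzero scalar, the inclusion of that summand. By Schur's lemma any morphism $A\ra\opid_\mcC$ annihilates all other isotypic components of $A$, so it is determined by its restriction to this one summand; hence $\vareps\circ\eta=0$ would force $\vareps=0$, contradicting counitality $(\vareps\otimes\id_A)\circ\D=\id_A$. Note that your observation that $\dim\mcC(\opid_\mcC,A)=\dim\mcC(A,\opid_\mcC)=1$ (via $A\cong A^*$) is correct but not by itself sufficient: one-dimensionality of both Hom-spaces does not formally imply that the composition pairing between them is nonzero (in non-semisimple categories this fails, e.g.\ for a projective cover of the unit with socle and top both trivial), so semisimplicity must enter explicitly. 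With this repair your proof coincides with the paper's.
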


\begin{proof}
Since the unit $\eta$ and counit $\vareps$ of $A$
are both non-zero morphisms and $A$ is haploid one has $\vareps\circ\eta \neq 0$.
On the other hand, for symmetric $\D$-separable Frobenius algebras one has $\vareps\circ\eta = \dim A$ 
(see e.g.\ \cite[Eq.\,(3.49)]{FRS1}).
\end{proof}

We note that one cannot drop the requirement that $A$ is symmetric.
For example, the two-dimensional Clifford algebra in super-vector spaces is haploid $\Delta$-separable Frobenius, but has quantum dimension $\dim(A)=0$.
A Frobenius algebra is called \textit{special} if $\mu \circ \D$ and $\varepsilon \circ \eta$ are non-zero multiples of the respective identity morphisms.
Thus if $\mcC$ is fusion, Frobenius algebras as in Lemma~\ref{lem:dimA-nonzero} are special.

\medskip

A (left-)module of an algebra $A$ is an object $M$ together with an action morphism $\rho_M\colon A\otimes M \ra M$, such that
\begin{equation}
\pic[1.25]{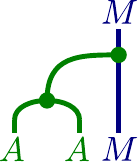} = \pic[1.25]{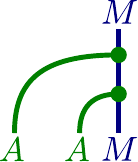} ~, \quad
\pic[1.25]{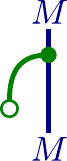} = \id_M ~.
\end{equation}
Similarly one defines right modules and bimodules.
For two modules $M$ and $N$, a module morphism is a morphism $f\colon M\ra N$ in $\mcC$ which commutes with the actions.
In case $A$ is a Frobenius algebra, every module $M$ is canonically a comodule with the coaction
\begin{equation}
\pic[1.25]{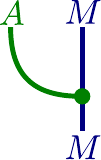} := \pic[1.25]{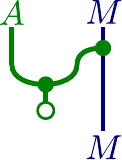} ~.
\end{equation}

\medskip

Modules and bimodules and their morphisms form categories, which we denote by $\mcC_A$ and $\mcACA$.
If $A$ is a $\D$-separable Frobenius algebra, both of them are finitely semisimple (see \cite[Prop.\,5.24]{FS}) and $\mcACA$ is multifusion with the tensor product of two bimodules $M$ and $N$ being
\begin{equation}
M \otimes_A N := \im P_{M\otimes N} ~, \text{ where }
P_{M\otimes N} = \pic[1.25]{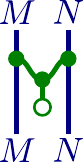} \stackrel{(*)}{=} \pic[1.25]{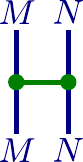} ~.
\end{equation}
In $(*)$ we have introduced a simplified notation for the projector $P_{M\otimes N}$.
The tensor unit is defined to be $\opid_{\mcACA} := A$.
If $\mcC$ is pivotal, so is $\mcACA$ with evaluation/coevaluation morphisms
\begin{equation}
\ev_M = \pic[1.25]{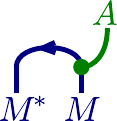}, \,
\coev_M = \pic[1.25]{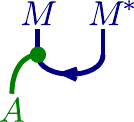}, \,
\evt_M = \pic[1.25]{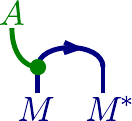}, \,
\coevt_M = \pic[1.25]{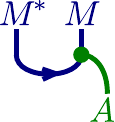}.
\end{equation}

\medskip

Let $\mcC$ be a braided fusion category and $A$ a commutative algebra in it.
Given a module $M\in\mcC_A$, one can define two right $A$-actions on it as follows
\begin{equation}
\pic[1.25]{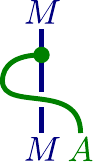}
\pic[1.25]{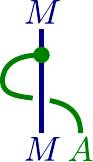}
\end{equation}
We say that $M$ is a \textit{local $A$-module} if these right actions are equal or equivalently
\begin{equation}
    \rho_M \circ c_{M,A} \circ c_{A,M} = \rho_M ~.
\end{equation}
The category of local modules will be denoted by $\mcC^\loc_A$.
If $\mcC$ is in addition a ribbon category, $\mcC^\loc_A$ is also ribbon with tensor product and dualities as in $\mcACA$ and twists of objects as in $\mcC$, see e.g.\ \cite[Sec.\,3.4]{FFRS}.
One has the following important result, see \cite{KO} and \cite[Cor.\,3.30,\,3.32]{DMNO}:
\begin{thm}
\label{thm:CAloc_is_modular}
Let $\mcC$ be a modular fusion category and $A$ a symmetric $\D$-separable commutative haploid Frobenius algebra in $\mcC$.
Then $\mcC^\loc_A$ is again a modular fusion category with the Frobenius-Perron dimension $\FPdim(\mcC^\loc_A) = \tfrac{\FPdim(\mcC)}{(\FPdim(A))^2}$.
\end{thm}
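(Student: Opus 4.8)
The plan is to verify the three defining features of a modular fusion category for $\mcC^\loc_A$ in turn: that it is a ribbon fusion category, that its braiding is non-degenerate, and that its Frobenius--Perron dimension has the asserted value.

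First I would record the ribbon fusion structure, most of which is already available from the cited results. The categories $\mcC_A$ and $\mcACA$ are finitely semisimple, and $\mcC^\loc_A$ carries a braiding together with the ribbon structure inherited from $\mcC$ (the twist of a local module being the twist of its underlying object, cf.\ \cite{FFRS}). The only place the hypotheses on $A$ enter is in showing that the tensor unit $\opid_{\mcC^\loc_A} = A$ is \emph{simple}: a right $A$-module endomorphism of the free module $A$ is determined by the image of the unit, so $\End(A) \cong \mcC(\opid_\mcC, A)$, which is one-dimensional since $A$ is haploid. Hence $A$ is simple and $\mcC^\loc_A$ is fusion rather than merely multifusion. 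Sphericality is then automatic: being ribbon, $\mcC^\loc_A$ is spherical by Proposition~\ref{prp:ribb_iff_spherical} (equivalently one checks $\dim_l = \dim_r$ on simples via Proposition~\ref{prp:spherical_ito_simples}, computing module dimensions from those of $\mcC$ and using that $\dim A \neq 0$ by Lemma~\ref{lem:dimA-nonzero}).

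Next, the dimension count. I would use the free-module (induction) functor $\Ind = A\otimes(-)\colon \mcC \to \mcC_A$, a dominant monoidal functor. Since $\Ind(\opid_\mcC) = A$ and $\Ind$ is surjective on objects up to direct summands, one obtains $\FPdim(\mcC_A) = \FPdim(\mcC)/\FPdim(A)$. Restricting from all $A$-modules to the local ones removes a second factor of $\FPdim(A)$, which yields the claimed $\FPdim(\mcC^\loc_A) = \FPdim(\mcC)/(\FPdim(A))^2$. This second step is not formal, however: it relies on the non-degeneracy of $\mcC$ and is really proved in tandem with the non-degeneracy of $\mcC^\loc_A$.

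The heart of the argument, and the step I expect to be the main obstacle, is non-degeneracy. I would show that the Müger centre of $\mcC^\loc_A$ is trivial by transporting transparency through the forgetful functor $\mcC^\loc_A \to \mcC$ and its adjoint $\Ind$: a local module transparent to all local modules ought, after forgetting, to be transparent to enough of $\mcC$ to be forced into the form $A^{\oplus n}$ by haploidness. The delicate point is that the forgetful image of $\mcC^\loc_A$ need not exhaust $\mcC$, so one cannot directly invoke non-degeneracy of $\mcC$; this is exactly where Müger's double-centraliser theorem for non-degenerate $\mcC$ is required, controlling the relevant centraliser in $\mcC$ and simultaneously pinning down the extra factor of $\FPdim(A)$ above. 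A more self-contained alternative is to consider the braided ``dolphin'' functor $\mcC^\loc_A \boxtimes \widetilde{\mcC^\loc_A} \to \mcZ(\mcC^\loc_A)$ as in \eqref{eq:CC_to_ZC_eq}: since $\mcC^\loc_A$ is fusion, both sides automatically have Frobenius--Perron dimension $\FPdim(\mcC^\loc_A)^2$, so by Proposition~\ref{prp:equiv_ito_FPdims} modularity is equivalent to this functor being fully faithful. One cannot extract fullness for free from Proposition~\ref{prp:non-deg_fully-faith}, as that would presuppose the very non-degeneracy we are after; fullness must instead be checked directly, which again reduces to the centraliser computation. In view of these subtleties, in the write-up I would ultimately rely on the established analysis of \cite{KO} and \cite[Cor.\,3.30,\,3.32]{DMNO}, which carry out precisely this argument.
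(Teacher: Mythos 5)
Your proposal is correct and ultimately rests on the same foundation as the paper: the paper gives no proof of this theorem at all, simply citing \cite{KO} and \cite[Cor.\,3.30,\,3.32]{DMNO}, which is exactly where your write-up defers for the two genuinely hard points (non-degeneracy of $\mcC^\loc_A$ and the extra factor of $\FPdim(A)$ in the dimension formula). Your preliminary observations --- simplicity of the tensor unit via $\End_{\mcC_A}(A)\cong\mcC(\opid_\mcC,A)$ and haploidness, the ribbon structure inherited as in \cite{FFRS}, and the correct identification of non-degeneracy as the non-formal core that cannot be obtained for free from Proposition~\ref{prp:non-deg_fully-faith} --- are accurate elaborations of what those cited results contain.
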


\medskip

An example illustrating the above result is the following (see \cite[Thm.\,2.3]{BK} and \cite[Lem.\,3.5]{DMNO}):

\begin{example}\label{ex:lag-alg}
For a spherical fusion category $\mcS$ there is a commutative haploid symmetric $\D$-separable Frobenius algebra $A\in\mcZ(\mcS)$ with the underlying object $A = \bigoplus_{i\in\Irr_\mcS} i \otimes i^*$.
One evidently has $\FPdim(A) = \FPdim(\mcS)$, which by Proposition~\ref{prp:equiv_ito_FPdims} and Theorem~\ref{thm:CAloc_is_modular} implies that $\mcZ(\mcS)_A^\loc \simeq \Vect$.
In general, algebras with this property are called \textit{Lagrangian}, see \cite[Sec.\,4.2]{DMNO}.
\end{example}

For the later discussion of surface defects we will need the following new notion:
\begin{defn}
\label{def:Frob_algs_over_AB}
Let $\mcC$ be a ribbon category and $A,B\in\mcC$ two symmetric $\D$-separable commutative Frobenius algebras.
A (\it{symmetric, $\D$-separable}) \it{Frobenius algebra over $(A,B)$} is a (\it{symmetric, $\D$-separable}) Frobenius algebra $F\in\mcC$, which is simultaneously an $A$-$B$-bimodule, such that 
\begin{align*}
&
\pic[1.25]{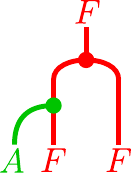} = 
\pic[1.25]{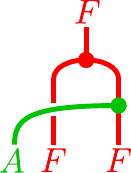} =
\pic[1.25]{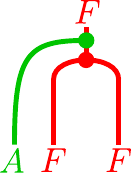}, \qquad
\pic[1.25]{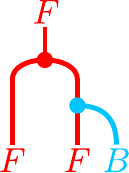} =
\pic[1.25]{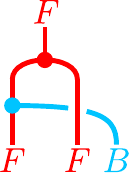} =
\pic[1.25]{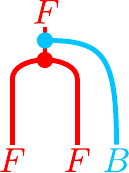},\\
&
\pic[1.25]{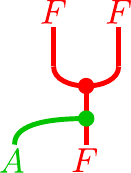} =
\pic[1.25]{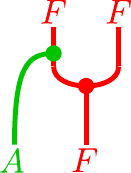} =
\pic[1.25]{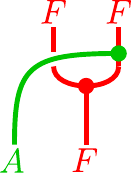}, \qquad
\pic[1.25]{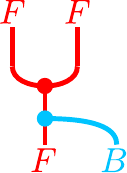} =
\pic[1.25]{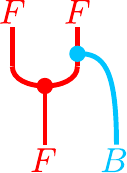} =
\pic[1.25]{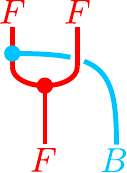}.
\end{align*}
Note that $A\otimes B$ is an example of such an algebra.
\end{defn}

\begin{rem}
\label{rem:Frob_algs_over_AB_cross_opp}
\begin{enumerate}[i), wide, labelwidth=0pt, labelindent=0pt]
\item
Because of the Frobenius property, the compatibility conditions of $A$- and $B$-actions with the product of $F$  in the above definition imply those with the coproduct, and vice versa. \item 
The reason for the choice of over- and under-crossings in the relations in Definition~\ref{def:Frob_algs_over_AB} will become apparent in Section~\ref{subsec:constr_of_defect_TQFT}.
There, a symmetric $\D$-separable Frobenius algebra $F\in\mcC$ over $(A,B)$ is used to label a surface defect separating two TQFTs labelled by $A$ and $B$.
The crossings mean that the $A$-phase is ``above'' the defect, while the $B$-phase is ``below'', assuming that the defect itself is oriented towards the $A$-phase.
\item
If $F\in\mcC$ is a Frobenius algebra over $(A,B)$ with multiplication $\mu$, comultiplication $\D$ and $A$- and $B$-actions $\l$ and $\rho$, its \textit{opposite} is a Frobenius algebra $F^{\operatorname{op}}\in\mcC$ over $(B,A)$ having the same underlying object $F$, multiplication $\mu\circ c_{F,F}$, comultiplication $c_{F,F}^{-1}\circ\D$, $B$-action $\rho\circ c_{B,F}$ and $A$-action $\l\circ c_{F,A}$, see also \cite[Sec.\,3.5]{FRS1}.
Left $F$-modules are in bijection with right $F^{\operatorname{op}}$-modules.
Opposite algebras will be used to reverse the orientation of defects.
\end{enumerate}
\end{rem}

Let $A$, $B$, $F$ be as in the Definition~\ref{def:Frob_algs_over_AB} above.
A left $F$-module $M\in\mcC$ and a right $F$-module $N\in\mcC$ are automatically $A$-$B$-bimodules with actions
\begin{align}
\label{eq:AB-action-from-F-action-1}
\pic[1.25]{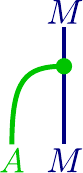} :=
\pic[1.25]{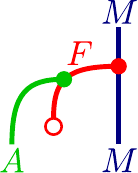}  \qquad &, \qquad
\pic[1.25]{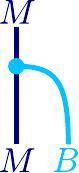} :=
\pic[1.25]{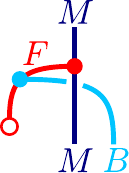} \, ,\\
\label{eq:AB-action-from-F-action-2}
\pic[1.25]{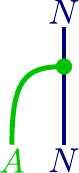} :=
\pic[1.25]{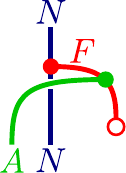}\qquad &,\qquad
\pic[1.25]{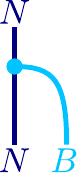} :=
\pic[1.25]{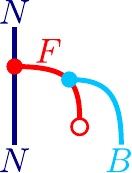} \, .
\end{align}
It is easy to check that these are the unique $A$- and $B$- actions on $M$ and $N$ such that the following identities hold:
\begin{align}
&
\pic[1.25]{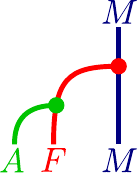} =
\pic[1.25]{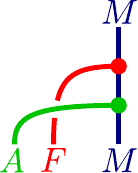} =
\pic[1.25]{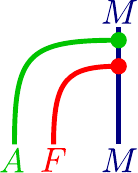}, \qquad
\pic[1.25]{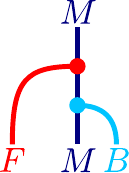} =
\pic[1.25]{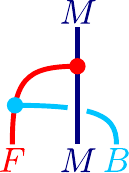} =
\pic[1.25]{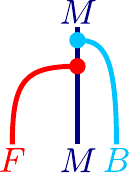},
\nonumber
\\
&
\pic[1.25]{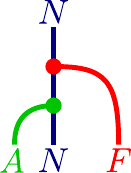} =
\pic[1.25]{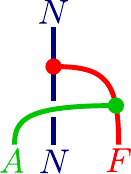} =
\pic[1.25]{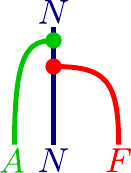}, \qquad
\pic[1.25]{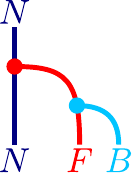} =
\pic[1.25]{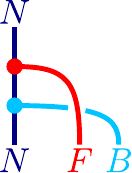} =
\pic[1.25]{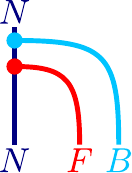}.
\label{eq:AB-module-identities}
\end{align}
A morphism of (left-, right-) modules $M$, $N$ is a morphism $f\colon M\ra N$, commuting with the $F$- (and therefore also $A$-, $B$-) actions.
Because of the above argument, there is no need to introduce ``left or right modules over $(A,B)$''. However, the situation is different for bimodules, where one obtains two a priori different $A$-$B$-bimodule structures. Since the application to line defects requires all of the identities in \eqref{eq:AB-module-identities} to hold simultaneously for a bimodule $M=N$, we define:
\begin{defn}
\label{def:bimodules_over_AB}
Let $F_1, F_2\in\mcC$ be algebras over $(A,B)$. An \textit{$F_1$-$F_2$-bimodule over $(A,B)$} is an $F_1$-$F_2$-bimodule $M$ such that the left $A$-actions and right $B$-actions induced  via \eqref{eq:AB-action-from-F-action-1} and \eqref{eq:AB-action-from-F-action-2} by $F_1$ and $F_2$ respectively, are equal.
\end{defn}

\subsection{Construction of the defect TQFT}
\label{subsec:constr_of_defect_TQFT}
A modular fusion category $\mcC$ yields a $3$-dimensional TQFT via the Reshetikhin-Turaev (RT) construction \cite{RT2,Tu}.
In it one interprets the objects of $\mcC$ as data assigned to bulk Wilson line operators (by analogy with Chern-Simons theory \cite{Wi}, where Wilson lines are labelled by representations of the gauge group).
In the $3$-sphere $S^3$, the vacuum expectation value of a link $W$ of Wilson lines is 
defined to be the number obtained by projecting $W$ on a plane and using graphical calculus to read it as a morphism in $\End_\mcC(\opid_\mcC) \cong \opC$.
The Wilson lines are framed and are represented as ribbons.
More generally one can evaluate ribbon graphs involving also coupons embedded in $S^3$.
Other $3$-manifolds can be achieved by representing the corresponding $3$-manifold as a framed link $L$ in $S^3$ using surgery and treating the components of $L$ also as Wilson lines, but labelled in a specific way.

\medskip

Properties of surface defects between Reshetikhin-Turaev models were explored in  \cite{KS,FSV} and it was understood in \cite{KS,FSV,CRS2} that a surface defect $S$ can be converted into a ribbon graph whose strands are labelled with a symmetric $\D$-separable Frobenius algebra.
Wilson lines within a surface defect are then labelled by bimodules over such algebras.
However, this construction only allows for surface defects between two identical bulk theories.
A generalisation of this construction to surface defects with different bulk theories on the two sides, i.e.\ domain walls, is the main goal of this paper.

\begin{figure}
\captionsetup{format=plain, indention=0.5cm}
\centering
\begin{subfigure}[b]{0.48\textwidth}
    \centering
    \pic[1.25]{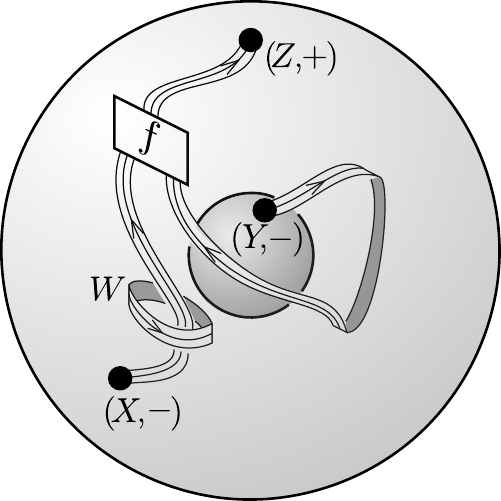}
    \caption{}
    \label{fig:Bordrib_example}
\end{subfigure}
\begin{subfigure}[b]{0.48\textwidth}
    \centering
    \pic[1.25]{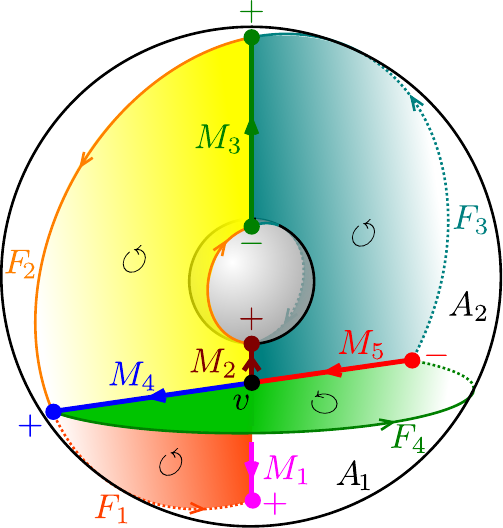}
    \caption{}
    \label{fig:Borddef_example}
\end{subfigure}
\caption{
(a) A morphism in $\Bordrib_3(\mcC)$ with underlying manifold $S^2\times[0,1]$, depicted here by a closed solid ball with an open ball removed from the interior.
The strands intersect the boundary components transversally, an intersection point, or \textit{puncture}, is framed with the tangent vector indicating the framing of the adjacent strand (not depicted in the figure) and labeled with a datum $(U,\epsilon)$, where $U\in\mcC$ labels the strand and $\epsilon=\pm$ is a sign indicating the direction of the strand ($+$ if it is directed into the boundary and $-$ otherwise).
Orientation reversal of the boundary components flips the sign, for example the inner sphere has a puncture labelled with $(Y,+)$ when seen as an incoming boundary component and $(Y,-)$ when seen as outgoing.
In pictures of bordisms $M$ we will always use the orientation on the boundary $\partial M$ as induced by the three-manifold $M$, irrespective of whether a component of the boundary is incoming or outgoing. Incoming boundaries are then parametrised by orientation-reversing embeddings.
\\
\vspace{-0.25cm}\\
(b) 
A morphism in $\Borddef_3(D_0,D_1,D_2,D_3)$ with underlying manifold $S^2\times[0,1]$ and having two $3$-strata labelled by $A_1,A_2\in D_3$, four $2$-strata labelled by $F_1,\dots,F_4\in D_2$, five $1$-strata labelled by $M_1,\dots,M_5\in D_1$ and one $0$-stratum labelled by $v\in D_0$.
The boundary components have the induced stratification, orientation reversal of a boundary component flips the orientations of all its strata.
}
\label{fig:Bord_examples}
\end{figure}

\begin{figure}
\captionsetup{format=plain, indention=0.5cm}
\pic[1.25]{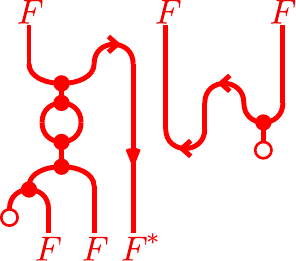} \hspace{-0.25cm}$\rightsquigarrow$\hspace{-0.25cm}
\pic[1.25]{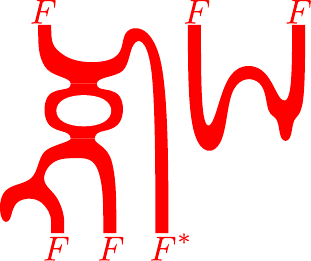} \hspace{-0.25cm}$\rightsquigarrow$\hspace{-0.25cm}
\pic[1.25]{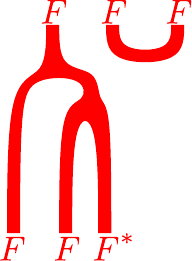} \hspace{-0.25cm}$\rightsquigarrow$\hspace{-0.25cm}
\pic[1.25]{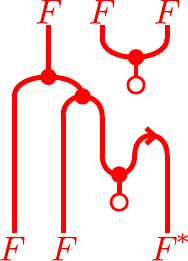}
\caption{
In graphical calculus, morphisms composed of (co)unit, (co)multiplication and (co)evaluation morphisms of a symmetric $\D$-separable Frobenius algebra $F$ can be changed by replacing the corresponding graph by a thin surface and deforming it.
$\D$-separability allows one to omit holes in the surface as indicated by the second step in the figure.
The symmetry property is used to canonically identify $F$ with its dual $F^*$.
}
\label{fig:Frob_morph}
\end{figure}

\subsubsection*{TQFT functors}
Mathematically, the RT TQFT is given by a symmetric monoidal functor
\begin{equation}
\label{eq:RT_TQFT_functor}
Z_\mcC\colon \Bordrib_3(\mcC) \ra \Vect_\opC ~,
\end{equation}
where $\Bordrib_3(\mcC)$ is a central extension of the category with $3$-dimensional bordisms with embedded ribbon graphs as morphisms and surfaces with framed $\mcC$-labelled punctures as objects.
An example of such a bordism is given in Figure~\ref{fig:Bordrib_example}.
The extension compensates a gluing anomaly (see \cite[Sec.\,IV.6]{Tu} for details); we will ignore it in the following as it does not play a role in our construction.

\medskip

In \cite{CRS2} the RT TQFT was extended to a \textit{defect TQFT}.
A $3$-dimensional defect TQFT is a symmetric monoidal functor
\begin{equation}
Z\colon \Borddef_3 (D_0, D_1, D_2, D_3) \ra \Vect
\end{equation}
where the source category is that of $3$-dimensional \it{stratified} bordisms whose $k$-strata carry labels from a set $D_k$ for each $k=0,1,2,3$ (technically the possible labels of a stratum also depend on the adjacent strata; this information together with the sets of defect labels form a so called \textit{defect datum}, see \cite{CRS1} for details and Figure~\ref{fig:Borddef_example} for an example).
Formally, \textit{defect} here means a stratum which has a label assigned.
We will also use words like \textit{point}, \textit{line}, \textit{surface} and \textit{bulk} as synonyms of $0$-, $1$-, $2$- and $3$-stratum.
Point defects are also called \textit{point insertions} while the labelled $3$-strata will also be referred to as \textit{phases} or \textit{bulk theories}.
We will call a surface defect a \textit{domain wall} if we want to stress that it can separate different bulk phases.

\medskip

In the construction of \cite{CRS2}, the label set $D_3$ for bulk phases consisted only of a single element, namely that describing the RT TQFT for a given MFC $\mcC$.
Below we describe a generalisation which includes different labels for $3$-strata.
In particular the defect datum we use has the following label sets
\begin{itemize}
\item $D_3$: commutative symmetric $\D$-separable Frobenius algebras in a fixed MFC $\mcC$.
\item $D_2$: domain walls separating two bulk theories labelled by $A,B\in D_3$ and oriented towards the one labelled with $A$ are labelled by symmetric $\D$-separable Frobenius algebras over $(A,B)$ in $\mcC$ (see Definition~\ref{def:Frob_algs_over_AB}).
We will also use the notation $D_2(A,B)$ for the collection of such algebras.
\item $D_1$: line defects in a bulk theory $A\in D_3$ are labelled by local $A$-modules; line defects having adjacent $2$-strata are labelled by \textit{multimodules} of the corresponding algebras in $D_2$ (see below).
\item $D_0$: point defects are added using the procedure of point defect completion (see \cite{CRS2} or below).
\end{itemize}
The setting of \cite{CRS2} is then recovered by restricting $D_3$ to $\{\opid_\mcC\}$.

\subsubsection*{Surface defects}
Let $F\in D_2(\opid_\mcC, \opid_\mcC)$.
Consider a morphism $f\colon F^{\otimes n} \ra F^{\otimes m}$ consisting of compositions of (co)unit and (co)multiplication as well as (co)evaluation maps.
In graphical calculus $f$ is depicted as a directed planar graph with half edges for incoming and outgoing strands.
Due to the symmetry property in \eqref{eq:sep_and_symm_properties} there is a canonical isomorphism $F\cong F^*$, so the directions of strands are of no importance
(also some of the tensor factors $F$ in the (co)domain of $f$ can be changed to $F^*$).
In fact, when dealing with such morphisms we can interpret the graph of $f$ as a surface, consisting of thin ribbon bands that branch out at the vertices.
Together, (co)unitality, (co)associativity, the Frobenius property \eqref{eq:Frob_property} and the symmetry property allow to freely deform such a surface.
The $\D$-separability property in \eqref{eq:sep_and_symm_properties} allows one in addition to fill in holes in the surface, or, equivalently, to omit one of the edges on a part of the graph which forms the boundary of an embedded disc (see Figure~\ref{fig:Frob_morph}). In particular, all connected graphs describe the same morphism $F^{\otimes n} \ra F^{\otimes m}$.

\medskip

In RT TQFT, a surface defect $S$ between identical phases is evaluated by replacing it with an $F$-labelled graph $\Gamma_S\subseteq S$.
This graph is required to be \textit{full}, that is, any new edge added to $\Gamma_S$ which again lies in $S$ can be removed again by using the $\D$-separability property.
Such a full graph can be obtained for example by removing from each connected component of $S$ a non-zero (but finite) number of open discs and taking the deformation retract.
The strands of $\Gamma_S$ have a framing with a normal vector that is tangent to the surface $S$ and together with the orientation of the strand coincides with the orientation of $S$.

\subsubsection*{Bulk phases}
For $A\in D_3$ one can apply an analogous procedure to evaluate bulk phases. 
Namely one embeds a full $A$-labelled graph $\Gamma_U$ into a $3$-stratum $U$.
To interpret $\Gamma_U$ as a morphism in $\mcC$ we again note that the edges of $\Gamma_U$ need not be directed and that the vertices can be interpreted as morphisms obtained by composing (co)unit, (co)multiplication, (co)evaluation and the braiding morphisms.
We note that the framing of the $A$-labelled strands is not important, since $A\in\mcC$ has a trivial twist:  $\theta_A = \id_A$.
\begin{rem}\label{rem:A-phase-is-RT-phase}
Evaluating a manifold $M$ which has no surface defects and whose $3$-strata are labelled by a fixed commutative haploid symmetric $\D$-separable Frobenius algebra $A$, one gets precisely the invariant $Z_{\mcC_A^\loc}(M)$ 
\cite{CMRSS1,CMRSS2}.
That is, the invariant which is assigned to $M$ by the RT TQFT obtained from the modular category of local modules of $A$, cf.\ Theorem~\ref{thm:CAloc_is_modular}.
\end{rem}

\subsubsection*{Domain walls between two bulk phases}
We now consider two $3$-strata $U$ and $V$ labelled by algebras $A,B\in D_3$ adjacent to a surface defect $S$ labelled by $F\in D_2(A,B)$.
The evaluation procedure is very similar to before: $S$ is replaced by a full planar graph $\Gamma_S$ labelled by $F$ and $U$, $V$ by full graphs $\Gamma_U$, $\Gamma_V$ labelled by $A$ and $B$.
The graphs $\Gamma_U$ and $\Gamma_V$ have edges which are adjacent to $S$.
We move their ends so that they lie on $\Gamma_S \subseteq S$.
The whole configuration of defects is now replaced by the graph $\Gamma_U \cup \Gamma_V \cup \Gamma_S$ where the points in $\Gamma_U \cap \Gamma_S$ are labelled by (co)actions of $A$ on $F$ and those in $\Gamma_V \cap \Gamma_S$ by (co)actions of $B$ on $F$.

\medskip

Note that having an $F\in D_2(A,B)$ labelled surface defect oriented towards $U$ gives the same invariant as having an $F^{\operatorname{op}}\in D_2(B,A)$ labelled surface defect oriented towards $V$ (see Remark~\ref{rem:Frob_algs_over_AB_cross_opp}).
Indeed, consistently rotating all $F$-strands and coupons of $\Gamma_S$, one obtains crossings as in the definition of (co)multiplication and $A$-$B$-actions of $F^{\operatorname{op}}$.
\begin{figure}
\captionsetup{format=plain, indention=0.5cm}
\centering
\begin{subfigure}[b]{0.45\textwidth}
    \centering
    \pic[1.25]{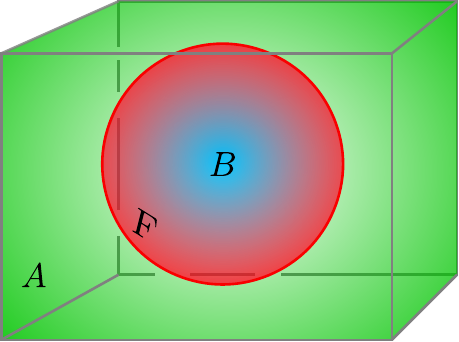}
    \caption{}
    \label{fig:example_S2_in_S3_bord}
\end{subfigure}
\begin{subfigure}[b]{0.45\textwidth}
    \centering
    \pic[1.25]{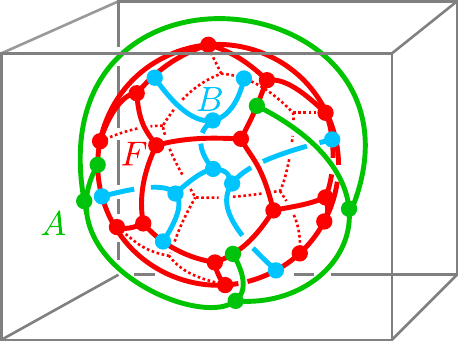}
    \caption{}
    \label{fig:example_S2_in_S3_graph}
\end{subfigure}
\caption{
(a) The stratified sphere $S^3_{\operatorname{def}}$.
The boundary of the cube represents a single point in $S^3 \simeq \opR^3 \cup \{\infty\}$.
(b) An example of a ribbon graph corresponding to the stratification of $S^3_{\operatorname{def}}$.
The directions of the strands are omitted, the vertices where $A$, $B$ and $F$ lines meet are labelled by (co)multiplications or (co)actions.
}
\label{fig:example_S2_in_S3}
\end{figure}
\begin{example}
Let $S^3_{\operatorname{def}}$ be a stratified $3$-sphere with two $3$-strata given by two open $3$-balls and an $S^2$-surface between them, labelled by $A,B\in D_3$ and $F\in D_2(A,B)$ as depicted in Figure~\ref{fig:example_S2_in_S3_bord}.
Upon evaluation one replaces the $2$- and $3$-strata with a ribbon graph in $S^3$
as shown in Figure~\ref{fig:example_S2_in_S3_graph}.
One can use the relations in Definition~\ref{def:Frob_algs_over_AB} as well as properties Frobenius algebras to further simplify the graph: all $A$- and $B$-lines can be contracted as well as all but one $F$-line.
This leaves the unstratified  sphere $S^3$ with a single $F$-labelled loop embedded.
The corresponding invariant is therefore the scalar:
\begin{equation}
    Z_\mcC(S^3_{\operatorname{def}}) = \dim F \cdot Z_\mcC(S^3) ~.
\end{equation}
\end{example}
\begin{rem}
In Remark~\ref{rem:A-phase-is-RT-phase} we have seen that a bulk phase labelled $A$ describes the RT TQFT for the modular category $\mcC_A^\loc$. 
Domain walls labelled by $F \in D(A,B)$ can therefore be thought of as separating the RT TQFTs for modular categories $\mcC_A^\loc$ and $\mcC_B^\loc$.
We will recall in Section~\ref{subsec:Witt_eq} that $\mcC_A^\loc$ and $\mcC_B^\loc$ are Witt-equivalent.
Conversely, given two modular categories $\mcC$ and $\mcD$ which are Witt equivalent, there exist a (non-unique) modular category $\mcE$ together with commutative symmetric $\Delta$-separable Frobenius algebras $A,B \in \mcE$, such that $\mcC \cong \mcE_A^\loc$ and $\mcD \cong \mcE_B^\loc$. 
In this sense, our construction allows one to describe domain walls between RT-TQFTs for arbitrary Witt-equivalent modular categories.
In Section~\ref{subsec:domain_walls_btw_RT_theories} we recall the argument of \cite{FSV} that non-Witt-equivalent theories cannot be separated by a (topological) domain wall.
The construction can be extended to any finite number $\mcC_1,\dots,\mcC_n$ of modular categories. However, there is no single modular category $\mcE$ which would allow one to obtain all other Witt-equivalent modular categories in terms of local modules.
\end{rem}

\subsubsection*{Wilson lines in bulk phases and surfaces}
The construction of invariants of stratified manifolds with domain walls only has a straightforward generalisation which allows one to evaluate Wilson lines both in the bulk and inside surface defects:
\begin{itemize}
\item 
Wilson lines in a $3$-stratum $U$ labelled $A\in D_3$ are labelled with local $A$-modules.
Upon evaluation one again attaches the graph $\Gamma_U$ to the network of Wilson lines in $U$ using (co)action morphisms.
\item
A Wilson line in a surface defect can separate two $2$-strata labelled with $F_1, F_2 \in D_2(A,B)$, where $A,B\in D_3$ label the adjacent bulk theories.
Accordingly, it is labelled by an $F_1$-$F_2$-bimodule $M$ over $(A,B)$ is in Definition~\ref{def:bimodules_over_AB}.
A network of Wilson lines in a surface defect is then evaluated by ``attaching'' the graphs of adjacent $2$-strata using (co)action morphisms.
\end{itemize}

\begin{figure}
\captionsetup{format=plain, indention=0.5cm}
\centering
\pic[1.25]{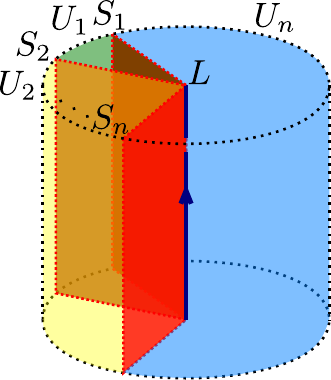}
\caption{
Neighbourhood of a line defect $L$ with several domain walls attached.
}
\label{fig:local_neigh_3d_line}
\end{figure}
\subsubsection*{General line defects}
A general line defect $L$ can have $n>0$ adjacent $2$-strata $S_1,\dots,S_n$ and $3$-strata $U_1,\dots,U_n$.
Let us choose $S_1$ as a preferred $2$-stratum, so that a tubular neighbourhood of $L$ looks like in Figure~\ref{fig:local_neigh_3d_line}.
For $i=1,\dots,n$, let $A_i\in D_3$ be the algebras labelling $U_i$ and $F_i\in D_2(A_i,A_{i-1})$ the algebras labelling $S_i$, where we use the convention $A_0 := A_n$ (using the opposite algebras one can without loss of generality assume that $S_1,\dots,S_n$ have paper plane orientation).
To label the line $L$ we make the following (cf.\ \cite[Def.\,.2.2]{CRS2})
\begin{defn}
\label{def:multimodule}
An \textit{$F_1,\dots,F_n$-multimodule} is an object $M\in\mcC$ which is simultaneously a module of the algebras $F_1,\dots,F_n$ such that
i) the identity
\begin{equation}
\pic[1.25]{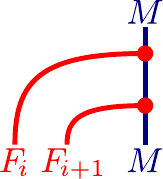} = \pic[1.25]{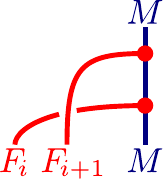}
\end{equation}
holds for all $i=1,\dots,n$ and ii) the two $A_i$-actions obtained from $F_i$- and $F_{i+1}$-actions by analogy to \eqref{eq:AB-action-from-F-action-1}, \eqref{eq:AB-action-from-F-action-2} are equal (cf.\ Definition~\ref{def:bimodules_over_AB}).
\end{defn}
Upon evaluation, the $A_i$-labelled graphs $\Gamma_{U_i}$ and $F_i$-labelled graphs $\Gamma_{S_i}$ are attached to the line $L$ labelled by an $F_1,\dots,F_n$-multimodule using (co)action morphisms in a similar manner as before.

\medskip

Note that the choice of a preferred adjacent $2$-stratum $S_1$ gives $L$ a natural framing which is necessary to label it with an object $M\in\mcC$.
In \cite{CRS2} it was shown how to deal with unframed line defects by equipping $M$ with a so called \textit{cyclic structure}.
Although the setting in \cite{CRS2} did not include different labels for the adjacent $3$-strata (which in our setting corresponds to $A_1 = \dots = A_n = \opid_\mcC$), the same idea can be adapted here; we omit the details.

\subsubsection*{State spaces}

Until now we only explained how to compute the numerical invariants that the TQFT assigns to closed $3$-manifolds.
To obtain linear maps for $3$-manifolds with boundary, one needs to know also the vector spaces assigned to surfaces.
A boundary surface $\Sigma$ of a stratified manifold $M$ is assumed to intersect the line and surface defects transversally.
$\Sigma$ itself is therefore stratified, its $0$- and $1$-strata carry orientations and labels which are induced by the corresponding $1$- and $2$-strata of $M$.

\medskip

The evaluation procedure for a stratified bordism $M\colon \Sigma \to \Sigma'$ is similar to that for a closed manifold: the $2$- and $3$-strata are replaced by full graphs which are labelled by the corresponding Frobenius algebras.
The difference now is that the graph $\Gamma_D$, where $D$ is either a $2$- or $3$-stratum, can have edges which intersect the boundary $(-\Sigma) \sqcup \Sigma'$ of $M$.
That $\Gamma_D$ is \textit{full} in this case means the following: Each new edge that is added to $\Gamma_D$ and which lies in $D$ and does not intersect the boundary can be again be omitted without changing the result after applying $Z_\mcC$.

The intersection points of $\Gamma_D$ with the boundary of $M$ are treated as punctures on the underlying unstratified surfaces for $\Sigma$ and $\Sigma'$.
Note that different choices for $\Gamma_D$ result in different number of punctures, but in the interior this choice does not matter: graphs $\Gamma_D$ and $\Gamma_D'$ yielding the same punctures can be modified into each other by deforming and adding/removing edges in the interior.

\medskip

The dependence on the number of punctures on the boundary can be eliminated by the standard limit procedure:
Consider the cylinder $C = \Sigma\times[0,1]$.
Converting its $2$- and $3$-strata to ribbon graphs one obtains a cylinder bordism $C_p^{p'}\colon\Sigma_p \ra \Sigma_{p'}$, where $p$, $p'$ denote the corresponding punctures on the incoming and outgoing boundaries.
The linear maps $\Psi_p^{p'} := Z_\mcC(C_p^{p'})$ then satisfy $\Psi_{p'}^{p''}\circ\Psi_p^{p'} = \Psi_p^{p''}$, i.e.\ form a directed system.
The state space of the stratified surface $\Sigma$ is then defined as its colimit.
Note that $\Psi_p^p$ is an idempotent and that the state spaces can be computed as
\begin{equation}
\label{eq:statesp}
Z_\mcC(\Sigma) \cong \im \Psi_p^p ~.
\end{equation}

\begin{figure}
\captionsetup{format=plain, indention=0.5cm}
\centering
\begin{subfigure}[b]{0.32\textwidth}
    \centering
    \pic[1.25]{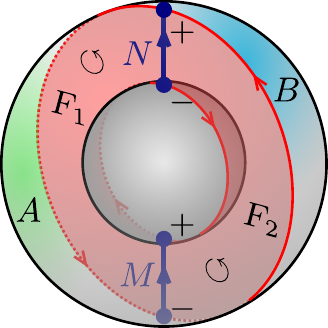}
    \caption{}
    \label{fig:example_S2ss_cylinder}
\end{subfigure}
\begin{subfigure}[b]{0.32\textwidth}
    \centering
    \pic[1.25]{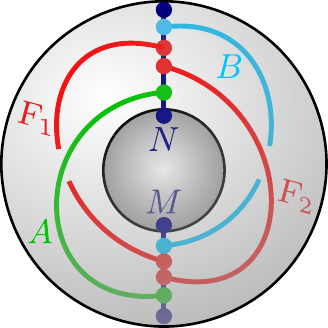}
    \caption{}
    \label{fig:example_S2ss_cylinder_rib}
\end{subfigure}
\begin{subfigure}[b]{0.32\textwidth}
    \centering
    \pic[1.25]{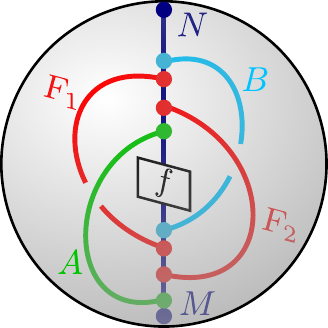}
    \caption{}
    \label{fig:example_S2ss_cylinder_rib_img}
\end{subfigure}
\caption{
(a) The stratified cylinder over $S^2_\text{def}$ in Example~\ref{ex:state-space-S2-def} whose image defines the state space. (b) A representation of the bulk phases and domain walls in terms of ribbons. (c) Action of the cylinder on a generic state in the underlying RT TQFT, given by a solid ball with a morphism $f \in \mcC(M,N)$ inserted.
}
\label{fig:example_S2ss}
\end{figure}

\begin{example}\label{ex:state-space-S2-def}
Let $A,B\in D_3$ and $F_1,F_2\in D_2(A,B)$.
Consider a stratified $2$-sphere $S^2_{\operatorname{def}}$ with two lines labelled by $F_1$ and $F_2$ adjacent to two points labelled by $F_1$-$F_2$-bimodules $M$ and $N$ over $(A,B)$.
The cylinder $C = S^2_{\operatorname{def}} \times [0,1]$  and a choice for its ribbonisation $C_p^p$ are depicted in Figures~\ref{fig:example_S2ss_cylinder} and \ref{fig:example_S2ss_cylinder_rib}.
In this case $p$ consists of two punctures labelled by objects $M,N\in\mcC$.
Let $S^2_p$ be the corresponding $2$-sphere with punctures so that $[C_p^p\colon S^2_p\ra S^2_p]\in\Bordrib_3(\mcC)$.
By construction of the RT TQFT, we have an isomorphism $Z_\mcC(S^2_p) \cong \mcC(M,N)$ of vector spaces.
For $f\in\mcC(M,N)$, the image $\Psi_p^p(f)$ corresponds to the bordism $\varnothing\ra S^2_p$ depicted in Figure~\ref{fig:example_S2ss_cylinder_rib_img}.
The $F_1$ and $F_2$ lines then act as a projector on the subspace of $F_1$-$F_2$-bimodule morphisms $M\ra N$ (see e.g.\ \cite[Lem.\,4.4]{FRS1}), while the $A$ and $B$ lines can be contracted and do not give additional conditions.
From \eqref{eq:statesp} one therefore gets
\begin{equation}
    Z_\mcC(S^2_{\operatorname{def}}) \cong {}_{F_1}\mcC_{F_2}(M,N)
\end{equation}
in terms of homomorphisms of $F_1$-$F_2$-bimodules.
\end{example}

\begin{figure}[t]
\captionsetup{format=plain, indention=0.5cm}
\centering
\begin{subfigure}[b]{0.48\textwidth}
    \centering
    \pic[1.25]{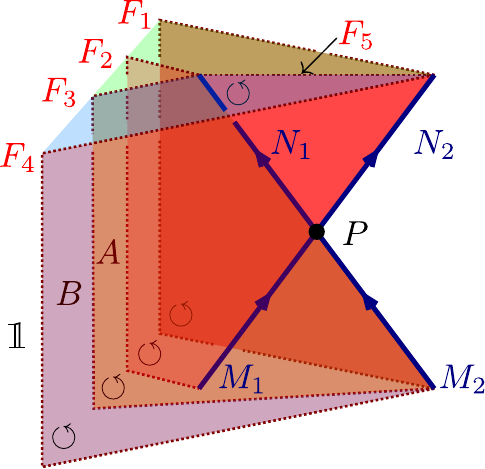}
    \caption{}
    \label{fig:point_defect_neigh}
\end{subfigure}
\begin{subfigure}[b]{0.48\textwidth}
    \centering
    $S^2_P =$\pic[1.25]{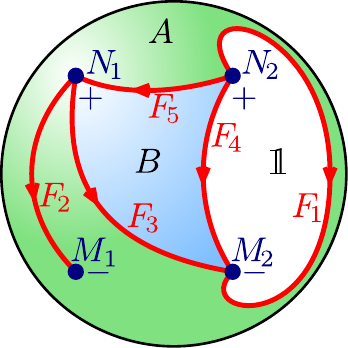}
    \vspace{0.75cm}
    \caption{}
    \label{fig:point_defect_sphere}
\end{subfigure}
\caption{
(a) An example of a $0$-stratum $P$ having three adjacent bulk phases
(labelled by $A,B,\opid_\mcC\in D_3$), five surface defects (labelled by $F_1,\dots,F_5$, so that $F_1\in D_2(A,\opid)$, $F_2\in D_2(A,A)$ , etc.) and four line defects (labelled by multimodules $M_1$, $M_2$, $N_1$, $N_2$, so that $M_1$ is a left $F_2$-module, $M_2$ is an $F_1,F_3,F_4$ multimodule, etc.).
(b) The boundary of a closed ball neighborhood of $P$.
It is a $2$-sphere with stratification and labels induced by the strata adjacent to $P$.
The set of possible labels of $P$ is defined to be the state space $Z_\mcC(S^2_P)$.
}
\label{fig:point_defect}
\end{figure}
\subsubsection*{Point defects}
A $0$-stratum $P$ can have several adjacent $1$-, $2$- and $3$-strata.
The label set for $P$ is given by the state space $Z_\mcC(S^2_P)$, where $S^2_P$ is a stratified $2$-sphere obtained as a boundary component after removing a small open ball $B_P$ surrounding $P$\footnote{This applies to point defects with a parametrised neighbourhood. In the unparametrised case one has to divide out symmetries, see \cite[Sec.\,2.4]{CRS1} for details.}.
The evaluation procedure is then as follows:
Let $M$ be a closed stratified manifold with point defects $P_1,\dots,P_n$ labelled by $v_i\in Z_\mcC(S^2_{P_i})$, $i=1,\dots,n$.
Denoting $M_\circ := M \setminus (B_{P_1} \sqcup \dots \sqcup B_{P_n})$, one gets a linear map
\begin{equation}
Z_\mcC(M_\circ)\colon Z_\mcC(S^2_{P_1}) \otimes \cdots \otimes Z_\mcC(S^2_{P_n}) \ra \opC ~.
\end{equation}
The invariant of $M$ is then defined by
\begin{equation}
Z_\mcC(M) := Z_\mcC(M_\circ) (v_1 \otimes \cdots \otimes v_n) ~.
\end{equation}
In practice, the spaces $Z_\mcC(S^2_P)$ turn out to be given by spaces of multimodule morphisms.
For example, in the situation depicted in Figure~\ref{fig:point_defect_neigh}, the space consists of morphisms 
$f\colon M_1 \otimes M_2 \ra N_1 \otimes N_2$ such that
\begin{equation}
\pic[1.25]{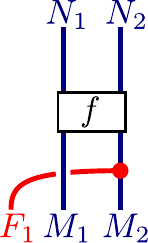} \hspace{-0.25cm}=\hspace{-0.25cm} \pic[1.25]{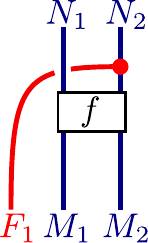} ,
\pic[1.25]{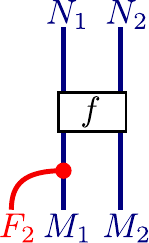} \hspace{-0.25cm}=\hspace{-0.25cm} \pic[1.25]{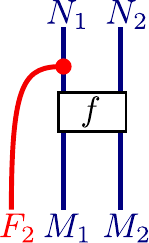} ,
\pic[1.25]{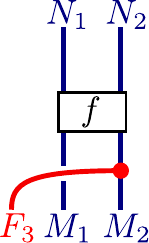} \hspace{-0.25cm}=\hspace{-0.25cm} \pic[1.25]{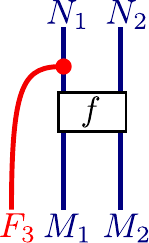} , \text{ etc.}
\end{equation}
Finally, we mention that such a procedure of adding point defects works for any construction of a defect TQFT and has been called ``point defect completion'' in \cite[Sec.\,2.4]{CRS1}.

\subsection{Bicategories of domain walls}
\label{subsec:bicats_of_domain_walls}
In this section we look at some aspects of the higher categorical description of defect TQFTs.
Recall that a \textit{bicategory} $\mcB$ consists of 
\begin{itemize}
\item a collection of \textit{objects}, which we will also call $\mcB$ by abuse of notation,
\item categories $\mcB(A,B)$ for each pair of objects $A,B\in\mcB$, whose objects are called \textit{$1$-morphisms} morphisms are in turn called \textit{$2$-morphisms} of $\mcB$,   
\item \textit{composition functors} $\otimes\colon\mcB(B,C)\times\mcB(A,B) \ra \mcB(A,C)$ for all $A,B,C\in\mcB$, together with associativity isomorphisms.
\item a \textit{unit} for each $\mcB(A,A)$, together with unit isomorphisms for the composition.
\end{itemize}
For more details and for the axioms these data have to satisfy, see e.g.\ \cite{Le}.
Bicategories have a version of graphical calculus, in which strands and coupons are labelled with $1$- and $2$-morphisms while the $2$-dimensional patches between them are labelled with the corresponding objects.
In particular, the datum of a bicategory with one object is equivalent to a monoidal category.
One can also define structures on bicategories which allow one to make their graphical calculus richer, for example in a \textit{pivotal bicategory} each $1$-morphism $M\in\mcB(A,B)$ has a two-sided dual (or biadjoint) $M^*\in\mcB(B,A)$ together with (co)evaluation $2$-morphisms satisfying analogous conditions to those in \eqref{eq:pivotal_conds},
see \cite[\S 2]{KSt}, \cite[Sec.\,2.2]{Ca}.

\medskip

A functor between bicategories $\mcB$, $\mcB'$ is a mapping between objects $F\colon\mcB\ra\mcB'$ together with a collection of functors $\{ \mcB(A,B)\ra\mcB'(FA,FB) \}_{A,B\in\mcB}$ equipped with natural transformations, analogous to the monoidal structure on a functor between two monoidal categories.
As with 1-categories, we say that a functor $F\colon\mcB\ra\mcB'$ is an \textit{equivalence} if there is a functor in the opposite direction such that the two compositions are isomorphic to identity functors. Again analogous to 1-categories, $F$ is an equivalence iff the corresponding functors on the categories of $1$-morphisms are equivalences (i.e.\ $F$ is fully faithful) and each object of $\mcB'$ has an invertible $1$-morphism to an object in the image of $F$ (i.e.\ $F$ is essentially surjective), cf.\ \cite{Le}.

\medskip

Let $Z\colon \Borddef_3 (D_0, D_1, D_2, D_3) \ra \Vect$ be a $3$-dimensional defect TQFT.
To each pair of labels $u,v\in D_3$ of $3$-strata one can assign a pivotal bicategory $\mcB_{u,v}$ which has (see \cite[Sec.\,3.3]{CMS}, \cite[Sec.\,2.4]{DKR}):
\begin{itemize}
\item \textit{objects:}
labels for $2$-strata between two $3$-strata labelled by $u$, $v$, oriented so that the normal vector points towards $u$;
\item \textit{$1$-morphisms:}
(lists of)\footnote{
    Using lists of 1-strata-labels as 1-morphisms allows one to define composition as concatenation.
    We will ignore this technicality in what follows.
}
labels for $1$-strata having two adjacent $2$-strata labelled with objects in $\mcB_{u,v}$;
\item {\textit{$2$-morphisms:}}
labels for $0$-strata whose adjacent $1$-strata are $1$-morphisms of $\mcB_{u,v}$.
\end{itemize}
\begin{rem}
The bicategory of surface defects from above belongs to a more general construction assigning a \textit{tricategory} (i.e.\ a $3$-dimensional analogue of a bicategory) to a $3$-dimensional defect TQFT, see \cite{CMS}.
\end{rem}

\medskip

Let us apply the construction of the bicategory of surface defects to the TQFT $Z_\mcC$ defined in Section~\ref{subsec:constr_of_defect_TQFT}.
Assuming that the adjacent bulk theories are labelled with algebras $A,B\in D_3$, it readily yields the bicategory in the following
\begin{defn}
\label{def:FrobAlg-over-AB}
Let $A,B\in\mcC$ be commutative symmetric $\D$-separable Frobenius algebras.
We denote by $\FrobAlg^{\operatorname{s},\D}_{\mcC,A,B}$ the bicategory 
\begin{itemize}
\item having symmetric $\D$-separable Frobenius algebras over $(A,B)$ in $\mcC$ as objects,
\item $\FrobAlg^{\operatorname{s},\D}_{\mcC,A,B}(F_1,F_2)$ being the category of $F_1$-$F_2$-bimodules over $(A,B)$ and their morphisms,
\item the composition of two bimodules ${}_{F_1}M_{F_2}$ and ${}_{F_2} M_{F_3}$ being the tensor product $M \otimes_{F_2} N$ over the respective algebra,
\item for each object $F$, the unit being $F$ seen as a bimodule over itself.
\end{itemize}
\end{defn}

\section{Domain walls via module categories}
\label{sec:3}

\subsection{Module categories}
\label{subsec:module_cats}
Let $\mcA$ be a multifusion category. By a (left) $\mcA$-module category we will mean a finitely semisimple $\mathbb{C}$-linear category $\mcM$ together with an action $\triangleright\colon\mcA \times \mcM \ra \mcM$ and natural isomorphisms $(-\otimes -)\triangleright - \Ra - \triangleright (- \triangleright -)$ and $\opid_\mcC \triangleright - \Ra \Id_\mcM$ satisfying the usual pentagon and triangle identities.
An $\mcA$-module functor between $\mcA$-module categories $\mcM$ and $\mcN$ is a linear functor $F\colon\mcM\ra\mcN$ equipped with natural isomorphism $F(- \triangleright -) \Ra - \triangleright F(-)$ satisfying compatibility conditions.
Natural transformations of $\mcA$-module functors $F,G\colon\mcM\ra\mcN$ are assumed to commute with the structure morphisms of $F$ and $G$.
A module category is called indecomposable if it is not equivalent (as a module category) to a direct sum of two non-trivial module categories.
Detailed definitions can be found in \cite[Ch.\,7]{EGNO}.\footnote{
    We stress again that we assume all module categories to be \textsl{semisimple}. This is only a special case of the exact module categories considered in \cite{EGNO} but sufficient for our purpose.}

\medskip

Let $A\in\mcA$ be an algebra.
Then the category $\mcA_A$ of \textit{right} $A$-modules is a \textit{left} $\mcA$-module category with the action $X \triangleright (M, \rho) := (X \otimes M, \widetilde\rho)$, where the right $A$-action is given by $\widetilde\rho = \big[ (X \otimes M) \otimes A \xrightarrow{\sim} X \otimes (M \otimes A) \xrightarrow{\id_X \otimes \rho} X \otimes M \big]$.
Two algebras are said to be \textit{Morita equivalent} if their respective categories of right modules are equivalent as $\mcA$-module categories.
The algebra $A$ is called semisimple if $\mcA_A$ is semisimple.
One has by \cite{Os}, and e.g.\ \cite[Cor.\,7.10.5.(i)]{EGNO}:

\begin{prp}
\label{prp:modcats-algs}
Let $\mcM$ be an $\mcA$-module category.
Then there exists a semisimple algebra $A\in\mcA$ such that $\mcM \simeq \mcA_A$ as module categories.
\end{prp}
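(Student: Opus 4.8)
The plan is to realise $\mcM$ through Ostrik's internal Hom construction, exactly as in \cite{Os} and \cite[Cor.\,7.10.5]{EGNO}. First I would construct, for objects $M,N\in\mcM$, an \emph{internal Hom} $\underline{\Hom}(M,N)\in\mcA$ representing the functor $X\mapsto\mcM(X\triangleright M,N)$, i.e.\ equipped with natural isomorphisms $\mcA(X,\underline{\Hom}(M,N))\cong\mcM(X\triangleright M,N)$. Because $\mcA$ is finitely semisimple, representability is not a problem: one writes the representing object down explicitly as
\[
\underline{\Hom}(M,N)=\bigoplus_{i\in\Irr_\mcA} i\otimes_{\opC}\mcM(i\triangleright M,N),
\]
and verifies the defining isomorphism on simple $X=j$ using $\mcA(j,i)\cong\delta_{ij}\opC$, extending to general $X$ by additivity of both sides.

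Next I would promote $A:=\underline{\Hom}(M,M)$ to an algebra. Composition in $\mcM$ induces natural morphisms $\underline{\Hom}(N,P)\otimes\underline{\Hom}(M,N)\to\underline{\Hom}(M,P)$; specialising to $N=P=M$ yields the multiplication $\mu\colon A\otimes A\to A$, while $\id_M$ (transported through the adjunction) yields the unit $\eta\colon\opid_\mcA\to A$. Associativity and unitality then follow formally from the universal property. For every $N\in\mcM$ the object $\underline{\Hom}(M,N)$ carries a right $A$-action via the same composition morphism, so $N\mapsto\underline{\Hom}(M,N)$ defines a functor $\mcM\to\mcA_A$, and one checks it is an $\mcA$-module functor using naturality of the internal Hom with respect to the action $\triangleright$.

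The substantial step is to choose $M$ so that this functor is an equivalence. Since $\mcM$ is finitely semisimple, I would take $M=\bigoplus_k M_k$ to be the direct sum of representatives of the finitely many isomorphism classes of simple objects, so that $M$ is a generator: every object of $\mcM$ is a direct summand of $X\triangleright M$ for some $X\in\mcA$. Full faithfulness is the assertion that $\mcM(N,P)\cong\mcA_A(\underline{\Hom}(M,N),\underline{\Hom}(M,P))$, which I would deduce from the adjunction together with the $A$-module structure. I expect essential surjectivity to be the main obstacle: given an arbitrary right $A$-module $L\in\mcA_A$ one must produce $N\in\mcM$ with $\underline{\Hom}(M,N)\cong L$ as $A$-modules, and reconstructing $N$ (as the image of an idempotent built from the $A$-action) is precisely where the generating property of $M$ and semisimplicity of $\mcM$ enter. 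Finally, semisimplicity of $A$ is automatic: by definition $A$ is semisimple exactly when $\mcA_A$ is, and $\mcA_A\simeq\mcM$ is semisimple by hypothesis.
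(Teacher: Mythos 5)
Your proposal is correct and takes essentially the same route as the paper: the paper establishes this proposition purely by citing \cite{Os} and \cite[Cor.\,7.10.5.(i)]{EGNO}, and your argument --- the internal-Hom construction $A=\underline{\operatorname{Hom}}(M,M)$ for a generator $M$ given by the sum of the simple objects, with $N\mapsto\underline{\operatorname{Hom}}(M,N)$ furnishing the module equivalence $\mcM\simeq\mcA_A$ --- is precisely the proof given in those references. The details you leave as steps to verify (full faithfulness via the free-module case $N=X\triangleright M$ plus idempotent splitting, essential surjectivity via $L\mapsto L\otimes_A M$) are exactly the standard ones, so there is no gap.
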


\medskip

Let $\mcA$-$\operatorname{Mod}$ be the bicategory of $\mcA$-module categories, functors and natural transformations and $\operatorname{Alg}_\mcA$ be the bicategory of semisimple algebras in $\mcA$, their bimodules and bimodule morphisms.
Building on Proposition~\ref{prp:modcats-algs} one gets:
\begin{prp}
\label{prp:Alg_Mod_equiv}
Let $A,B\in\mcA$ be semisimple algebras, $M,N\in\mcA$ be $A$-$B$-bimodules and $[f\colon M\ra N]$ a bimodule morphism.
The functor $\operatorname{Alg}_\mcA$ $\ra$ $\mcA$-$\operatorname{Mod}$ defined on
\begin{equation}
\label{eq:alg_modcat_functor}
\begin{array}{llll}
\text{objects:} &  A  & \mapsto & \mcA_A\\
\text{1-morphisms:} & M & \mapsto & [- \otimes_A M\colon \mcA_A \ra \mcA_B]\\
\text{2-morphisms:} & f & \mapsto & \{\id_L \otimes_A f\}_{L \in \mcA_A}
\end{array}
\end{equation}
is an equivalence of bicategories.
\end{prp}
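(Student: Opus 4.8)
The plan is to apply the criterion recalled above: a functor of bicategories is an equivalence iff it is essentially surjective on objects and induces an equivalence on each hom-category. Essential surjectivity is immediate from Proposition~\ref{prp:modcats-algs}, since every $\mcA$-module category is equivalent to $\mcA_A$ for some semisimple algebra $A$, and $\mcA_A$ is precisely the image of $A$ under the functor. The remaining, and main, task is to show that for semisimple algebras $A,B$ the induced functor
\[
\operatorname{Alg}_\mcA(A,B)\longrightarrow \operatorname{Fun}_\mcA(\mcA_A,\mcA_B),\qquad M\mapsto(-\otimes_A M),\quad f\mapsto\{\id_L\otimes_A f\}_L,
\]
from $A$-$B$-bimodules to $\mcA$-module functors (with module natural transformations) is an equivalence of categories. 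This is an internal version of the Eilenberg--Watts theorem.

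For essential surjectivity on $1$-morphisms I would reconstruct a bimodule from a module functor $G\colon\mcA_A\to\mcA_B$ by setting $M:=G(A)$, where $A$ is the regular right module. It carries a right $B$-action as an object of $\mcA_B$, and a left $A$-action obtained by applying $G$ to the multiplication $\mu\colon A\triangleright A\to A$ (a morphism in $\mcA_A$) and precomposing with the module-functor isomorphism $A\otimes G(A)\cong G(A\triangleright A)$; one then checks the bimodule axioms. For general $L\in\mcA_A$, applying $G$ to the action $\rho_L\colon L\triangleright A\to L$ and using the module structure yields a natural map $L\otimes M\to G(L)$ that coequalises the pair $L\otimes A\otimes M\rightrightarrows L\otimes M$ defining $L\otimes_A M$, hence factors through a natural transformation $L\otimes_A M\to G(L)$.

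The main obstacle is to prove that this comparison map is an isomorphism. Here I would use that every right $A$-module $L$ is a retract of the free module $L\triangleright A=L\otimes A$, via $\id_L\otimes\eta$ and $\rho_L$. On free modules the comparison map is invertible, because the module-functor structure gives $G(X\triangleright A)\cong X\triangleright G(A)=X\otimes M\cong(X\otimes A)\otimes_A M$; since $G$ and $-\otimes_A M$ are additive and the comparison is natural, the isomorphism on $L\triangleright A$ restricts along the idempotent to an isomorphism on $L$. This yields $G\cong -\otimes_A M$ as module functors.

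Full faithfulness on $2$-morphisms follows by evaluation at $A$. The assignment $f\mapsto\{\id_L\otimes_A f\}$ is injective because $\id_A\otimes_A f=f$ under $A\otimes_A M\cong M$; it is surjective because for a module natural transformation $\eta\colon -\otimes_A M\Rightarrow -\otimes_A N$ the morphism $\eta_A\colon M\to N$ is a bimodule map (right $B$-linearity is automatic, left $A$-linearity is the module-transformation condition), and naturality forces $\eta_L=\id_L\otimes_A\eta_A$ first on free modules and then, by the retraction, on all of $\mcA_A$. The coherence data making the original assignment a functor of bicategories --- the associativity constraint from $(L\otimes_A M)\otimes_B N\cong L\otimes_A(M\otimes_B N)$ and the unit constraint from $-\otimes_A A\cong\Id$ --- are standard, and I would only verify their compatibility.
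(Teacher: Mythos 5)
Your overall strategy is the right one, and it matches what the paper implicitly delegates to the literature: the paper gives no argument beyond ``building on Proposition~\ref{prp:modcats-algs}'', which handles essential surjectivity exactly as you do, and the statement about hom-categories is the internal Eilenberg--Watts theorem proved in \cite{Os,EGNO} by the kind of argument you outline. However, one step in your write-up fails as stated, and you use it twice (essential surjectivity on $1$-morphisms and surjectivity on $2$-morphisms): the claim that $L$ is a retract of the free module $L \triangleright A$ \emph{via} $\id_L \otimes \eta$ and $\rho_L$. The map $\id_L \otimes \eta \colon L \to L \otimes A$ is not a morphism of right $A$-modules: the right action on $L \triangleright A$ is $\id_L \otimes \mu$, and
\begin{equation*}
(\id_L \otimes \mu) \circ \bigl( (\id_L \otimes \eta) \otimes \id_A \bigr) = \id_{L\otimes A}
\neq (\id_L \otimes \eta)\circ \rho_L
\end{equation*}
in general. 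So this is not a retraction in $\mcA_A$, and your argument needs it to be: the comparison map $\tau_L \colon L \otimes_A M \to G(L)$ is only natural with respect to morphisms of $\mcA_A$, and $- \otimes_A \id_M$ is only defined on such morphisms, so you cannot ``restrict along the idempotent'' $(\id_L\otimes\eta)\circ\rho_L$, which is likewise not a morphism in $\mcA_A$.

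The gap is easy to close with the hypotheses at hand. Since $A$ is semisimple, $\mcA_A$ is a semisimple category by definition; $\rho_L\colon L\triangleright A \to L$ is an epimorphism in $\mcA_A$ (it is split epi in $\mcA$ by unitality), and in a semisimple category every epimorphism splits, so there exists \emph{some} $A$-module map $s\colon L \to L\triangleright A$ with $\rho_L\circ s=\id_L$. Your retract argument then goes through verbatim with $s$ in place of $\id_L\otimes\eta$: a natural transformation that is invertible on $L\triangleright A$ and exhibits $\tau_L$ as a retract of $\tau_{L\triangleright A}$ is invertible on $L$. (Alternatively: in the semisimple setting every $\opC$-linear functor between module categories is exact, so both $G$ and $-\otimes_A M$ carry the bar coequalizer $L\triangleright A\triangleright A \rightrightarrows L\triangleright A \to L$ in $\mcA_A$ to coequalizers in $\mcA_B$, and comparing the two gives the isomorphism $L\otimes_A M \cong G(L)$ directly.) The same repair is needed in your $2$-morphism argument, where you again propagate from free modules to all modules ``by the retraction''. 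With this correction the proof is complete and is essentially the standard argument behind the result the paper cites.
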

The following is a convenient criterion to compare the module categories of two fusion categories $\mcA$ and $\mcB$, see \cite{Mu1} and \cite[Thm.\,3.1]{ENO2}:
\begin{prp}
\label{prp:ZA_eq_ZB_means_Morita_eq}
The bicategories $\mcA$-$\operatorname{Mod}$ and $\mcB$-$\operatorname{Mod}$ are equivalent if and only if $\mcZ(\mcA) \simeq \mcZ(\mcB)$ as braided fusion categories.
\end{prp}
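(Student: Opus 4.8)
This is the M\"uger--ENO characterisation of categorical Morita equivalence, and the natural plan is to prove the two implications by opposite strategies, the recurring theme being that one must track the \emph{braided} structure and not merely the underlying fusion category.

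For the implication $\mcA$-$\operatorname{Mod}\simeq\mcB$-$\operatorname{Mod}$ $\Rightarrow$ $\mcZ(\mcA)\simeq\mcZ(\mcB)$, the plan is to realise the Drinfeld centre as an intrinsic invariant of the bicategory $\mcA$-$\operatorname{Mod}$. The regular module category $\mcA$ is a distinguished object but is not preserved by a bicategory equivalence, so instead I would use that $\mcZ(\mcA)$ is braided equivalent to the braided category $Z(\mcA\text{-}\operatorname{Mod})$ of pseudonatural endotransformations of the identity pseudofunctor of $\mcA$-$\operatorname{Mod}$ (with modifications as morphisms). Unwinding this at the regular module category identifies such a transformation with an $\mcA$-bimodule endofunctor of $\mcA$, i.e.\ with an object of the category of bimodule endofunctors of the regular bimodule, which is $\mcZ(\mcA)$ as already noted above; the braiding on $Z(\mcA\text{-}\operatorname{Mod})$ is the Eckmann--Hilton braiding coming from the interchange of horizontal and vertical composition, and it matches the half-braiding of $\mcZ(\mcA)$. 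Since $Z(-)$ of a bicategory is manifestly preserved by bicategory equivalences, $\mcA$-$\operatorname{Mod}\simeq\mcB$-$\operatorname{Mod}$ yields a braided equivalence $\mcZ(\mcA)\simeq Z(\mcA\text{-}\operatorname{Mod})\simeq Z(\mcB\text{-}\operatorname{Mod})\simeq\mcZ(\mcB)$. Equivalently, and perhaps more concretely, one argues that the equivalence exhibits $\mcB$ as a dual category $\mcA^*_\mcM=\operatorname{Fun}_\mcA(\mcM,\mcM)$ for some indecomposable $\mcA$-module category $\mcM$ and then invokes the theorem that $\mcZ(\mcA^*_\mcM)\simeq\mcZ(\mcA)$ as braided categories.

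For the converse, the plan is to reconstruct $\mcA$ from the pair consisting of its centre and a distinguished Lagrangian algebra. Generalising Example~\ref{ex:lag-alg} to an arbitrary fusion category via $\FPdim$, there is a canonical connected Lagrangian algebra $L_\mcA\in\mcZ(\mcA)$ (the image of $\opid$ under the right adjoint of the forgetful functor) with $(\mcZ(\mcA))_{L_\mcA}\simeq\mcA$ as fusion categories. The decisive input, which I would take from \cite{DMNO} and \cite{ENO2}, is the correspondence between Lagrangian algebras in a non-degenerate braided fusion category $\mcZ$ and indecomposable module categories over any fusion category whose centre is $\mcZ$, under which the category of $L$-modules $\mcZ_L$ realises the associated dual fusion category. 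Given a braided equivalence $\Phi\colon\mcZ(\mcA)\xrightarrow{\sim}\mcZ(\mcB)$, I would transport the Lagrangian algebra: $\Phi(L_\mcA)$ is Lagrangian in $\mcZ(\mcB)$, so it corresponds to an indecomposable $\mcB$-module category $\mcM$ whose dual is $(\mcZ(\mcB))_{\Phi(L_\mcA)}\simeq(\mcZ(\mcA))_{L_\mcA}\simeq\mcA$. Thus $\mcA\simeq\mcB^*_\mcM$, i.e.\ $\mcA$ and $\mcB$ are categorically Morita equivalent, and the invertible bimodule category attached to $\mcM$ induces the sought equivalence $\mcA$-$\operatorname{Mod}\simeq\mcB$-$\operatorname{Mod}$ by standard two-dimensional Morita theory.

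The main obstacle in both directions is precisely the braiding. In the forward direction one must check that the interchange-induced braiding on $Z(\mcA\text{-}\operatorname{Mod})$ is genuinely the half-braiding of $\mcZ(\mcA)$ (and not its reverse in a way that fails to cancel on both sides), and that the identification is natural enough to be transported along an arbitrary equivalence; in the backward direction one must know that the Lagrangian-algebra/module-category dictionary, and in particular the equivalence $\mcZ_L\simeq(\text{dual category})$, is functorial with respect to braided equivalences of $\mcZ$. These are exactly the statements packaged in \cite{Mu1} and \cite[Thm.\,3.1]{ENO2}; a self-contained treatment of the braiding compatibility is the part requiring genuine work.
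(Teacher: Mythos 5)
The paper offers no proof of this proposition: it is quoted as known, with references to \cite{Mu1} and \cite[Thm.\,3.1]{ENO2}, so there is no internal argument to compare yours against. Your proposal is a correct reconstruction of the proof strategy of those cited sources --- the forward direction via the Drinfeld centre as a bicategorical invariant (equivalently, via dual categories and the braided equivalence $\mcZ(\mcA^*_\mcM)\simeq\mcZ(\mcA)$ of M\"uger/Schauenburg), the converse by transporting the canonical Lagrangian algebra $L_\mcA$ along the given braided equivalence and invoking the Lagrangian-algebra/module-category dictionary of \cite{DMNO} --- so it takes essentially the same route as the literature the paper relies on, with the genuine subtleties (braiding conventions, functoriality of the dictionary) correctly identified.
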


\medskip

Let $\mcC$ be a pivotal fusion category and let $A\in\mcC$ be a symmetric $\D$-separable Frobenius algebra.
The question what extra structure the $\mcC$-module category $\mcC_A$ has in this situation was addressed in \cite{Schm}, where the following notion was introduced:

\begin{defn}
A \textit{module trace} on a module category $\mcM$ of a pivotal fusion category $\mcC$ is a collection of linear maps $\Theta_M\colon \End_\mcM M \ra \opC$, $M\in\mcM$, such that for all $X\in\mcC$ and $M,N\in\mcM$:
\begin{enumerate}[i)]
\item One has
\begin{equation}
\Theta_M(g \circ f) = \Theta_N(f \circ g)
\end{equation}
for all $f \in \mcM(M,N)$ and $g \in \mcM(N,M)$.
\item The following pairing is non-degenerate:
\begin{equation}
\label{eq:mod_trace_def_omega}
\omega_{M,N}\colon \mcM(M,N) \otimes_\opC \mcM(N,M) \ra \opC, \quad
f \otimes_\opC g \mapsto \Theta_M(g \circ f) \, .
\end{equation}
\item 
For 
$f\in\End_\mcM (X \triangleright M)$ let $\overline{f}\colon M \ra M$ be given by
$$
\overline{f} :=
(\ev_X \triangleright \id_M)\circ(\id_{X^*} \triangleright f)\circ(\coevt_X \triangleright \id_M)
$$
(we have omitted coherence isomorphisms for readability).
Then
\begin{equation}
\Theta_{X\triangleright M}(f) = \Theta_M ( \overline{f} ).
\end{equation}
\end{enumerate}
\end{defn}
Module traces satisfy the following uniqueness property \cite[Prop.\,4.4]{Schm}:
\begin{prp} \label{prp:mod_trace_unique}
If an indecomposable module category $\mcM$ has a module trace $\Theta$, then any other module trace $\Theta'$ will be proportional to $\Theta$, i.e.\ $\Theta' = z\cdot\Theta$, $z\in\opC^\times$.
\end{prp}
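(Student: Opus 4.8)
The plan is to show that a module trace is completely determined by its values on the identity endomorphisms of the simple objects of $\mcM$, and then to use indecomposability together with condition~iii) to pin these values down up to a single common scalar. First I would reduce to simples. Since $\mcM$ is finitely semisimple, set $d_i := \Theta_i(\id_i)$ for each simple object $i$ of $\mcM$. For an arbitrary $M\in\mcM$, choose a decomposition into simples with inclusions $\iota_k$ and projections $\pi_k$ satisfying $\pi_k\circ\iota_l = \d_{kl}\,\id$ and $\sum_k \iota_k\circ\pi_k = \id_M$. For $f\in\End_\mcM M$, the trace property~i), applied with $f\rightsquigarrow \pi_k\circ f$ and $g\rightsquigarrow\iota_k$, gives $\Theta_M(\iota_k\circ\pi_k\circ f) = \Theta_{i_k}(\pi_k\circ f\circ\iota_k)$, so that
\begin{equation*}
\Theta_M(f) = \sum_k \Theta_{i_k}(\pi_k\circ f\circ\iota_k) = \sum_k c_k(f)\,d_{i_k},
\end{equation*}
where $\pi_k\circ f\circ\iota_k = c_k(f)\,\id_{i_k}$ (as $i_k$ is simple) and the scalars $c_k(f)$ depend only on $f$ and the chosen decomposition, not on $\Theta$. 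Hence two module traces agreeing on all $d_i$ agree everywhere, and it suffices to prove $d_i' = z\,d_i$ for a single $z\in\opC^\times$ independent of $i$. Moreover, taking $M=N=i$ simple in~ii), the pairing $\omega_{i,i}$ on $\mcM(i,i)\cong\opC$ sends $\id_i\otimes_\opC\id_i\mapsto d_i$; non-degeneracy of a pairing of one-dimensional spaces forces $d_i\neq 0$, and likewise for any other module trace, so the ratios $d_i'/d_i$ are well-defined elements of $\opC^\times$.

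The heart of the argument is to show these ratios are all equal, and this is where indecomposability enters. Recall that $\mcM$ is indecomposable iff the graph on the simples of $\mcM$ with an edge between $i$ and $j$ whenever $j$ is a direct summand of $X\triangleright i$ for some $X\in\mcC$ is connected; otherwise its connected components would split $\mcM$ as a direct sum of module subcategories. It therefore suffices to show $d_j'/d_j = d_i'/d_i$ whenever $j$ is a summand of $X\triangleright i$. Fixing such an embedding $\iota\colon j\ra X\triangleright i$, $\pi\colon X\triangleright i\ra j$ with $\pi\circ\iota=\id_j$, put $e:=\iota\circ\pi\in\End_\mcM(X\triangleright i)$. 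Condition~iii) gives $\Theta_{X\triangleright i}(e) = \Theta_i(\overline{e})$. On the left, the trace property~i) yields $\Theta_{X\triangleright i}(\iota\circ\pi)=\Theta_j(\pi\circ\iota)=d_j$. On the right, $\overline{e}\in\End_\mcM(i)=\opC\,\id_i$, so $\overline{e}=\l\,\id_i$ for a scalar $\l$ assembled purely from $e$, the action $\triangleright$, and the duality morphisms $\ev_X$, $\coevt_X$ — in particular $\l$ does not involve $\Theta$. Thus $d_j=\l\,d_i$, and the identical computation for the second trace gives $d_j'=\l\,d_i'$ with the same $\l$; since $d_i,d_j\neq 0$ we have $\l\neq 0$ and hence $d_j'/d_j = d_i'/d_i$. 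Connectivity then forces all ratios to equal a common $z\in\opC^\times$, and by the first step $\Theta'=z\,\Theta$.

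I expect the main obstacle to lie in the careful handling of condition~iii): one must confirm that $\overline{e}$ is computed with no reference whatsoever to $\Theta$, so that the scalar $\l$ is genuinely shared by $\Theta$ and $\Theta'$, and that the coherence isomorphisms suppressed in the definition of $\overline{(\,\cdot\,)}$ do not obstruct the equalities $\Theta_{X\triangleright i}(e)=\Theta_i(\overline e)$ and $\Theta_{X\triangleright i}(\iota\circ\pi)=\Theta_j(\pi\circ\iota)$. The connectivity characterisation of indecomposability is standard, but should be stated so that a single edge of the graph suffices to transport the ratio between two simples.
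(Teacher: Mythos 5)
Your proof is correct, but note that the paper itself does not prove this proposition at all: it simply cites \cite[Prop.\,4.4]{Schm}, so there is no in-paper argument to compare against. Your proposal supplies a valid self-contained proof, and each step checks out: the reduction $\Theta_M(f)=\sum_k c_k(f)\,d_{i_k}$ is a correct use of linearity and property~i); non-degeneracy of $\omega_{i,i}$ on the one-dimensional space $\End_\mcM(i)$ does force $d_i\neq 0$; the relation ``$j$ is a summand of $X\triangleright i$ for some $X$'' is symmetric (by rigidity, $j\subseteq X\triangleright i$ iff $i\subseteq X^*\triangleright j$) and its connectedness is indeed equivalent to indecomposability; and the transport step $d_j=\l\,d_i$, $d_j'=\l\,d_i'$ with the same $\Theta$-independent scalar $\l$ is exactly right, with $\l\neq 0$ forced by $d_j\neq 0$. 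Your worry about coherence isomorphisms is harmless: they are fixed categorical data entering the definition of $\overline{e}$ identically for both traces. For comparison, the argument in the cited reference can be packaged more slickly: the ratios $z_i:=d_i'/d_i$ assemble into a natural endomorphism $\{z_i\,\id_i\}$ of $\Id_\mcM$, and compatibility with the $\mcC$-action (your condition~iii)) makes it a \emph{module} natural endomorphism; since $\Id_\mcM$ is the tensor unit of the dual category $\operatorname{Fun}_\mcC(\mcM,\mcM)$, which is fusion precisely when $\mcM$ is indecomposable, Schur's lemma gives $z_i\equiv z$. This is the same mathematical content as your edge-by-edge connectivity argument, just phrased through the dual fusion category rather than through the graph on simples.
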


The module category $\mcC_A$ has a module trace $\Theta_M(f) := \tr_l f$.
Combining Proposition~\ref{prp:modcats-algs} with \cite[Sec.\,6]{Schm}, one has: 
\begin{prp}
\label{prp:mod_trace_vs_Frob_algs}
Let $\mcM$ be a $\mcC$-module category with module trace.
Then there exists a symmetric $\D$-separable Frobenius algebra $A$, such that $\mcM \simeq \mcC_A$ as module categories.
\end{prp}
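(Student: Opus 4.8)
The plan is to realise the required algebra as an internal endomorphism algebra and then to promote the given module trace to a Frobenius form on it. First I would fix a generator $G\in\mcM$ --- for instance the direct sum of one representative of each isomorphism class of simple objects --- so that every object of $\mcM$ is a retract of $X\triangleright G$ for some $X\in\mcC$. Setting $A:=\underline{\End}(G)$ to be the internal endomorphism algebra, with multiplication $\mu$ given by composition of internal homs and unit given by the coevaluation $\opid_\mcC\ra\underline{\End}(G)$, the proof of Proposition~\ref{prp:modcats-algs} (Ostrik's theorem \cite{Os}) identifies $\mcM\simeq\mcC_A$ as module categories via $N\mapsto\underline{\Hom}(G,N)$, and this $A$ is semisimple. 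It then remains to equip this particular $A$ with the structure of a symmetric $\D$-separable Frobenius algebra and to check that the standard module trace of the resulting $\mcC_A$ reproduces the given $\Theta$.

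Next I would build the Frobenius form from $\Theta$. Under the identification $\End_\mcM G\cong\mcC(\opid_\mcC,A)$ coming from the defining adjunction $\mcM(X\triangleright G,N)\cong\mcC(X,\underline{\Hom}(G,N))$, the linear map $\Theta_G$ determines, via the nondegenerate composition pairing $\mcC(A,\opid_\mcC)\otimes_\opC\mcC(\opid_\mcC,A)\ra\opC$ of the semisimple category $\mcC$, a counit $\varepsilon\colon A\ra\opid_\mcC$ characterised by $\varepsilon\circ a=\Theta_G(a)$ for all $a\colon\opid_\mcC\ra A$. Precomposing $\mu$ with $\varepsilon$ then yields a pairing $\varepsilon\circ\mu\colon A\otimes A\ra\opid_\mcC$. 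Axiom i) (cyclicity of $\Theta$) translates into the Frobenius compatibility \eqref{eq:Frob_property}, which determines a coproduct $\D\colon A\ra A\otimes A$, while axiom ii) (nondegeneracy of the pairings $\omega_{G,N}$ in \eqref{eq:mod_trace_def_omega}) is precisely the nondegeneracy of $\varepsilon\circ\mu$. Hence $A$ becomes a Frobenius algebra in $\mcC$.

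Finally I would verify the two remaining properties. Since $A=\underline{\End}(G)$ is semisimple, it is separable, and the coproduct $\D$ determined by $\Theta$ can be normalised so that $\mu\circ\D=\id_A$, which is the $\D$-separability of \eqref{eq:sep_and_symm_properties}; this is exactly the normalisation for which the induced module trace on $\mcC_A$ is the left trace $f\mapsto\tr_l f$. The symmetry condition in \eqref{eq:sep_and_symm_properties} is where the pivotal structure of $\mcC$ genuinely enters, and I expect it to be the main obstacle: one must match the two cyclic orderings of the Frobenius form $\varepsilon\circ\mu$ against the left/right duality of $\mcC$, and this is governed by axiom iii), which relates $\Theta_{X\triangleright G}$ to the duality data $\ev_X$ and $\coevt_X$. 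Once this pivotal compatibility is carried out, the module trace $f\mapsto\tr_l f$ on $\mcC_A$ agrees on each indecomposable summand of $\mcM$ with a scalar multiple of the transported $\Theta$ by the uniqueness statement of Proposition~\ref{prp:mod_trace_unique}; rescaling the Frobenius form blockwise makes them equal. The detailed computations, in particular the symmetry verification and the dictionary between module traces and symmetric $\D$-separable Frobenius structures, are carried out in \cite[Sec.\,6]{Schm}.
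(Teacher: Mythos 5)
Your proposal is correct and takes essentially the same route as the paper, whose entire proof is to combine Proposition~\ref{prp:modcats-algs} (Ostrik's theorem) with the module-trace/Frobenius-algebra dictionary of \cite[Sec.\,6]{Schm}; your reconstruction of how $\Theta_G$ induces the counit on $\underline{\End}(G)$, with the symmetry and $\D$-separability checks deferred to that reference, is exactly what that citation covers. One minor misattribution: the Frobenius property of $\varepsilon\circ\mu$ follows from associativity together with the nondegeneracy supplied by axiom ii) (it is axioms i) and iii) that feed into the \emph{symmetry} of the resulting Frobenius structure), but this does not affect the validity of your argument.
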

\begin{defn}
\label{def:Modtr_FrobalgCD}
We let $\mcC$-$\operatorname{Mod}^{\tr}$ be the bicategory of $\mcC$-module categories with module trace, module functors and natural transformations and $\FrobAlg_\mcC^{\operatorname{s},\D}$ be the bicategory of symmetric $\D$-separable Frobenius algebras in $\mcC$, their bimodules and bimodule morphisms.
\end{defn}
Note that in the definition of $\mcC$-$\operatorname{Mod}^{\tr}$ we do not require the module functors to be compatible with module traces (so-called \textit{isometric functors}, see \cite[Def.\,3.10]{Schm}).
By Propositions~\ref{prp:Alg_Mod_equiv} and \ref{prp:mod_trace_vs_Frob_algs} we get
\begin{prp}
\label{prp:FrobalgCD_equiv_Modtr}
The functor $\FrobAlg_\mcC^{\operatorname{s},\D} \ra \mcC$-$\operatorname{Mod}^{\tr}$ defined analogously as in \eqref{eq:alg_modcat_functor} is an equivalence of bicategories.
\end{prp}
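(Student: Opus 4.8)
The plan is to exhibit the functor
\[
\Phi\colon \FrobAlg_\mcC^{\operatorname{s},\D} \ra \mcC\text{-}\operatorname{Mod}^{\tr}
\]
defined exactly as in \eqref{eq:alg_modcat_functor}, sending a symmetric $\D$-separable Frobenius algebra $A$ to the module category $\mcC_A$ (now equipped with the module trace $\Theta_M(f)=\tr_l f$), a bimodule $M$ to the functor $-\otimes_A M$, and a bimodule morphism $f$ to the obvious natural transformation. The first point to check is that $\Phi$ is well defined at the level of objects: given a symmetric $\D$-separable Frobenius algebra $A$, the category $\mcC_A$ is semisimple (as noted after the definition of $\mcACA$), carries the stated module trace, and this trace satisfies conditions i)--iii) of the module-trace definition. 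This is essentially the content of \cite{Schm} quoted just before Proposition~\ref{prp:mod_trace_vs_Frob_algs}, so I would cite it rather than recompute.

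The core of the argument is to verify the three standard criteria that make a functor of bicategories an equivalence, namely essential surjectivity on objects, essential surjectivity on $1$-morphisms within each hom-category, and full faithfulness on $2$-morphisms. For essential surjectivity on objects I would invoke Proposition~\ref{prp:mod_trace_vs_Frob_algs} directly: every $\mcC$-module category $\mcM$ with module trace is equivalent to $\mcC_A$ for some symmetric $\D$-separable Frobenius algebra $A$, which says precisely that every object in the target lies (up to equivalence) in the image of $\Phi$. For the statements about $1$- and $2$-morphisms I would reduce to Proposition~\ref{prp:Alg_Mod_equiv}: the functor defined in \eqref{eq:alg_modcat_functor} on the bicategory $\operatorname{Alg}_\mcA$ of \emph{all} semisimple algebras and their bimodules is already known to be an equivalence onto $\mcC$-$\operatorname{Mod}$. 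Since $\Phi$ is the restriction of that functor to the full subbicategory whose objects are the symmetric $\D$-separable Frobenius algebras, the induced functors on hom-categories
\[
\FrobAlg_\mcC^{\operatorname{s},\D}(A,B)\ra \mcC\text{-}\operatorname{Mod}^{\tr}\big(\mcC_A,\mcC_B\big)
\]
agree with the corresponding hom-functors of the equivalence of Proposition~\ref{prp:Alg_Mod_equiv}, and hence are themselves equivalences of categories, giving full faithfulness.

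I expect the main obstacle to be not the formal bicategory bookkeeping but the identification of hom-categories on the target side: one must argue that for two such Frobenius algebras $A,B$, the category of $A$-$B$-bimodules in $\mcC$ coincides with the category of module functors $\mcC_A\ra\mcC_B$ of \emph{module categories with module trace}, as opposed to module functors of bare module categories. Because Definition~\ref{def:Modtr_FrobalgCD} explicitly does not require module functors to be isometric, the relevant hom-category on the right is just $\mcC\text{-}\operatorname{Mod}(\mcC_A,\mcC_B)$ with no extra compatibility imposed on the $2$-morphisms either; thus no trace-preservation condition needs to be checked, and the hom-categories are literally the same as in Proposition~\ref{prp:Alg_Mod_equiv}. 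Making this observation precise -- that forgetting the module trace on objects induces an isomorphism of hom-categories onto the already-known equivalence of Proposition~\ref{prp:Alg_Mod_equiv} -- is the key step, after which the conclusion follows formally. I would close by remarking that essential surjectivity has in effect been promoted from the level of objects (Proposition~\ref{prp:mod_trace_vs_Frob_algs}) to the level of the whole bicategory using the general criterion, recalled in Section~\ref{subsec:bicats_of_domain_walls}, that a functor of bicategories is an equivalence iff it is essentially surjective on objects and induces equivalences on all hom-categories.
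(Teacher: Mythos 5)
Your proposal is correct and follows exactly the paper's route: the paper's proof is precisely the combination of Proposition~\ref{prp:Alg_Mod_equiv} (restricted to the full subbicategory of symmetric $\D$-separable Frobenius algebras, which handles the hom-categories) with Proposition~\ref{prp:mod_trace_vs_Frob_algs} (essential surjectivity), together with the observation made after Definition~\ref{def:Modtr_FrobalgCD} that $1$-morphisms in $\mcC$-$\operatorname{Mod}^{\tr}$ carry no isometry condition. You have merely spelled out in detail what the paper compresses into a one-line citation, including the key point that the hom-categories of $\mcC$-$\operatorname{Mod}^{\tr}$ and $\mcC$-$\operatorname{Mod}$ literally coincide.
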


\subsection{Domain walls between Reshetikhin-Turaev theories}
\label{subsec:domain_walls_btw_RT_theories}
Recall from Section~\ref{subsec:bicats_of_domain_walls} that surface defects between two bulk theories in a $3$-dimensional defect TQFT can be collected into a bicategory which has such surface defects as objects, Wilson lines separating two surface defects as $1$-morphisms and point insertions as $2$-morphisms.
The bicategory of surface defects between two bulk theories of RT type was studied from this point of view in \cite{FSV}.
The surface defects in question were only considered locally, i.e.\ not as part of a stratification of a compact manifold and without providing a construction of a defect TQFT.
Indeed, one can define a natural bicategory of surface defects separating two bulk theories labelled by MFCs $\mcC$ and $\mcD$, even before an exact construction of a complete defect TQFT is known.
Concretely, the algebraic description given in \cite{FSV} is:
Let $\mcW$ be a pivotal fusion category such that there is a braided equivalence
\begin{equation}
\label{eq:Witt_triv_motivation}
F\colon \mcC \boxtimes \widetilde{\mcD} \xra{\sim} \mcZ(\mcW) \, .
\end{equation}
Then the bicategory of surface defects is $\mcW$-$\operatorname{Mod}$.
A defect between the two bulk theories can therefore only exist if the modular categories $\mcC$ and $\mcD$ describing them are Witt equivalent (see Definition~\ref{def:Witt_eq} below).

\medskip

Let us review the argument of \cite{FSV}.
Having a surface defect $S$, the labels for Wilson lines within it form a category $\mcW$.
The topological nature of Wilson lines implies that $\mcW$ is monoidal and pivotal; we also assume it to be fusion.
Having an $X\in\mcC$ labelled Wilson line in the bulk, one can adiabatically bring it next to $S$ so that it becomes a defect Wilson line $F_\ra(X)\in\mcW$.
Since it is merely ``hovering'' next to $S$, it can cross to the other side of any defect Wilson line $W\in\mcW$, i.e.\ one has a family of morphisms $F_\ra(X) \otimes W \ra W \otimes F_\ra(X)$, which assemble into a half-braiding for $F_\ra(X)$.
This implies the existence of a functor of braided categories $F_\ra\colon\mcC \ra \mcZ(\mcW)$.
An analogous argument then gives a functor $F_\la\colon \widetilde{\mcD} \ra \mcZ(\mcW)$.
For $X\in\mcC$, $Y\in\mcD$ we then define the functor in \eqref{eq:Witt_triv_motivation} by $F(X\otimes Y) := F_\ra(X) \otimes F_\la(Y)$ and assume it to be an equivalence.

One can then consider two parallel defect Wilson lines.
One of them has defect condition $S$ to both sides.
According to the preceding discussion, it is labelled by an object $W\in\mcW$.
The other defect Wilson line separates defect conditions $S$ and $S'$ and is labelled by an object $W'$ of a category $\mcW_{S'}$.
Fusing the two Wilson lines yields a new Wilson line which must be labelled by an object $W\triangleright W' \in \mcW_{S'}$.
Since this should also be compatible with pointlike insertions on the Wilson lines, we get an action $\mcW \times \mcW_{S'}\to \mcW_{S'}$, so that $\mcW_{S'}$ gets the structure of a $\mcW$-module category.
Thus, the bicategory of defect conditions is equivalent to the bicategory of $\mcW$-module categories.

\medskip

Let $S$ and $S'$ be two surface defects as in the setting above with the corresponding categories $\mcW$ and $\mcV$ of surface Wilson lines.
Note that by the argument above one must have a braided equivalence $\mcZ(\mcW) \simeq \mcZ(\mcV)$.
By Proposition~\ref{prp:ZA_eq_ZB_means_Morita_eq} this implies that the bicategories $\mcW$-$\operatorname{Mod}$, $\mcV$-$\operatorname{Mod}$ of module categories of $\mcW$ and $\mcV$ are equivalent.
The choice of $\mcW$ and the equivalence as in \eqref{eq:Witt_triv_motivation} therefore serves as a choice of ``coordinates'' which help describing the abstract bicategory of surface defects in a more concrete way.

\subsection{Module traces from sphere defects}
\label{subsec:mod_traces_from_spheres}
Let us now extend the treatment in \cite{FSV} by considering a hypothetical defect TQFT $Z$, whose bulk theories are of RT type, whose defects are described as in Section~\ref{subsec:domain_walls_btw_RT_theories}.
As we have seen, surface defects separating theories labelled by $\mcC$ and $\mcD$ are described by module categories of a fusion category $\mcW$, for which there is an equivalence $\mcC \boxtimes \widetilde{\mcD} \simeq \mcZ(\mcW)$.
We show that under a reasonable assumption on $Z$, the module categories in question are of a particular type, namely they have a module trace.

\medskip

\begin{figure}
\captionsetup{format=plain, indention=0.5cm}
\centering
\begin{subfigure}[b]{0.45\textwidth}
    \centering
    \pic[1.25]{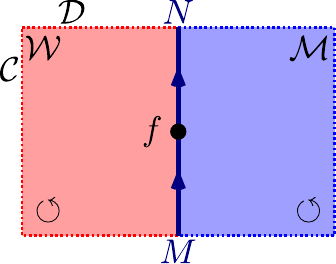}
    \caption{}
    \label{fig:lines_on_def_morphlab}
\end{subfigure}
\begin{subfigure}[b]{0.45\textwidth}
    \centering
    \pic[1.25]{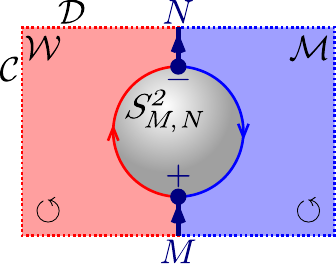}
    \caption{}
    \label{fig:lines_on_def_bdrylab}
\end{subfigure}
\caption{
(a) Point defect labelled by a morphism $f\in\mcM(M,N)$.
(b) Removing a small neighbourhood of the defect point produces a boundary component $S^2_{M,N}$.
}
\label{fig:lines_on_def}
\end{figure}
Before proceeding, let us focus for a moment on possible labels for point insertions on a surface defect.
In particular, consider a point insertion separating two lines between two surface defects, one labelled with $\mcW$ (as a module category over itself), and the other by a $\mcW$-module category $\mcM$.
By the previous section, the lines are labelled by module functors $\mcW \ra \mcM$ (equivalently, objects $M, N\in\mcM$ which correspond to module functors $-\triangleright M, -\triangleright N \colon \mcW \ra \mcM$). The point insertion is labelled by a natural transformation between the module functors (equivalently, a morphism in $f\colon M\ra N$ which corresponds to the natural transformation $\{\id_W \triangleright f\}_{W\in\mcW}$), see Figure~\ref{fig:lines_on_def_morphlab}.
Another way of labelling such point insertions is the point defect completion mentioned in Section~\ref{subsec:constr_of_defect_TQFT}:
Remove a small open ball surrounding the point in question.
It leaves a boundary component which is a stratified $2$-sphere $S^2_{M,N}$, to which the defect TQFT $Z$ assigns a vector space $Z(S^2_{M,N})$ (see Figure~\ref{fig:lines_on_def_bdrylab}).
The point insertions can then be labelled by vectors in this vector space.
The assumption we make on $Z$ is that these sets of labels are the same in the sense that the map
\begin{equation}
\label{eq:S2MN_sos}
\begin{array}{ccc}
\mcM(M,N) & \ra & Z(S^2_{M,N})\\
f & \mapsto & Z \left( \pic[1.25]{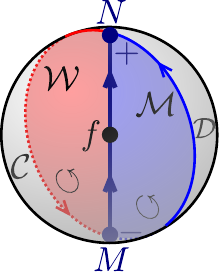} \right)
\end{array},
\end{equation}
is an isomorphism of vector spaces. Here the image of $f\in\mcM(M,N)$ is to be understood as follows:
The picture in the argument of $Z$ represents a stratified solid ball, seen as a bordism $\varnothing\ra S^2_{M,N}$.
Consequently, evaluation gives a linear map $\opC \ra Z(S^2_{M,N})$ whose image of $1\in\opC$ produces a vector in $Z(S^2_{M,N})$.
We remark that in the case $\mcC = \mcD = \mcM = \mcW$, i.e.\ when there is no surface defect, the map is indeed an isomorphism and one recovers the state space that the RT TQFT assigns to a $2$-sphere with two punctures.

\medskip

\begin{figure}
\captionsetup{format=plain, indention=0.5cm}
\centering
\begin{subfigure}[b]{0.45\textwidth}
    \centering
    \pic[1.25]{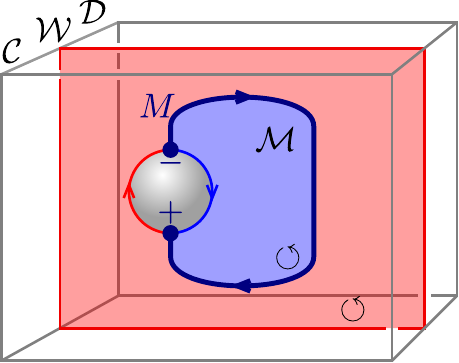}
    \caption{}
    \label{fig:B3M_bord}
\end{subfigure}
\begin{subfigure}[b]{0.45\textwidth}
    \centering
    \pic[1.25]{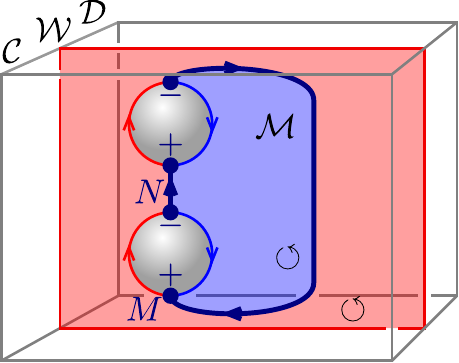}
    \caption{}
    \label{fig:PMN_bord}
\end{subfigure}
\caption{
(a) The manifold $B^3_M$: The boundary of the cube represents a single point in $S^3 \simeq \opR^3 \cup \{\infty\}$, the rest of which corresponds to the interior of the cube.
The sphere in the middle represents the incoming boundary component $S^2_{M,M}$.
(b) The manifold $P_{M,N}$: Similar to (a), but has two boundary components $S^2_{M,N}$ and $S^2_{N,M}$. Note that as a stratified $3$-manifold it is isomorphic to the cylinder $S^2_{M,N} \times [0,1]$.
}
\label{fig:bordisms_from_S3}
\end{figure}
We now define a module trace $\Theta$ on a $\mcW$-module category $\mcM$ describing a surface defect as follows:
For $M\in\mcM$, let $B^3_M$ be the stratified sphere $S^3$ with a removed open ball as in Figure~\ref{fig:B3M_bord}.
It has a single boundary component $S^2_{M,M}$, which we assume to be an incoming boundary.
Evaluating with the TQFT we obtain the linear map
\begin{equation}
Z(B^3_{M})\colon Z(S^2_{M,M}) \ra Z(\varnothing) ~,
\end{equation}
which by precomposing with \eqref{eq:S2MN_sos} can be seen as a map $\End_\mcM M \ra \opC$.
\begin{prp}
\label{prp:mod_trace}
The collection of maps $\Theta_M := Z(B^3_M)$, $M\in\mcM$ is a module trace on $\mcM$.
\end{prp}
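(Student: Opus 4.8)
The plan is to verify the three defining properties of a module trace by translating each into a statement about $Z$ evaluated on a \emph{closed} stratified $3$-sphere, and then to invoke the two fundamental features of any TQFT: invariance under orientation-preserving diffeomorphisms of the underlying bordism, and non-degeneracy of the pairing between the state spaces of a surface and its orientation-reversal. First I would set up a dictionary. Given $\phi\in\End_\mcM M$, gluing the solid ball that represents the image of $\phi$ under \eqref{eq:S2MN_sos} (a bordism $\varnothing\ra S^2_{M,M}$) into the incoming boundary of $B^3_M$ (Figure~\ref{fig:B3M_bord}) produces the closed stratified sphere $S^3$ carrying an unknotted defect circle labelled $M$ with the single insertion $\phi$; by functoriality of $Z$ this identifies $\Theta_M(\phi)$ with $Z$ of that sphere. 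More generally, for $f\in\mcM(M,N)$ and $g\in\mcM(N,M)$ the morphism $g\circ f$ factors through $N$, so $\omega_{M,N}(f,g)=\Theta_M(g\circ f)$ is the invariant of the stratified $S^3$ whose defect circle is divided by the two point insertions $f$ and $g$ into one arc labelled $M$ and one arc labelled $N$.

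Property i) is then immediate: the decorated circle built from $f$ and $g$ is the same in both cases, and the spheres computing $\Theta_M(g\circ f)$ and $\Theta_N(f\circ g)$ differ only in which arc of this circle carries the ball that was removed to form $B^3$. An isotopy of $S^3$ sliding the removed ball from the $M$-arc to the $N$-arc identifies the two stratified spheres, so invariance of $Z$ gives $\Theta_M(g\circ f)=\Theta_N(f\circ g)$. For property ii), I would observe that the gluing above is exactly the canonical evaluation bordism that pairs $Z(S^2_{M,N})$ with $Z(\overline{S^2_{M,N}})\cong Z(S^2_{N,M})$: the image vectors of $f$ and $g$ under \eqref{eq:S2MN_sos} are glued along $S^2_{M,N}$ to yield the decorated $S^3$ of the first paragraph. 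Since this duality pairing is non-degenerate in any TQFT and \eqref{eq:S2MN_sos} is an isomorphism, $\omega_{M,N}$ is non-degenerate as well; here one must also check that reversing the orientation of $S^2_{M,N}$ exchanges the labels $M$ and $N$, so that indeed $S^2_{N,M}\cong\overline{S^2_{M,N}}$.

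The main work lies in property iii). For $X\in\mcW$ and $f\in\End_\mcM(X\triangleright M)$, the line labelled $X\triangleright M$ is realised geometrically as the $M$-defect line with a parallel $\mcW$-Wilson line $X$, running inside the surface defect, fused onto it; thus $\Theta_{X\triangleright M}(f)$ is $Z$ of an $S^3$ carrying this fused circle with the insertion $f$. I would then apply an isotopy of $S^3$ that drags the $X$-strand over the pole of the defect $S^2$, detaching it from the $M$-circle and closing it into a small loop. Because the $X$-line lives in the surface defect, this manipulation is a genuine isotopy of stratified manifolds, and the closing-up introduces precisely the morphisms $\coevt_X$ and $\ev_X$ appearing in the definition of $\overline{f}$; what remains is the $M$-circle carrying the insertion $\overline{f}$, whose invariant is $\Theta_M(\overline{f})$.

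The delicate points, and the main obstacle, are to make this sliding rigorous as a diffeomorphism of the stratified bordism and to match the resulting coherence and (co)evaluation data exactly with $\overline{f}$, including the correct duals and orientations. This is the defect analogue of the sphericality argument that, in the undefected case $\mcC=\mcD=\mcM=\mcW$, reduces $\Theta_M$ to the Reshetikhin--Turaev trace $\tr$; one should expect the same graphical manipulations to go through here, with the $S^2$-shaped surface defect playing the role of the ambient sphere over which the $X$-line is carried.
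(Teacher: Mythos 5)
Your proposal is correct and takes essentially the same route as the paper's proof: property i) via an isotopy of the closed stratified $S^3$ exchanging which arc of the defect circle carries the removed ball, property ii) by identifying $\omega_{M,N}$ with the TQFT duality pairing coming from the cylinder over $S^2_{M,N}$ (the paper merely spells out the copairing $\Omega_{M,N} = Z(P_\Omega)$ and the zigzag identities that you cite as the standard non-degeneracy statement for any functorial TQFT), and property iii) by sliding the $\mcW$-Wilson line over the pole of the defect sphere, which is exactly the paper's homeomorphism moving the $W$-line over the point at infinity in $S^3$. The level of rigour is also comparable, since the paper's own argument for iii) is the same picture-level homeomorphism you describe.
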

\begin{proof}
The properties of a module trace can be shown by using the fact that upon evaluation with $Z$ only the topological configuration of defects is important:
\begin{enumerate}[i)]
\item
A simple deformation yields a homeomorphism of stratified manifolds
\begin{equation}
\hspace{-1cm}
\pic[1.25]{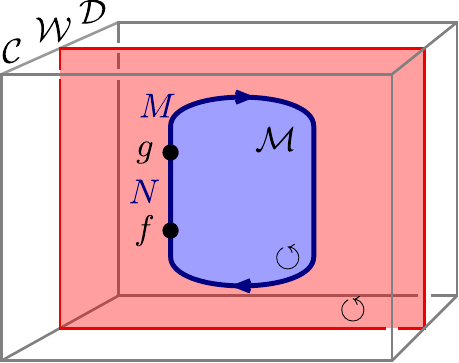} \simeq
\pic[1.25]{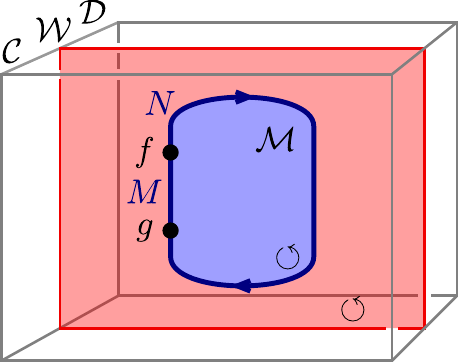} ~.
\end{equation}
By using \eqref{eq:S2MN_sos} on morphisms $g \circ f$ and $f \circ g$ one gets
$\Theta_M(g \circ f) = \Theta_N(f \circ g)$.
\item 
To show that the pairing $\omega_{M,N}$ is non-degenerate, it is enough to provide a copairing
\begin{equation}
\Omega_{M,N} \colon \opC \ra \mcM(N, M) \otimes_\opC \mcM(M, N),
\end{equation}
such that
\begin{equation}
\label{eq:inv_pairing}
\begin{array}{c}
( \omega_{M,N} \otimes_\opC \id_{M,N} ) \circ (\id_{M,N} \otimes_\opC \Omega_{M,N}) = \id_{M,N} ~, \\
(\id_{N,M} \otimes_\opC \omega_{M,N}) \circ (\Omega_{M,N} \otimes_\opC \id_{N,M}) = \id_{N,M} ~.
\end{array}
\end{equation}
Let $P_{M,N}$ be the stratified manifold as in Figure~\ref{fig:PMN_bord}.
Interpreting both its boundary components as incoming, one gets a bordism
$P_\omega\colon S^2_{M,N} \sqcup S^2_{N,M} \ra \varnothing$.
Together with the identification in \eqref{eq:S2MN_sos}, the evaluation with the TQFT $Z(P_\omega)$ gives a pairing $\mcM(N, M) \otimes_\opC \mcM(M, N) \ra \opC$ which by \eqref{eq:mod_trace_def_omega} and the definition of $\Theta$ is equal to $\omega_{M,N}$.
Similarly, interpreting the boundary components of $P_{M,N}$ as outgoing one gets a bordism $P_\Omega\colon \varnothing \ra S^2_{N,M} \sqcup S^2_{M,N}$.
We  define $\Omega_{M,N} := Z(P_\Omega)$.
It remains to show that the identities \eqref{eq:inv_pairing} hold.
They follow from the functoriality of the TQFT $Z$ and homeomorphisms
\begin{equation}
P_\omega \cup_{S^2_{M,N}} P_\Omega \simeq S^2_{N,M} \times [0,1] \, , \quad
P_\omega \cup_{S^2_{N,M}} P_\Omega \simeq S^2_{M,N} \times [0,1] \, ,
\end{equation}
i.e.\ gluing two copies of $P_{M,N}$ across a suitable boundary component gives a cylinder.
\item
This follows from the homeomorphism of stratified manifolds obtained by moving the $W$-line over the point at infinity in $S^3$:
\begin{equation}
\label{eq:mod_trace_cond_3_proof}
\hspace{-1cm}
\pic[1.25]{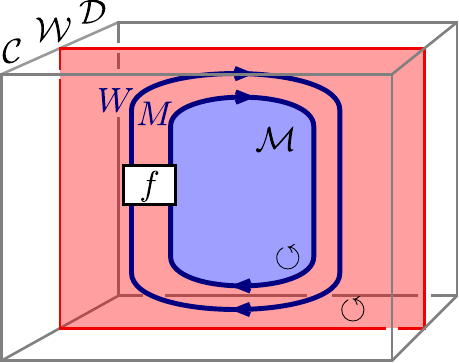} \simeq
\pic[1.25]{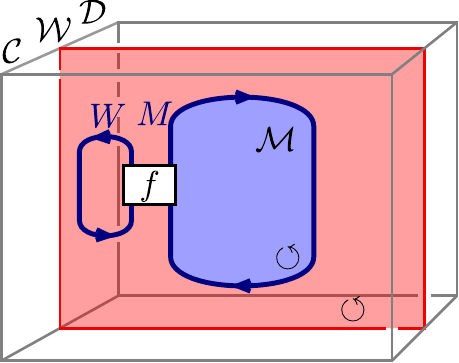} \, .
\end{equation}
\end{enumerate}
\end{proof}
A noteworthy consequence of this result is that the fusion category $\mcW$ is spherical. Indeed, $\mcW$ as an indecomposable module category over itself has a module trace given by the left categorical trace, which is then proportional to the one in Proposition~\ref{prp:mod_trace} due to Proposition~\ref{prp:mod_trace_unique}.
The homeomorphism \eqref{eq:mod_trace_cond_3_proof} for $\mcM=\mcW$ and $M=\opid_\mcW$ yields
\begin{equation}
\Theta_W f = \Theta_{W^*} f^* ~, \quad
W\in\mcW ~,~ f\in\End_\mcW W ~.
\end{equation}

\section{Comparison of the two descriptions}
\label{sec:4}

\subsection{Witt equivalence of modular categories}
\label{subsec:Witt_eq}
\begin{defn}
\label{def:Witt_eq}
Two MFCs $\mcC$ and $\mcD$ are \textit{Witt equivalent} if there exists a spherical fusion category $\mcS$ and a ribbon equivalence $\mcC\boxtimes\widetilde{\mcD}\simeq\mcZ(\mcS)$, which we call a \textit{Witt trivialisation}.
\end{defn}
\begin{rem}
\label{rem:non-deg_vs_modular_Witt_equiv}
\begin{enumerate}[i), wide, labelwidth=0pt, labelindent=0pt]
\item
The notion of Witt equivalence was introduced in \cite{DMNO} for \textit{non-degenerate braided} fusion categories, i.e.\ without an assigned ribbon structure.
There are hence two Witt groups: that of modular fusion categories and that of non-degenerate braided fusion categories.
For the application in this paper, the version with ribbon structure is the relevant one.
\item
Witt equivalence is indeed an equivalence relation on modular categories.
The set of equivalence classes forms the so-called \textit{Witt group} whose multiplication is induced by the Deligne product, the unit is given by the class consisting of Drinfeld centres and the inverses are given by braiding reversal due to existence of the equivalence \eqref{eq:CC_to_ZC_eq}, see \cite{DMNO}.
\end{enumerate}
\end{rem}

As was already mentioned in Section~ \ref{subsec:domain_walls_btw_RT_theories}, the notion of Witt equivalence turns out to be of central importance in the analysis of surface defects in $3$-dimensional TQFTs.
The following characterisation of Witt equivalence is formulated in \cite[Prop.\,5.15]{DMNO} for non-degenerate braided fusion categories; we recall the proof to show that the argument applies in the ribbon case:
\begin{prp}
\label{prp:Witt_eq_via_algs}
Two MFCs $\mcD$, $\mcE$ are Witt equivalent iff there exists a modular fusion category $\mcC$ and two commutative haploid symmetric $\D$-separable Frobenius algebras $A,B\in\mcC$ such that $\mcD \simeq \mcC_A^\loc$ and $\mcE \simeq \mcC_B^\loc$ as ribbon fusion categories.
\end{prp}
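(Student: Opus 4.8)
The plan is to prove both implications, reducing each to the tools already assembled: the Frobenius--Perron dimension count of Theorem~\ref{thm:CAloc_is_modular}, the Lagrangian algebra of Example~\ref{ex:lag-alg}, and the equivalence~\eqref{eq:CC_to_ZC_eq} do the main work. I use throughout that Witt equivalence is an equivalence relation (Remark~\ref{rem:non-deg_vs_modular_Witt_equiv}), that a MFC is in particular a spherical fusion category (Proposition~\ref{prp:ribb_iff_spherical}), and that Deligne products and braiding reversals of MFCs are again MFCs.

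For the direction ``Witt equivalent $\Rightarrow$ such $\mcC,A,B$ exist'' I would build the common category by hand. Given a Witt trivialisation $\mcD\boxtimes\widetilde{\mcE}\simeq\mcZ(\mcS)$, set $\mcC:=\mcD\boxtimes\widetilde{\mcE}\boxtimes\mcE$, a MFC. By Example~\ref{ex:lag-alg} the spherical fusion category $\mcS$ supplies a commutative haploid symmetric $\D$-separable Frobenius algebra $L_\mcS\in\mcZ(\mcS)\simeq\mcD\boxtimes\widetilde{\mcE}$ with $(\mcD\boxtimes\widetilde{\mcE})^\loc_{L_\mcS}\simeq\Vect$; likewise, since $\widetilde{\mcE}\boxtimes\mcE\simeq\mcZ(\mcE)$ by \eqref{eq:CC_to_ZC_eq} and $\mcE$ is spherical fusion, there is such an algebra $L_\mcE\in\widetilde{\mcE}\boxtimes\mcE$ with local modules $\Vect$. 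Setting $A:=\opid_\mcD\boxtimes L_\mcE$ and $B:=L_\mcS\boxtimes\opid_\mcE$ --- which remain commutative haploid symmetric $\D$-separable Frobenius, these properties being preserved when structure morphisms are tensored with identities --- one computes $\mcC^\loc_A\simeq\mcD\boxtimes(\widetilde{\mcE}\boxtimes\mcE)^\loc_{L_\mcE}\simeq\mcD$ and $\mcC^\loc_B\simeq(\mcD\boxtimes\widetilde{\mcE})^\loc_{L_\mcS}\boxtimes\mcE\simeq\mcE$ as ribbon categories, using the compatibility of $(-)^\loc$ with Deligne products and that local modules inherit their twist from $\mcC$.

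For the converse, the key step is that passing to local modules does not change the Witt class, i.e.\ $\mcC^\loc_A$ is Witt equivalent to $\mcC$. The plan is to exhibit the concrete Witt trivialisation $\mcC\boxtimes\widetilde{\mcC^\loc_A}\simeq\mcZ(\mcC_A)$, where $\mcC_A$ is the category of $A$-modules: since $A$ is haploid its unit is simple, so $\mcC_A$ is fusion, and since $\mcC$ is spherical and $A$ symmetric $\D$-separable Frobenius, $\mcC_A$ carries a spherical structure (cf.\ Proposition~\ref{prp:mod_trace_vs_Frob_algs}), so that $\mcZ(\mcC_A)$ is a legitimate Witt trivialisation. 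I would construct the evident braided functor $\mcC\boxtimes\widetilde{\mcC^\loc_A}\ra\mcZ(\mcC_A)$ (the free-module functor $X\mapsto X\otimes A$ on the first factor, equipped with the half-braiding coming from the braiding of $\mcC$, and the canonical functor $\mcC^\loc_A\ra\mcZ(\mcC_A)$ on the second), invoke Proposition~\ref{prp:non-deg_fully-faith} to get full faithfulness from non-degeneracy of the source, and match dimensions: by Theorem~\ref{thm:CAloc_is_modular} and the standard identity $\FPdim(\mcC_A)=\FPdim(\mcC)/\FPdim(A)$ one has $\FPdim(\mcC)\cdot\FPdim(\mcC^\loc_A)=\FPdim(\mcC)^2/\FPdim(A)^2=\FPdim(\mcC_A)^2=\FPdim(\mcZ(\mcC_A))$, so Proposition~\ref{prp:equiv_ito_FPdims} upgrades the functor to an equivalence. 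Running the same argument for $B$ yields $[\mcC]=[\mcC^\loc_A]=[\mcD]$ and $[\mcC]=[\mcC^\loc_B]=[\mcE]$, whence $\mcD\simeq\mcC^\loc_A$ and $\mcE\simeq\mcC^\loc_B$ are Witt equivalent by transitivity.

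The hard part will be the central equivalence $\mcC\boxtimes\widetilde{\mcC^\loc_A}\simeq\mcZ(\mcC_A)$ of the converse: although the dimension bookkeeping is routine, one must check that the chosen functor is genuinely \emph{ribbon} rather than merely braided, i.e.\ that the spherical structure on $\mcC_A$ induced from $\mcC$ and the twists on $\mcC^\loc_A$ are compatible under the functor. This is exactly the point at which the ribbon refinement of the non-degenerate braided statement of \cite[Prop.\,5.15]{DMNO} must be supplied; I would handle it by tracking twists through the free-module functor, using $\theta_A=\id_A$ and the symmetry of $A$, and noting that any residual braiding-reversal only replaces $\mcZ(\mcC_A)$ by the centre of the opposite spherical fusion category, which is still a valid Witt trivialisation.
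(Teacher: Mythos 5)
Your forward direction coincides with the paper's proof: the paper sets $\mcC := \mcE \boxtimes \mcZ(\mcS)$, takes $B = \opid_\mcE \boxtimes B'$ for the Lagrangian algebra $B'\in\mcZ(\mcS)$ of Example~\ref{ex:lag-alg}, and pulls back a Lagrangian algebra $A'\in\mcZ(\mcE)$ through the ribbon equivalence $\mcD\boxtimes\mcZ(\mcE)\simeq\mcE\boxtimes\mcZ(\mcS)$; your $\mcC = \mcD\boxtimes\widetilde{\mcE}\boxtimes\mcE$ with $A = \opid_\mcD\boxtimes L_\mcE$, $B = L_\mcS\boxtimes\opid_\mcE$ is the same construction transported along the Witt trivialisation. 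Where you genuinely diverge is the converse: the paper's proof of Proposition~\ref{prp:Witt_eq_via_algs} does not spell it out at all, leaving it to the later explicit two-algebra trivialisation $\mcC_A^\loc\boxtimes\widetilde{\mcC_B^\loc}\simeq\mcZ(\mcF)$ of Proposition~\ref{prp:CACB_ZACB_equiv} together with the component-category reduction \eqref{eq:ZCAB_to_ZF_funct} and Appendix~\ref{app:Witttriv_ribbon_funct}. You instead prove the one-sided trivialisation $\mcC\boxtimes\widetilde{\mcC_A^\loc}\simeq\mcZ(\mcC_A)$ (the $B=\opid_\mcC$ special case of the same dolphin-type construction) and conclude by transitivity of Witt equivalence. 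This buys you a simpler target: since $A$ is haploid, the unit $A$ of $\mcC_A$ is simple, so no multifusion/component issues arise; the price is that you must invoke transitivity of the Witt relation, which the paper's two-algebra version avoids.

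One point in your converse needs repair. Sphericality of the fusion category $(\mcC_A,\otimes_A)$ does \emph{not} follow from Proposition~\ref{prp:mod_trace_vs_Frob_algs}: module traces concern $\mcC_A$ as a $\mcC$-module category, not the pivotal monoidal structure of $\mcC_A$ itself, and sphericality of such module/bimodule categories can genuinely fail --- this is exactly the content of Remark~\ref{rem:ACB_not_spherical}, where $\mcACB$ (whose unit $A\otimes B$ is symmetric $\D$-separable Frobenius but not haploid in $\mcC$) is non-spherical while its centre is spherical. What saves your argument is haploidness of $A$ in $\mcC$: the trace computation of Appendix~\ref{app:Witttriv_ribbon_funct}, specialised to $B=\opid_\mcC$ (so that $F=A$), shows $\tr_l f = \tr_r f$ for endomorphisms in $\mcC_A$, hence $\mcC_A$ is spherical and $\mcZ(\mcC_A)$ is a legitimate ribbon Witt trivialisation. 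With that substitution (and the standard identity $\FPdim(\mcC_A)=\FPdim(\mcC)/\FPdim(A)$, which is external to the paper but indeed standard), your dimension count and the fully-faithfulness argument via Propositions~\ref{prp:non-deg_fully-faith} and \ref{prp:equiv_ito_FPdims} go through, and the remaining twist-preservation check is at the same level of detail as the paper's own ``can also be checked to preserve braidings and twists'' in the proof of Proposition~\ref{prp:CACB_ZACB_equiv}.
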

\begin{proof}
Having a Witt trivialisation $\mcD\boxtimes\widetilde{\mcE}\xra{\sim}\mcZ(\mcS)$ for some spherical fusion category $\mcS$, one can take the Deligne product of both sides with $\mcE$ and use the equivalence \eqref{eq:CC_to_ZC_eq} to get a ribbon equivalence $F\colon\mcD \boxtimes \mcZ(\mcE) \ra \mcE \boxtimes \mcZ(\mcS)$ and then set $\mcC := \mcE \boxtimes \mcZ(\mcS)$.
Let $B'\in\mcZ(\mcS)$ be the Lagrangian algebra in Example~\ref{ex:lag-alg}.
One sets $B = \opid_\mcE \boxtimes B'$.
Similarly, one picks a Lagrangian algebra $A'$ in $\mcZ(\mcE)$ and sets $A = F(\opid_\mcD \boxtimes A')$.
\end{proof}

\medskip

For the rest of the section, let $\mcC$, $A$, $B$ be as in the proposition above.
We will look for an explicit Witt trivialisation of $\mcC_A^\loc \boxtimes \widetilde{\mcC_B^\loc}$.
Let us consider the semisimple category $\mcACB$ with $A$-$B$-bimodules in $\mcC$ as objects and bimodule maps as morphisms.
We equip it with the following monoidal product: for each $M,N\in\mcACB$ we set
\begin{equation}
\label{eq:prod_over_AB}
M {{}_A \otimes_B} N := \im \pic[1.25]{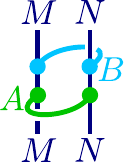} \, .
\end{equation}
Note that the monoidal unit $\opid_{\mcACB} := A \otimes B$ is in general not a simple object and
$\mcACB$ is therefore a multifusion category.

The category $\mcACB$ has a natural pivotal structure with the (co)evaluation morphisms for each $M\in\mcACB$ being
\begin{equation}
\ev_M = \pic[1.25]{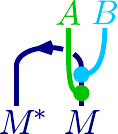}, \,
\coev_M = \pic[1.25]{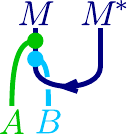}, \,
\evt_M = \pic[1.25]{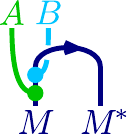}, \,
\coevt_M = \pic[1.25]{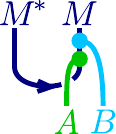}.
\end{equation}
\begin{rem}
\begin{enumerate}[i)]
\item
In the case $A=B$, the category $\mcACA$ has $A$-$A$-bimodules as objects but its tensor product is \textit{not} the tensor produce over $A$.
Since the latter is the more usual tensor product on $\mcACA$, we stress this point in order to avoid confusion.
\item
$\mcACB$ is equivalent to the category $\mcC_{A\otimes B}$ of right modules of $A\otimes B$ in $\mcC$.
In general, to define a tensor product of right modules of a $\D$-separable Frobenius algebra it needs a half-braiding with respect to which it is commutative.
In the case of the algebra $A\otimes B\in\mcC$ it is given by the ``dolphin'' half-braiding \eqref{eq:dolphin_half-br}. Note that $A\otimes B$ is in general not commutative with respect to the braiding of $\mcC$.
\end{enumerate}
\end{rem}

The multifusion category $\mcACB$ is defined in such a way that algebras and their bimodules in it correspond to algebras and their bimodules in $\mcC$ over $(A,B)$ in the sense Definitions~\ref{def:Frob_algs_over_AB} and \ref{def:bimodules_over_AB}.
In particular we have 
(see also Definitions~\ref{def:FrobAlg-over-AB} and \ref{def:Modtr_FrobalgCD}):

\begin{lem}
\label{lem:ABalgs_are_CAB}
There is an equivalence of bicategories $\FrobAlg^{\operatorname{s},\D}_{\mcC,A,B} \simeq \FrobAlg^{\operatorname{s},\D}_{\mcACB}$.
\end{lem}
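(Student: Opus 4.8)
The plan is to exhibit a functor of bicategories $\Phi\colon \FrobAlg^{\operatorname{s},\D}_{\mcC,A,B} \to \FrobAlg^{\operatorname{s},\D}_{\mcACB}$ which is the identity on underlying objects of $\mcC$ and merely reinterprets the structure morphisms of a Frobenius algebra over $(A,B)$ as structure morphisms internal to the multifusion category $\mcACB$, and then to show that $\Phi$ is bijective on objects and an equivalence on each hom-category, hence a biequivalence. The conceptual heart of the argument is the universal property of the tensor product ${}_A\otimes_B$ of \eqref{eq:prod_over_AB}: since $M\,{}_A{\otimes}_B\,N$ is defined as the image of an idempotent on $M\otimes N$ built from the $A$- and $B$-actions (equivalently, as the relative tensor product over $A\otimes B$ using the dolphin half-braiding \eqref{eq:dolphin_half-br}), a morphism out of $M\,{}_A{\otimes}_B\,N$ in $\mcACB$ is the same datum as a morphism out of $M\otimes N$ in $\mcC$ that factors through this idempotent, i.e.\ is suitably balanced.

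On objects, I would start from a symmetric $\D$-separable Frobenius algebra $F$ over $(A,B)$ in $\mcC$ in the sense of Definition~\ref{def:Frob_algs_over_AB}; its underlying $A$-$B$-bimodule is an object of $\mcACB$. The two product-compatibility relations of Definition~\ref{def:Frob_algs_over_AB} say precisely that the multiplication $\mu\colon F\otimes F\to F$ is balanced with respect to the $A$- and $B$-actions, so it factors uniquely through the defining projector as $\bar\mu\colon F\,{}_A{\otimes}_B\,F\to F$; together with the bimodule unit it yields a unit $\opid_{\mcACB}=A\otimes B\to F$. Associativity and unitality of $\bar\mu$ in $\mcACB$ follow from those of $\mu$ in $\mcC$. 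Dually, the coproduct-compatibility relations turn $\D$ and $\vareps$ into a coalgebra structure in $\mcACB$, and the Frobenius identities \eqref{eq:Frob_property} transport across the projector. Conversely, any (co)multiplication internal to $\mcACB$ pulls back, along the projection $F\otimes F\twoheadrightarrow F\,{}_A{\otimes}_B\,F$, to a morphism in $\mcC$ satisfying the relations of Definition~\ref{def:Frob_algs_over_AB}; this gives the inverse assignment, so $\Phi$ is a bijection on objects.

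The hom-categories match by the same mechanism. By Definition~\ref{def:bimodules_over_AB}, an $F_1$-$F_2$-bimodule over $(A,B)$ is an $A$-$B$-bimodule $M$ (an object of $\mcACB$) whose $F_1$- and $F_2$-actions are balanced and whose two induced $A$-$B$-structures from \eqref{eq:AB-action-from-F-action-1}--\eqref{eq:AB-action-from-F-action-2} agree; this is exactly the condition that the actions descend to morphisms $F_1\,{}_A{\otimes}_B\,M\to M$ and $M\,{}_A{\otimes}_B\,F_2\to M$ making $M$ an $F_1$-$F_2$-bimodule in $\mcACB$. Morphisms on both sides are $A$-$B$-bimodule maps in $\mcC$ intertwining the $F_i$-actions, so $\Phi$ is fully faithful on each $\FrobAlg^{\operatorname{s},\D}_{\mcC,A,B}(F_1,F_2)$. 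Finally, a short diagram manipulation identifies the relative tensor product $M\otimes_{F_2}N$ computed in $\mcACB$ (the image of the projector built from the $F_2$-action with respect to ${}_A\otimes_B$) with the one computed in $\mcC$ over $(A,B)$, and the units agree on the nose; hence $\Phi$ respects composition and units and is an equivalence of bicategories.

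The main obstacle is bookkeeping rather than conceptual: one must verify that the specific placement of over- and under-crossings in Definition~\ref{def:Frob_algs_over_AB} is precisely the balancing condition for factoring through the ${}_A\otimes_B$-projector, and that the normalisations in the $\D$-separability condition $\mu\circ\D=\id$ and in the symmetry condition match between $\mcC$ and $\mcACB$ once the explicit pivotal structure on $\mcACB$ is inserted. These are routine graphical-calculus checks, but they must be carried out carefully while tracking the dolphin half-braiding implicit in the monoidal structure $\mcACB\simeq\mcC_{A\otimes B}$.
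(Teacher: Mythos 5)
Your proposal is correct and follows essentially the same route as the paper's proof: both rest on the observation that the compatibility relations of Definition~\ref{def:Frob_algs_over_AB} are exactly the balancing conditions for the projector defining ${}_A\otimes_B$ in \eqref{eq:prod_over_AB}, so that (co)multiplications, units and module actions transport between $\mcC$ and $\mcACB$ by factoring through, respectively precomposing with, that projector. The paper merely presents the dictionary starting from an algebra in $\mcACB$ and is terser about the bimodule and hom-category comparison, but the mechanism is identical to yours.
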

\begin{proof}
Let $F\in\mcACB$ be an algebra.
Then $F$ is also an object in $\mcC$ and it is equipped with the multiplication given by $[F \otimes F \to F {{}_A \otimes_B} F \to F]$, where the first morphism is the projection to the tensor product in $\mcACB$ and the second morphism is the multiplication of $F$ in $\mcACB$. This multiplication in $\mcC$  satisfies the relations in Definition~\ref{def:Frob_algs_over_AB} because of how the tensor product in $\mcACB$ is defined (see \eqref{eq:prod_over_AB}).
One then takes the morphism $[\opid_\mcC \xra{\eta_A \otimes \eta_B} A \otimes B \xra{\eta_F} F]$ as the unit, where $\eta_A$, $\eta_B$ are the units of $A$ and $B$ and $\eta_F$ is the unit of $F$ in $\mcACB$.
To compare the bimodules it is enough to notice that the tensor product in $\mcACB$ ensures that the two $A$- and the two $B$-actions as in \eqref{eq:AB-action-from-F-action-1}, \eqref{eq:AB-action-from-F-action-2} coincide.
A similar argument applies to (symmetric, $\D$-separable) Frobenius algebras as well.
\end{proof}

\begin{prp}
\label{prp:CACB_ZACB_equiv}
The functor
\begin{equation}
\label{eq:CACB_ZACB_funct}
\mcC_A^\loc \boxtimes \widetilde{\mcC_B^\loc} \ra \mcZ(\mcACB), \qquad
M \boxtimes N \mapsto (M \otimes N, ~\g^\dol_{M,N}),
\end{equation}
is a ribbon equivalence.
Here $\g^\dol_{M,N}$ is an analogue of the ``dolphin'' half-braiding, defined for all $K\in\mcACB$ by
\begin{equation}
\g^\dol_{M,N,K} := \pic[1.25]{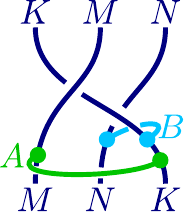} ~.
\end{equation}
\end{prp}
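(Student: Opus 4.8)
The plan is to verify that the functor --- call it $\Phi$ --- is a well-defined ribbon functor and then to deduce that it is an equivalence from the non-degeneracy of its source together with a Frobenius--Perron dimension count, exactly as the modular equivalence \eqref{eq:CC_to_ZC_eq} is obtained. The source $\mcC_A^\loc \boxtimes \widetilde{\mcC_B^\loc}$ is a modular fusion category: by Theorem~\ref{thm:CAloc_is_modular} both $\mcC_A^\loc$ and $\mcC_B^\loc$ are modular, braiding reversal preserves modularity, and the Deligne product of modular categories is again modular. In particular it is non-degenerate, so by Proposition~\ref{prp:non-deg_fully-faith} any braided functor out of it into a braided fusion category is automatically fully faithful.

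First I would check that $\Phi$ lands in $\mcZ(\mcACB)$ and respects all structure. The object $M\otimes N$ carries the evident $A$-$B$-bimodule structure (with $A$ acting through $M$ and $B$ through $N$), and the main point is that the formula for $\g^\dol_{M,N}$ descends to the relative tensor product ${}_A\otimes_B$ of \eqref{eq:prod_over_AB} and is a bimodule morphism: this is precisely where one uses that $M$ is a local $A$-module and $N$ a local $B$-module, so that the over/under-crossing prescription of $\g^\dol$ is compatible with the idempotent defining ${}_A\otimes_B$. Naturality and the hexagon axiom for $\g^\dol$, as well as the monoidal structure of $\Phi$, follow from the defining relations of a Frobenius algebra over $(A,B)$ in Definition~\ref{def:Frob_algs_over_AB} and are checked by the same graphical manipulations as in the $A=B=\opid$ case. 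That $\Phi$ is braided amounts to matching the product braiding $c^A\boxtimes (c^B)^{-1}$ on the source with the half-braiding $\g^\dol$, and the choice of crossings in \eqref{eq:dolphin_half-br} is made exactly so that this holds; preservation of twists is verified in the same direct manner, using that the target is ribbon (its sphericality, noted in Remark~\ref{rem:ACB_not_spherical}, makes $\mcZ(\mcACB)$ ribbon by Proposition~\ref{prp:ribb_iff_spherical}).

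To apply the two criteria above I must first ensure that $\mcZ(\mcACB)$ is a braided \emph{fusion} category even though $\mcACB$ has a non-simple unit. For this I would show that $\mcACB$ is an indecomposable multifusion category --- e.g.\ via the dominant induction functor $X\mapsto A\otimes X\otimes B$ from the indecomposable category $\mcC$ --- and invoke Proposition~\ref{prp:multifusion_center}, which then gives that $\mcZ(\mcACB)\simeq\mcZ((\mcACB)_{ii})$ is braided fusion with $\FPdim(\mcZ(\mcACB))=\FPdim((\mcACB)_{ii})^2$. Fully faithfulness of $\Phi$ now follows from non-degeneracy of the source, and it remains to compare dimensions. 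By Theorem~\ref{thm:CAloc_is_modular} the source has $\FPdim = \FPdim(\mcC)^2/\big((\FPdim A)^2(\FPdim B)^2\big)$, so by Proposition~\ref{prp:equiv_ito_FPdims} it suffices to show $\FPdim((\mcACB)_{ii}) = \FPdim(\mcC)/(\FPdim A\,\FPdim B)$.

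The main obstacle is exactly this last dimension count (equivalently, essential surjectivity of $\Phi$): one must control the Frobenius--Perron dimensions of the diagonal components $(\mcACB)_{ii}$ relative to $\mcC$, which requires understanding the decomposition of the unit $A\otimes B$ into simple bimodules (its number of summands being $\dim\mcC(A,B)$) and how $\FPdim$ behaves under the induction $X\mapsto A\otimes X\otimes B$, which is monoidal and hence preserves object dimensions. Once this value is established the dimensions match and $\Phi$ is a ribbon equivalence. The verification that $\g^\dol$ is well defined on ${}_A\otimes_B$ via locality is the other place where genuine, as opposed to purely formal, input is needed.
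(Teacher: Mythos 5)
Your skeleton for the ``explicit functor'' half of the argument matches the paper's: use locality of $M$ and $N$ to check that $\g^\dol_{M,N}$ is a half-braiding, use non-degeneracy of the source together with Proposition~\ref{prp:non-deg_fully-faith} to get full faithfulness, and conclude via the Frobenius--Perron criterion of Proposition~\ref{prp:equiv_ito_FPdims}. But the two inputs that make this criterion applicable --- that $\mcZ(\mcACB)$ is \emph{fusion} and that $\FPdim\,\mcZ(\mcACB)=\FPdim\big(\mcC_A^\loc\boxtimes\widetilde{\mcC_B^\loc}\big)$ --- are exactly what you do not establish. Your route to fusion-ness (indecomposability of $\mcACB$ via the dominant induction functor, then Proposition~\ref{prp:multifusion_center}) can be made to work, but the dimension equality, which you yourself flag as ``the main obstacle'', is left as an acknowledged gap with only a sketch: decompose the unit $A\otimes B$ and track $\FPdim$ through the induction $X\mapsto A\otimes X\otimes B$. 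This sketch is not a proof and its details are genuinely delicate: $\FPdim$ of objects in a multifusion category (non-simple unit!) must be handled componentwise, the slogan ``monoidal hence preserves object dimensions'' is not a safe citation in that setting, and even granting it, knowing $\FPdim(\Ind X)$ does not determine the needed quantity $\sum_k \FPdim(X_k)^2$ over the simples $X_k$ of one diagonal component $(\mcACB)_{ii}$; one would need an additional Frobenius-reciprocity/counting argument or a dual-category theorem adapted to bimodules over $(A,B)$.

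The paper bypasses this computation entirely by first producing an \emph{abstract} braided equivalence through the chain
\begin{equation*}
\mcC_A^\loc\boxtimes\widetilde{\mcC_B^\loc}\;\simeq\;(\mcC\boxtimes\widetilde{\mcC})^\loc_{A\boxtimes B}\;\simeq\;\mcZ(\mcC)^\loc_{A\otimes B}\;\simeq\;\mcZ(\mcC_{A\otimes B})\;=\;\mcZ(\mcACB)\,,
\end{equation*}
where the middle step uses modularity of $\mcC$ via \eqref{eq:CC_to_ZC_eq} and the last step is Schauenburg's theorem $\mcZ(\mcM)^\loc_{(C,\g)}\simeq\mcZ(\mcM_C)$ for a commutative algebra $(C,\g)\in\mcZ(\mcM)$. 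This single step delivers both facts you need --- $\mcZ(\mcACB)$ is fusion with the same Frobenius--Perron dimension as the source --- and yields indecomposability of $\mcACB$ as a corollary rather than an input. A secondary problem in your write-up: you invoke sphericality of $\mcZ(\mcACB)$, citing Remark~\ref{rem:ACB_not_spherical}, to argue the target is ribbon and that $\Phi$ preserves twists \emph{before} the equivalence is proved. That remark records a consequence of the proposition's proof, so this is circular as written: in the paper, sphericality of $\mcZ(\mcACB)$ is deduced \emph{from} the already-established equivalence (every simple is of the form $\mu\otimes\nu$, so left and right dimensions agree), then Proposition~\ref{prp:ribb_iff_spherical} makes the target ribbon, and only at that point is the braided equivalence upgraded to a ribbon one. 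The fix is simply to postpone the ribbon statement to the end, as the paper does.
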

\begin{proof}
First we show that $\mcC_A^\loc \boxtimes \widetilde{\mcC_B^\loc} \simeq \mcZ(\mcACB)$ as braided multifusion categories.
Indeed, one has obvious equivalences of braided categories
\begin{equation}
\label{eq:equivs_chain}
\mcC_A^\loc \boxtimes \widetilde{\mcC_B^\loc} \simeq
(\mcC \boxtimes \widetilde{\mcC})_{A \boxtimes B}^\loc \simeq
\mcZ(\mcC)^\loc_{A\otimes B}~.
\end{equation}
It is known (see \cite[Cor.\,4.5]{Sch}) that for any monoidal category $\mcM$ and a commutative algebra $(C,\g)\in\mcZ(\mcM)$, one has a braided equivalence $\mcZ(\mcM)^\loc_{(C,\g)} \simeq \mcZ(\mcM_C)$, where one uses the half-braiding $\g$ of $C$ to define the monoidal product in $\mcM_C$.
Applying this to the right hand side of \eqref{eq:equivs_chain} one obtains the result.

\smallskip
\noindent
Next we show that the explicit functor \eqref{eq:CACB_ZACB_funct} is a braided equivalence.
First, using that $M$ and $N$ are local modules one can check that $\g^\dol_{M,N,K}$ does indeed satisfy the hexagon identity.
Next, by the argument above, $\mcZ(\mcACB)$ is fusion and has the same Frobenius-Perron dimension as $\mcC_A^\loc \boxtimes \widetilde{\mcC_B^\loc}$.
The functor \eqref{eq:CACB_ZACB_funct} is then a braided functor between non-degenerate braided fusion categories and therefore fully faithful by Proposition~\ref{prp:non-deg_fully-faith}.
Hence by Proposition~\ref{prp:equiv_ito_FPdims} it is an equivalence.

\smallskip
\noindent
Finally, the explicit equivalence \eqref{eq:CACB_ZACB_funct} implies in particular that $\mcZ(\mcACB)$ is spherical: it is enough to check that the left and the right dimensions of simple objects coincide (see Proposition~\ref{prp:spherical_ito_simples}). We know that all simple objects are of the form $\mu\otimes\nu$ for $\mu\in\Irr_{\mcC_A^\loc}$, $\nu\in\Irr_{\mcC_B^\loc}$ for which the left/right dimensions in $\mcZ(\mcACB)$ are the product of those of $\mu$ and $\nu$ and are hence equal.
The category $\mcZ(\mcACB)$ is therefore also ribbon (see Proposition~\ref{prp:ribb_iff_spherical}).
The equivalence \eqref{eq:CACB_ZACB_funct} can also be checked to preserve braidings and twists and is therefore a ribbon equivalence.
\end{proof}

\begin{rem}
\label{rem:ACB_not_spherical}
The category $\mcACB$ need not be spherical, even though its Drinfeld centre $\mcZ(\mcACB)$ is spherical, as shown in the proof above.
Indeed, the left and the right traces of $f\in\End_{\mcACB}(M)$ in general need not be equal:
\begin{equation}
\tr_l f = 
\pic[1.25]{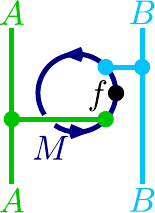} =
\pic[1.25]{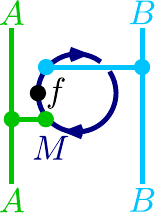} \neq
\pic[1.25]{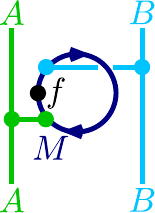} =
\tr_r f \, .
\end{equation}
\end{rem}

Proposition~\ref{prp:CACB_ZACB_equiv} in particular implies that $\mcZ(\mcACB)$ is fusion and hence that $\mcACB$ is indecomposable.
Proposition~\ref{prp:multifusion_center} then in turn implies the existence of a braided equivalence $\mcZ(\mcACB) \simeq \mcZ(\mcF)$ where 
\begin{equation}
\mcF :=(\mcACB)_{ii}
\end{equation}
any component of $\mcACB$, and which is then automatically fusion. An explicit equivalence is given by
\begin{equation}
\label{eq:ZCAB_to_ZF_funct}
\mcZ(\mcACB) \ra \mcZ(\mcF), \quad M \mapsto \opid_\mcF \, {{}_A \otimes_B} \, M \, .
\end{equation}
Indeed, we show in the Appendix~\ref{app:Witttriv_ribbon_funct} that \eqref{eq:ZCAB_to_ZF_funct} is in fact a ribbon functor.
Then, since $\mcZ(\mcACB)$ and $\mcZ(\mcF)$ are non-degenerate and have equal Frobenius-Perron dimensions, by Propositions~\ref{prp:equiv_ito_FPdims} and \ref{prp:non-deg_fully-faith} it is an equivalence.

The composition of the functors \eqref{eq:CACB_ZACB_funct} and \eqref{eq:ZCAB_to_ZF_funct} is then a Witt trivialisation for $\mcC_A^\loc\boxtimes\widetilde{\mcC_B^\loc}$.

\subsection{Equivalence of the two descriptions of domain walls}
Let $\mcC$ be a MFC and $A,B\in\mcC$ two haploid commutative symmetric $\D$-separable Frobenius algebras.
We have looked at two ways to describe domain walls separating two theories of RT type labelled by MFCs $\mcC_A^\loc$ and $\mcC_B^\loc$.
The first one comes from the construction of the defect TQFT $Z_\mcC$ defined in Section~\ref{subsec:constr_of_defect_TQFT}, while the second one is a physics inspired argument from \cite{FSV}, which was summarised in Section~\ref{subsec:domain_walls_btw_RT_theories} and expanded upon in Section~\ref{subsec:mod_traces_from_spheres}.
Below we argue that both descriptions are essentially the same in the sense that the resulting bicategories of surface defects are equivalent.

\medskip

We know from Section~\ref{subsec:bicats_of_domain_walls} and Lemma~\ref{lem:ABalgs_are_CAB} that the bicategory of domain walls obtained from the defect TQFT $Z_\mcC$ is equivalent to $\FrobAlg^{\operatorname{s},\D}_{\mcACB}$.
Following Section~\ref{subsec:Witt_eq}, let $\mcF$ be a component category of the multifusion category $\mcACB$ as in Proposition~\ref{prp:multifusion_cats}.
The existence of the Witt trivialisation $\mcC_A^\loc \boxtimes \widetilde{\mcC_B^\loc} \xra{\sim} \mcZ(\mcF)$ implies that the bicategory of domain walls according to \cite{FSV} is $\mcF$-$\operatorname{Mod}^{\tr}$.
We have:
\begin{thm}\label{thm:bicat-equiv}
One has the following commutative diagram of equivalences, inclusion and forgetful functors between bicategories:
\begin{equation}
\label{eq:bicategories_diagram}
\begin{tikzcd}
\mcF\operatorname{-Mod}^{\tr}                         \arrow[hookrightarrow]{d}{\text{include}}
& \operatorname{FrobAlg}_\mcF^{\operatorname{s},\D}   \arrow{d}{\text{forget}} \arrow{l}{\text{iii)}}[swap]{\sim}\arrow{r}{\sim}[swap]{\text{iv)}} 
& \operatorname{FrobAlg}_\mcACB^{\operatorname{s},\D} \arrow{d}{\text{forget}}\\
\mcF\operatorname{-Mod}     
& \operatorname{Alg}_\mcF   \arrow{l}{\text{i)}}[swap]{\sim}\arrow{r}{\sim}[swap]{\text{ii)}}
& \operatorname{Alg}_\mcACB
\end{tikzcd}.
\end{equation}
\end{thm}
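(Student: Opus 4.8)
The plan is to read off \emph{i)} and \emph{iii)} from results already at hand and to concentrate on the horizontal equivalences \emph{ii)}, \emph{iv)} together with the commutativity of the two squares. The functor \emph{i)} is exactly Proposition~\ref{prp:Alg_Mod_equiv} applied to $\mcF$, and \emph{iii)} is Proposition~\ref{prp:FrobalgCD_equiv_Modtr} applied to $\mcF$; the latter is legitimate because $\mcF=(\mcACB)_{ii}$, being a component of the pivotal multifusion category $\mcACB$, is itself pivotal fusion (duals and the pivotal structure restrict to a component, even though $\mcACB$ need not be spherical, cf.\ Remark~\ref{rem:ACB_not_spherical}). Both \emph{i)} and \emph{iii)} are given by the same rule $C\mapsto\mcF_C$ on underlying data, while $\mcF\operatorname{-Mod}^{\tr}\hookrightarrow\mcF\operatorname{-Mod}$ and $\operatorname{FrobAlg}^{\operatorname{s},\D}_\mcF\to\operatorname{Alg}_\mcF$ are the evident inclusion and forgetful functors; hence the left square commutes essentially by definition, and the whole statement reduces to constructing \emph{ii)}, \emph{iv)} and checking the right square.

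For \emph{ii)} I would use that $\mcACB$ is indecomposable multifusion (as recorded just after Proposition~\ref{prp:CACB_ZACB_equiv}) with $\mcF$ as one of its components. The idempotent $\opid_\mcF$ is a simple summand of the unit $A\otimes B$ with $\opid_\mcF\,{{}_A \otimes_B}\,\opid_\mcF\cong\opid_\mcF$ by Proposition~\ref{prp:multifusion_cats}~i), and it is a \emph{full} idempotent by indecomposability. The two-sided compression
\begin{equation}
R\colon\operatorname{Alg}_\mcACB\to\operatorname{Alg}_\mcF,\qquad C\mapsto\opid_\mcF\,{{}_A \otimes_B}\,C\,{{}_A \otimes_B}\,\opid_\mcF,
\end{equation}
that is, the categorified corner $C\mapsto eCe$, is then an equivalence, and \emph{ii)} is defined as a quasi-inverse of $R$ (Morita induction). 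Equivalently one may combine $\operatorname{Alg}_\mcACB\simeq\mcACB\operatorname{-Mod}$ and $\operatorname{Alg}_\mcF\simeq\mcF\operatorname{-Mod}$ from Proposition~\ref{prp:Alg_Mod_equiv} with the Morita equivalence $\mcACB\operatorname{-Mod}\simeq\mcF\operatorname{-Mod}$, which follows either from the general fact that the components of an indecomposable multifusion category are mutually Morita equivalent (see e.g.\ \cite{EGNO,ENO}) or from $\mcZ(\mcACB)\simeq\mcZ(\mcF)$ (Proposition~\ref{prp:multifusion_center}, made explicit in \eqref{eq:ZCAB_to_ZF_funct}) together with a multifusion version of Proposition~\ref{prp:ZA_eq_ZB_means_Morita_eq} (\cite{Mu1,ENO2}).

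Next I would obtain \emph{iv)} as the restriction of \emph{ii)} to the Frobenius sub-bicategories. The key observation is that $R$ preserves the extra structure: if $C$ is a symmetric $\D$-separable Frobenius algebra, then $\opid_\mcF\,{{}_A \otimes_B}\,C\,{{}_A \otimes_B}\,\opid_\mcF$ inherits its (co)unit, (co)multiplication, the Frobenius relation and $\D$-separability by cutting down with the idempotent $\opid_\mcF$, and it is symmetric because the pivotal structure on $\mcF$ is the one induced from $\mcACB$. Hence $R$ restricts to an equivalence $\operatorname{FrobAlg}^{\operatorname{s},\D}_\mcACB\to\operatorname{FrobAlg}^{\operatorname{s},\D}_\mcF$ whose quasi-inverse is \emph{iv)}. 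Since $R$ and its quasi-inverses are described by the same object-level formula before and after forgetting the Frobenius data, the right square commutes. Combining the two squares then yields the asserted diagram, and in particular the equivalence $\operatorname{FrobAlg}^{\operatorname{s},\D}_\mcACB\simeq\mcF\operatorname{-Mod}^{\tr}$ between the TQFT description (via Lemma~\ref{lem:ABalgs_are_CAB}) and the \cite{FSV} description.

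I expect the main obstacle to be the verification that \emph{ii)} is genuinely an equivalence and that it truly \emph{restricts} to an equivalence on the symmetric $\D$-separable Frobenius algebras, i.e.\ that every such algebra in $\mcACB$ is, up to the Morita equivalence, induced from one in $\mcF$ with matching bimodules. Concretely this means pinning down the corner/induction adjunction precisely enough to see that compression and induction are mutually quasi-inverse and that both transport the pivotal (hence symmetric-Frobenius) data correctly; because $\mcACB$ is not spherical (Remark~\ref{rem:ACB_not_spherical}) one must track left and right traces separately rather than appeal to sphericity.
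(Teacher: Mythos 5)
Your treatment of i), iii) and the left square agrees with the paper's proof. The gap is in ii) (and hence iv)): the two-sided compression $R\colon \operatorname{Alg}_\mcACB \to \operatorname{Alg}_\mcF$, $C \mapsto \opid_\mcF \,{{}_A \otimes_B}\, C \,{{}_A \otimes_B}\, \opid_\mcF$, is \emph{not} an equivalence of bicategories, so ii) and iv) cannot be defined as (restrictions of) its quasi-inverse. Concretely, write $A\otimes B \cong \bigoplus_j \opid_j$ with $\opid_i = \opid_\mcF$ as in Proposition~\ref{prp:multifusion_cats}. Each summand $\opid_j$ is a nonzero semisimple algebra in $\mcACB$ (in fact a symmetric $\D$-separable Frobenius algebra, cf.\ Appendix~\ref{app:Witttriv_ribbon_funct}), but for $j\neq i$ one has $\opid_\mcF \,{{}_A \otimes_B}\, \opid_j \,{{}_A \otimes_B}\, \opid_\mcF = 0$ by Proposition~\ref{prp:multifusion_cats}~i). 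So $R$ sends $\opid_j$ to the zero algebra (if the zero algebra is excluded from $\operatorname{Alg}_\mcF$, $R$ is not even well defined), and on Hom-categories it maps the nonzero category ${}_{\opid_j}(\mcACB)_{\opid_j}$ to ${}_{0}\mcF_{0}=0$; hence $R$ is not fully faithful. The ring-theoretic intuition ``$e$ full $\Rightarrow eRe$ Morita equivalent to $R$'' categorifies to the statement that $\mcACB$ and $\mcF$ have equivalent module \emph{bicategories}; it does not make the object-level compression of algebras a biequivalence. Indeed $R(C)$ is the internal End of $\opid_\mcF \,{{}_A \otimes_B}\, C$ in the $\mcF$-module category $\opid_\mcF \triangleright (\mcACB)_C$, which is the Morita-correct image of $C$ only when that object is a generator --- exactly what fails for $C=\opid_j$, where it is zero while the module category is not.

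The paper runs the argument in the opposite direction, which avoids this problem entirely: ii) is the \emph{inclusion} $\operatorname{Alg}_\mcF \to \operatorname{Alg}_\mcACB$, equipping $C\in\mcF$ with the unit $[\opid \twoheadrightarrow \opid_\mcF \xra{\eta} C]$. Full faithfulness is then automatic, because unitality of the two actions forces any bimodule in $\mcACB$ between included algebras to lie in the component $\mcF$; the one genuinely nontrivial input is essential surjectivity, namely that every simple algebra in an indecomposable multifusion category is Morita equivalent to one in a chosen component, for which the paper cites \cite[Rem.\,3.9]{KZ2}. Moreover the inclusion visibly preserves the symmetric $\D$-separable Frobenius structure, which gives iv) and strict commutativity of the right square --- whereas for the compression even this step is delicate (the compressed coproduct requires a middle projection, and left/right traces must be tracked, as you note). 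Your alternative route $\operatorname{Alg}_\mcACB \simeq \mcACB\operatorname{-Mod} \simeq \mcF\operatorname{-Mod} \simeq \operatorname{Alg}_\mcF$ does yield \emph{some} equivalence for the bottom row, but your construction of iv) and of the right square again invokes $R$, so the flaw propagates; repairing the proof along these lines leads you back to the inclusion functor.
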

\begin{proof}
We exploit various relations between bicategories introduced in Sections~\ref{subsec:module_cats} and \ref{subsec:Witt_eq}.

\smallskip
\noindent
Equivalence i) in \eqref{eq:bicategories_diagram} is given by the Proposition~\ref{prp:Alg_Mod_equiv}, and equivalence iii) follows from Proposition~\ref{prp:FrobalgCD_equiv_Modtr}. It is clear that the left square commutes (on the nose).

\smallskip
\noindent
Equivalence ii) is given by inclusion.
Indeed, for any indecomposable multifusion category $\mcA$ and its component category $\mcA_{ii}$, an algebra $A\in\mcA_{ii}$ is automatically an algebra in $\mcA$ by providing it with the unit $[\opid \twoheadrightarrow\opid_i\xra{\eta} A]$ where $\opid_i\in\mcA_{ii}\subset\mcA$ is the restriction of the tensor unit.
The inclusion is fully faithful, meaning that upon inclusion of two algebras $A,B\in\mcA_{ii}$, the corresponding bimodule categories ${}_A (\mcA_{ii})_B$ and ${}_A \mcA_B$ are equivalent.
Moreover, any simple algebra in $\mcA$ is Morita equivalent to one in $\mcA_{ii}$ (see e.g.\ \cite[Rem.\,3.9]{KZ2}), which implies the essential surjectivity of the inclusion.
It is easy to check that in case $\mcA$ is pivotal, the inclusion preserves the structure of a symmetric $\D$-separable Frobenius algebra, which then gives equivalence iv), as well as commutativity of the right square (again on the nose).
\end{proof}

\appendix
\section{Appendix: Witt trivialisation as ribbon functor}
\label{app:Witttriv_ribbon_funct}
Let $\mcC$ be a modular fusion category and $A,B\in\mcC$ two haploid commutative symmetric $\D$-separable Frobenius algebras.
We show here that the component category $\mcF := (\mcACB)_{ii}$ is spherical and that the functor \eqref{eq:ZCAB_to_ZF_funct} is ribbon.

\medskip

From Proposition~\ref{prp:multifusion_cats} we know that one has a decomposition $A\otimes B \cong \bigoplus_i F_i$ where $F_i\in\mcACB$ are simple and mutually non-isomorphic.
Each $F_i$ is canonically a symmetric $\D$-separable Frobenius algebra in $\mcACB$ and the decomposition of $A\otimes B$ is an isomorphism of algebras in $\mcACB$.
Objects of $\mcF$ can therefore be seen as objects of $\mcACB$ such that only the action of $F:=F_i$ is non-trivial.
$\mcF$, as a pivotal category, is hence equivalent to $F$-$F$-bimodules in $\mcACB$. By Lemma~\ref{lem:ABalgs_are_CAB}, $F$ is automatically a symmetric $\D$-separable Frobenius algebra over $(A,B)$ in $\mcC$ and $\mcF$ is in turn equivalent to $F$-$F$-bimodules over $(A,B)$ in $\mcC$.

\medskip

The key observation now is that $F$ is \textit{haploid} in $\mcC$, i.e.\ $\dim \mcC(\opid, F) = 1$.
Indeed, according to Proposition~\ref{prp:multifusion_cats} one has $\dim \mcACB(A\otimes B, F) = 1$ and one can check that the induction/forgetful functors
\begin{equation}
\begin{array}{rcl}
\Ind\colon\mcC & \ra     & \mcACB  \\
      X   & \mapsto & A \otimes X \otimes B 
\end{array}
\quad , \quad
\begin{array}{rcl}
U\colon\mcACB & \ra     & \mcC  \\
      M   & \mapsto & M 
\end{array}
\end{equation}
are biadjoint to each other.
For a morphism $[f\colon M\ra M]\in\mcF$ we now have:
\begin{align}
\nonumber
\tr_l f &\overset{(1)} =        \pic[1.25]{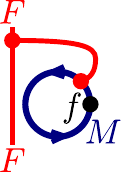}
\overset{(2)}= \frac{1}{\dim F} \pic[1.25]{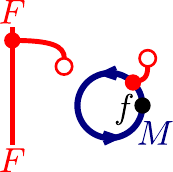}
= \frac{1}{\dim F}              \pic[1.25]{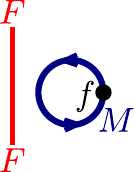}\\
&
\overset{(3)}= \frac{1}{\dim F} \pic[1.25]{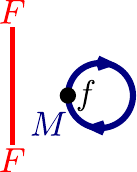}
=                               \pic[1.25]{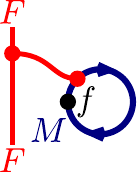}
= \tr_r f ~.
\end{align}
Here, in step (1) one applies the definition of a left trace \eqref{eq:left_right_traces} (as an endomorphism of the tensor unit, i.e.\ of $F$) to the category of $F$-$F$-bimodules over $(A,B)$.
For step (2) one notes that $\eta$, $(\dim F)^{-1}\vareps$
can be taken for the inclusion/projection morphisms $\opid \ra F$ and $F \ra \opid$, see the proof of Lemma~\ref{lem:dimA-nonzero}.
In (3) we use that $\mcC$ is ribbon, hence spherical.
$\mcF$ is therefore a spherical fusion category.

\medskip

It remains to show that the functor \eqref{eq:ZCAB_to_ZF_funct} is ribbon.
Having objects $(M,\gamma) \in \mcZ(\mcACB)$ and $N\in\mcACB$ we note that because of the decomposition $A\otimes B \cong \bigoplus_i F_i$ one can write the half-braiding $\gamma_N$ as a sum
\begin{equation}
\left[ \gamma_N\colon M {{}_A \otimes_B} N \ra N {{}_A \otimes_B} M \right] =
\sum_{i,j} \left[\gamma_N^{ij}\colon M \otimes_{F_i} N \ra N \otimes_{F_j} M \right] ~.
\end{equation}
In particular, if $N$ is in $\mcF$, the sum has a single term $\gamma_N^{ii}$, i.e.\ the half-braiding restricts to the component in $\mcF$.
Since $\gamma$ consists of isomorphisms, we see that $M$ has only diagonal components in the decomposition $\mcACB \simeq \bigoplus_{ij} F_i \, {_A \otimes_B} \, \mcACB \, {_A \otimes_B} \, F_j$.
Therefore, braidings and twists in $\mcZ(\mcF)$ are simply projections of braidings and twists in $\mcZ(\mcACB)$ onto the component $\mcF$ and the functor \eqref{eq:ZCAB_to_ZF_funct} is precisely this projection.

\newpage

\newcommand{\arxiv}[2]{\href{http://arXiv.org/abs/#1}{#2}}
\newcommand{\doi}[2]{\href{http://doi.org/#1}{#2}}

\end{document}